\begin{document}
%\linenumbers
\long\def\comment#1{}

%\markboth{M. Falaschi et al.}{Abstract Interpretation of Temporal Concurrent Constraint Programs}

\title[Abstract Interpretation of Temporal CCP]{Abstract Interpretation of Temporal Concurrent Constraint Programs \footnote{This paper has been accepted for publication in Theory and Practice of Logic Programming (TPLP), Cambridge University Press. }} 

\author[M. Falaschi, C. Olarte and C. Palamidessi]{MORENO FALASCHI\\
Dipartimento di Ingegneria dell'Informazione e Scienze Matematiche\\
 Universit\`a di Siena, Italy \\
E-mail: moreno.falaschi@unisi.it
\and 
CARLOS OLARTE \\
Departamento de Electr\'onica y Ciencias de la Computaci\'on\\
Pontificia Universidad Javeriana-Cali, Colombia\\
E-mail: carlosolarte@javerianacali.edu.co
\and
CATUSCIA PALAMIDESSI\\
INRIA and LIX\\
 Ecole Polytechnique, France\\
E-mail: catuscia@lix.polytechnique.fr
}

\submitted{19 June 2012}
\revised{15 May 2013}
\accepted{3 December 2013}

%\pagerange{\pageref{firstpage}--\pageref{lastpage}}
%\volume{\textbf{10} (3):}
%\jdate{June 2012}
%\setcounter{page}{1}
%\pubyear{2012}

\maketitle

%%%%%%%%%%%%%%%%
\begin{abstract}
%%%%%%%%%
% ABSTRACT
%%%%%%%%%

Timed Concurrent Constraint Programming (\tccp) is a declarative
model for concurrency offering a logic for specifying reactive
systems, i.e. systems that  continuously interact with the
environment. The universal \tccp\ formalism (\utcc)  is an extension
of \tccp\ with the ability to express mobility. Here mobility is
understood as communication of private names as typically done for
mobile  systems and security protocols. 
In this paper we consider the denotational semantics for \tccp, and
we extend it to a ``collecting"  semantics for \utcc\ based on closure
operators over sequences of constraints. Relying on this semantics,
we formalize a general framework for data flow analyses of \tccp\ and
\utcc\ programs by abstract  interpretation techniques. The concrete
and abstract semantics we propose are  compositional, thus  allowing
us to reduce the complexity of data  flow analyses. We show that our
method is sound and parametric with respect to the abstract domain. Thus,
different analyses can be performed by instantiating the framework.
We illustrate how it is possible to reuse abstract domains previously
defined for logic programming to perform, for instance,  a groundness
analysis for \tccp\ programs. We show the applicability of this
analysis in the context of  reactive systems. Furthermore, we  make
also use of the abstract semantics to exhibit a secrecy flaw in a
security protocol. We also show how it is possible
to make an analysis which may show that
\tccp\ programs are suspension free. This can be useful for several
purposes, such as for optimizing compilation or for
debugging. 
%We have developed a preliminary prototypical implementation of
%our methodology and we have  implemented the abstract domain for
%security  to perform automatically the secrecy analysis.
\end{abstract}
%%%%%%%%%%%%%%%%
\begin{keywords}
Timed Concurrent Constraint Programming,
Process Calculi, Abstract Interpretation, Denotational Semantics, Reactive Systems\end{keywords}
%\category{F.3.2}{Semantics of Programming Languages}{Program analysis, Denotational Semantics}
%\category{D.3.2}{Language Classifications}{Constraint and logic languages. Concurrent, distributed, and parallel languages.}
%\category{F.3.1}{Specifying and Verifying and Reasoning about Programs}{Logics of programs}

%%%%%%%%%%%%%%%%%
\section{Introduction}
%%%%%%%%%%%
% Introduction
%%%%%%%%%%%
Concurrent Constraint Programming
(\ccp)  \cite{SRP91,cp-book} has emerged as a simple but
powerful paradigm for concurrency tied to logic that  extends and subsumes both
concurrent logic programming \cite{shapiro90} and constraint logic programming
\cite{DBLP:conf/popl/JaffarL87}. The \ccp\ model combines the
traditional operational view of process calculi with a
\emph{declarative} one based upon logic. This combination allows \ccp\  to benefit
from the  large body  of reasoning techniques  of  both process calculi
and  logic.  In fact, \ccp-based calculi have successfully been used in the modeling and
verification of several concurrent scenarios such as biological, security,
timed, reactive and stochastic  systems 
\cite{SRP91,Olarte:08:SAC,NPV02,tcc-lics94,JagadeesanM05} (see a survey in \cite{DBLP:journals/constraints/OlarteRV13}). 

In the \ccp\ model,
agents interact by \emph{telling} and \emph{asking} pieces of
information (\emph{constraints}) on a shared store of partial information.
The type of constraints that  agents can tell and ask is  parametric in an underlying constraint system. This makes \ccp\ a flexible model able to adapt to different application domains.

The \ccp\ model has been extended to consider the execution of processes along  time intervals or time-units. In  \texttt{tccp} \cite{bgm99}, 
 the notion of time is identified with the time needed to ask and tell information to the  store. In this model, the information in the store is carried through the time-units. On the other hand, in Timed \ccp\ (\tccp) \cite{tcc-lics94},
 stores are not automatically transferred between time-units. This way,  computations during a time-unit proceed 
monotonically but outputs of two different time-units are 
not supposed to be related to each other.
 More precisely, computations in \tccp\ take place in bursts of activity at a rate controlled by the environment. In this model,  the environment provides a stimulus (input) in the form of  a constraint. Then the system, after a finite 
number of internal reductions,  outputs the final store (a constraint) and  waits for the next interaction 
with the environment.  This view of \emph{reactive computation} is 
akin to synchronous languages such as Esterel \cite{BeGo92}  where the system 
reacts continuously 
with the environment at a rate controlled by the environment.  
Hence, these languages allow to program safety critical applications as control systems, 
for which it is fundamental to provide tools aiming at helping  to 
develop correct, secure, and efficient programs.

Universal \tccp\  \cite{Olarte:08:SAC}  (\utcc), adds to \tccp\ the expressiveness needed for \emph{mobility}. Here we understand mobility as the ability to communicate private names (or variables) much like in the $\pi$-calculus \cite{milner.parrow.ea:calculus-mobile}. Roughly, a \tccp\ \emph{ask} process $\whenp{c}{P}$ executes the process $P$ only if the constraint  $c$ can be entailed from the store. This idea is generalized in \utcc\ by  a parametric ask that executes $P[\vec{t}/\vec{x}]$ when the constraint $c[\vec{t}/\vec{x}]$ is entailed from the store. Hence the variables in $\vec{x}$ act as formal parameters of the ask operator. This simple change allowed to widen  the spectrum of application of \ccp-based languages to scenarios such as verification of security protocols \cite{Olarte:08:SAC} and service oriented computing \cite{Lopez-Places09}.

%The extra expressiveness due to the parametric asks  posses also some 
%technical problems: Unlike \tccp,  the operational  semantics of \utcc\ 
%may induce infinitely many internal reductions thus never producing an 
%observable output. As a contribution of this paper, we show how the 
%complexity of this operator can be neatly bound for analysis 

Several domains and frameworks (e.g., \cite{CC92,armstrong98two,Codish99} )  
have been proposed  for the analysis of logic programs. The particular characteristics  of  timed \ccp\  programs  pose additional difficulties 
for the development of such tools in this language. Namely, 
 the concurrent, timed nature of the language, and the synchronization mechanisms based on  entailment of constraints (blocking asks). 
 Aiming at statically analyzing \utcc\ as well as \tccp\  programs, we have to  consider the additional technical issues due to  the infinite internal computations generated by parametric asks as we shall explain later.

We develop here a \emph{compositional} semantics for \tccp\ and \utcc\ that 
allows us to describe the behavior of programs and collects all concrete 
information needed to properly abstract the properties of interest. This 
semantics is  based on closure operators over sequences of constraints  
along the lines of \cite{tcc-lics94}.  
%The semantics proposed %is precise for \tccp\ and allows us to effectively  approximate the operational semantics of  \utcc\ and 
%is compositionally and allows us to describe the behavior of programs. 
We show that parametric asks in \utcc\ of the form  $\absp{\vx}{c}{P}$  can 
be neatly characterized as  closure operators. This characterization is 
shown to be somehow dual to the semantics for the local operator 
$\localp{\vx}{P}$ that restricts the variables in $\vx$ to be local to $P$. 
We  prove  the
 semantics  to be  fully 
 abstract w.r.t.  the operational semantics for a significant fragment of the  
 calculus.

We  also propose an abstract   semantics which approximates the  concrete one.  Our  framework is formalized by  abstract interpretation techniques and is  parametric w.r.t. 
 the abstract domain. It  allows us to exploit  the work done for developing abstract domains for 
 logic programs. Moreover, we can make new analyses for reactive 
 and mobile systems, thus widening the reasoning techniques   available for   \tccp\ and \utcc, such as 
type systems \cite{Lopez09}, logical characterizations \cite{MendlerPSS95,NPV02,Olarte:08:SAC} and semantics \cite{tcc-lics94,BoerPP95,NPV02}.

The abstraction we propose proceeds in two-levels. First, we approximate the  
constraint system leading to an abstract constraint system. We give the sufficient 
conditions which 
have to be satisfied for ensuring the soundness of the abstraction. Next, to obtain efficient analyses, we abstract the infinite sequences of (abstract) constraints obtained from the previous step. Our semantics is then computable and compositional. Thus, it allows us to  master  the complexity of the data-flow analyses. Moreover, 
the  abstraction  \emph{over-approximates} the concrete semantics, thus preserving safety properties.

To the best of our knowledge,  this is the first attempt to propose a compositional semantics and an abstract interpretation framework  for a language adhering to the above-mentioned characteristics of  \utcc.  Hence we can develop 
analyses for several applications of \utcc\ or its sub-calculus 
\tccp\  (see e.g., \cite{DBLP:journals/constraints/OlarteRV13}). In particular, we instantiate our framework in 
three different 
scenarios.  The first one presents 
an abstraction of  a cryptographic constraint system. We  use 
the abstract semantics to bound the number of messages that a spy may generate, in order  to exhibit  a secrecy flaw in a security protocol written 
in \utcc. The second one 
tailors an abstract domain for groundness and type dependency analysis in 
logic programming to perform a groundness analysis of a  \tccp\ program. This analysis is proven useful 
to derive a property of a control system specified
in \tccp.  Finally, we present an analysis that may show that a \tccp\ program is suspension free. This  analysis can be used later for optimizing compilation or for debugging purposes. 

% We have also developed a prototypical application of our framework and 
% implemented the abstract domain for the verification of secrecy properties. 
% The examples in Section \ref{sec:appsec} were  automatically verified with 
% this tool available at \url{http://www.lix.polytechnique.fr/~colarte/prototype/}.
% In this URL the reader can also find the complete outputs of these examples   
% as well as the application of the framework for the verification of another 
% protocol not described in this paper. 

%We believe that our results can also help to define analyses for 
%other languages for modeling reactive systems, e.g. Esterel \cite{BeGo92}, 
%and for other (mobile, temporal) languages based on the \ccp\ paradigm.
%(e.g. for 
%languages based on the $\pi$-calculus \cite{MilPi}). 
%See the discussion on 
%related work in Section \ref{sec:concluding}. \\

The ideas of this paper stem mainly from the works of the authors in \cite{BoerPP95,Falaschi:97:TCS,DBLP:journals/iandc/FalaschiGMP97,NPV02,Olarte:08:PPDP}  to give semantic characterization of \ccp\ calculi and from the works in  
\cite{FalaschiGMP93,CFM94,Falaschi:97:TCS,ZaffanellaGL97,FalaschiOPV07} to provide abstract interpretation frameworks to analyze concurrent logic-based languages. A preliminary short version of this paper without proofs was 
published in \cite{Falaschi:PPDP:09}. In this paper we give many more examples and explanations. We also refine several  technical details and present full proofs. Furthermore, we develop a new application 
for analyzing suspension-free \tccp\ programs. %See the discussion on  related work in Section \ref{sec:concluding}. 

The rest of the paper is organized as follows. Section \ref{sec:cc} recalls the notion of constraint 
system and the operational semantics of \tccp\ and  \utcc. 
In Section \ref{sec:denotsem}  we  develop  the denotational semantics based on 
sequences of constraints. Next, in 
Section \ref{sec:absframework}, we study the abstract interpretation framework 
for \tccp\ and \utcc\ programs.  The three instances and 
the applications of the framework 
are presented in Section \ref{sec:app}. Section \ref{sec:concluding} concludes. 

%%%%%%%%%%%%%%%%%

%%%%%%%%%%%%%%%%%%%%%%%%%%%%%%
% Preliminars
%%%%%%%%%%%%%%%%%%%%%%%%%%%%%%
\section{Preliminaries} \label{sec:cc}
%%%%%%%%%%%%
%preliminaries
%%%%%%%%%%%%
Process calculi based on the \ccp\ paradigm are parametric in a
\emph{constraint system} specifying the basic constraints  agents can tell and ask.  These constraints  represent
a piece of (partial) information  upon  which processes may act. The
constraint system hence provides a signature from which  constraints
can be built. Furthermore, the constraint system provides an
\emph{entailment} relation ($\entails$) specifying inter-dependencies
between constraints. Intuitively,  $c\entails d$  means that the
information $d$ can be deduced from the information represented by
$c$. For example, $x > 60 \entails x >42$. 

Here we consider an abstract definition of constraint systems as
cylindric algebras as in \cite{BoerPP95}. The notion of constraint system as first-order formulas  \cite{DBLP:conf/ccl/Smolka94,NPV02,Olarte:08:SAC}  can be
seen as an instance of this definition. All results of this paper
still hold, of course, when more concrete systems are considered. 

\begin{definition}[Constraint System] \label{def:cs}
A cylindric constraint system  is a structure 
 $
{\bf C} = \langle \cC,\leq,\sqcup,\true,\false,{\it Var}, \exists,
D \rangle
$ s.t. 
\\\noindent{-}
 $\langle \cC,\leq,\sqcup,\true,\false \rangle$ is a lattice
with $\sqcup$ the $\lub$ operation (representing the logical
\emph{and}), and $\true$, $\false$ the least and the greatest
elements in $\cC$ respectively (representing $\texttt{true}$ and
$\texttt{false}$). Elements in $\cC$ are called \emph{constraints}
with typical elements $c,c',d,d'...$. If $c\leq d$ and $d\leq c$ we write $c \equivC d$. If $c\leq d$ and $c\not\equivC d$, we write $c<d$. 
\\\noindent{-}${\it Var}$ is a denumerable set of variables and for each
$x\in {\it Var}$ the function $\exists x: \cC \to \cC$ is a
cylindrification operator satisfying:
		    (1) $\exists x (c) \leq c$. 
		(2) If $c\leq d$ then $\exists x (c) \leq \exists x (d)$.
		(3) $\exists x(c \sqcup \exists x (d)) \equivC \exists x(c) \sqcup \exists x(d)$.
		(4) $\exists x\exists y(c) \equivC \exists y\exists x (c)$.
		(5) For an increasing chain $c_1 < c_2 < c_3...$, $ \exists x
\bigsqcup_i c_i \equivC \bigsqcup_i \exists x ( c_i) $.
\\\noindent{-} For each $x,y \in {\it Var}$, the constraint $d_{xy} \in D$ is a
\emph{diagonal element} and it satisfies:
			(1) $d_{xx} \equivC \true$.
		(2) If $z$ is different from $x,y$ then $d_{xy} \equivC \exists z(d_{xz}
\sqcup d_{zy})$.
		(3) If $x$ is different from $y$ then $c \leq d_{xy} \sqcup
\exists x(c\sqcup d_{xy})$.
\end{definition}
The cylindrification operators model a sort of existential
quantification, helpful for hiding information. 
We shall use $\fv(c)=\{ x\in Var \ |  \ \exists x(c) \not\equivC c \}$ to denote the set of free variables that occur in $c$. If $x$ occurs in $c$  and $x\not\in\fv(c)$, we say that $x$ is bound in $c$. We use $\bv(c)$ to denote the set of bound variables in $c$.

 Properties (1) to (4) are standard. Property (5) is shown to be required in \cite{BoerPP95} to establish the semantic adequacy of  \ccp\ languages when infinite computations are considered. Here,   the   continuity of the semantic operator 
 in Section \ref{sec:denotsem} relies on the continuity of $\exists$ (see Proposition \ref{prop:cont-td}). Below we give some examples on the requirements to satisfy this property in the context of different constraint systems. 
 
 %Here
% this property allows us to   prove the continuity of the semantic operator we introduce in Section \ref{sec:denotsem} (see Proposition \ref{prop:cont-td}). 
% to guarantee that  the cylindrification operator is continuous. This is required when considering infinite computations. Particularly, this property is required to 

 The diagonal element $d_{xy}$ can be thought of as the equality $x=y$. Properties (1) to (3) are standard and  they allow us to define  substitutions of the form $[t/x]$  required, for instance, to represent the substitution of formal and actual parameters in procedure call. We shall give a formal definition of them in Notation \ref{not:terms}. 
% follow the intuition that $x=x$ is the same as $\true$, if $x=y$ then there exists $z$ s.t. $x=z$ and $z=y$ and finally, 
% $c[y/x]\sqcup d_{xy}$ must entail $c$. 

  Let us give some examples of constraint systems.  The finite domain constraint system  (FD) \cite{HentenryckSD98} assumes variables to range over finite domains and, in addition to equality, one may have predicates  that restrict the possible values of a variable to some finite set, for instance $x<42$. 

The Herbrand constraint system  $\cH$ consists of a first-order language with equality. 
The entailment relation is the one we expect from equality, for instance, $f(x,y) = f(g(a),z)$ must entail $x=g(a)$ and $y=z$. 
 $\cH$ may contain non-compact elements to represent the limit of infinite chains. To see this, let $s$ be the successor constructor,  $ \exists y (x = s(s^n(y)))$ be denoted as the constraint  $\gtc{x}{n}$ (i.e., $x>n$) and  $\{ \gtc{x}{n}\}_n$ be the ascending chain  $\gtc{x}{0} < \gtc{x}{1} < \cdots$. We note that $\exists x (\gtc{x}{n})  =  \true$ for any $n$ and then,  $\bigsqcup  \{ \exists x (\gtc{x}{n}) \}_n = \true$. 
Property (5) in Definition \ref{def:cs} dictates that 
$\exists x \bigsqcup \{\gtc{x}{n}\}_n$ must be equal to $\true$ (i.e., there exists an $x$ which is greater than any $n$). For that,  we need a constraint, e.g.,  $\infc{x}$ (a non-compact element), to be the limit $\bigsqcup \{\gtc{x}{n}\}_n$. We know that    $\infc{x} \entails \gtc{x}{n}$ for any $n$
and then,  $\bigsqcup \{\gtc{x}{n}\}_n=\infc{x}$ and $\exists x( \infc{x}) = \true$ as wanted.
% Furthermore,  $\bigsqcup \{\gtc{x}{n}\}_n=\false$
%and $\exists x (\false) = \false$. Hence, 
%Property (5) in Definition \ref{def:cs} does not hold
%and $\cH$ cannot be a cylindric constraint system. 
%This problem can be circumvented  by adding to $\cH$ a  non-compact element that represents the limit of the chain $\{\gtc{x}{n}\}_n$. For instance, let $\infc{x}$ be a constraint such that $\infc{x} \entails \gtc{x}{n}$ for any $n$. Therefore,  $\bigsqcup \{\gtc{x}{n}\}_n=\infc{x}$ and $\exists x( \infc{x}) = \true$ as we wanted. 
A similar phenomenon arises  in the definition of constraint system   as Scott information systems in \cite{SRP91}. There,  constraints are represented as finite subsets of \emph{tokens} (elementary constraints) built from a given set $D$. The entailment  is similar to that  in Definition \ref{def:cs} but   restricted to compact elements, i.e.,  a constraint can be entailed only from a finite set of elementary constraints. Moreover, $\exists$   is extended to  be a continuous  function,  thus satisfying  Property (5) in Definition \ref{def:cs}. Hence, the  Herbrand constraint system in  \cite{SRP91}  considers also a non-compact element (different from $\false$)  to be the   limit of the chain $\{\gtc{x}{n}\}_n$.

Now consider the Kahn constraint system underlying data-flow languages where equality is assumed along with the constant $\nilp$ (the empty list), the predicate $\nempty{x}$ ($x$ is not  $\nilp$),  and the functions $\first{x}$ (the first element of $x$), $\tail{x}$ ($x$ without its first element) and $\append{x}{y}$ (the concatenation of $x$ and $y$). If we consider the Kahn constraint system in \cite{SRP91},  the constraint $c$ defined as $\{\first{\tailn{n}{x}} = \first{\tailn{n}{y}} \mid n\geq 0\}$ does not entail   $\{x=y\}$   since the entailment relation is defined only on compact elements. In Definition \ref{def:cs}, we are free to decide if $c$ is different or not from $x=y$. If we equate them, the constraint $x=y$ is not longer a compact element and then, one has to be careful to only use a compact version of ``$=$'' in programs (see Definition \ref{tcc:syntax}). A similar situation occurs with the  Rational Interval Constraint System \cite{SRP91} and the constraints  $ \{ x \in [0, 1 + 1/n]  \mid n \geq 0 \}$  and  $x \in [0,1]$. 

All in all many different constraint systems satisfy Definition \ref{def:cs}. Nevertheless, one has to be careful since the constraint systems might not be the same as what is naively expected due to the presence of non-compact elements.

We conclude this section by setting some notation and conventions
about terms, sequences of constraints, substitutions and diagonal
elements. We first lift the relation $\leq$ and the cylindrification
operator to sequences of constraints.

\begin{notation}[Sequences of Constraints]\label{not:seq-constraints}
We denote  by  $\cC^\omega$ (resp. $\cC^*)$ the set of infinite (resp. finite) sequences of constraints with typical elements $w,w',s,s',...$. We
use $W,W',S,S'$ to range over subsets of $\cC^\omega$ or $\cC^*$. 
We use $c^\omega$ to denote the sequence $c.c.c...$.  The length of
$s$ is denoted by $|s|$ and the  empty sequence by $\epsilon$. The
$i$-th element in $s$ is denoted by $s(i)$. 
 We write $s\leq s'$ iff $|s| \leq |s'|$ and for all $i \in \{1,
\ldots, |s|\}$, $s'(i) \entails s(i)$. If  $|s| = |s'|$ and for all
$i \in \{1,..., |s|\}$ it holds $s(i) \equivC s'(i)$, we shall write
$s \equivC s'$. Given a sequence of variables $\vx$, with  $\exists \vec{x}(c)$ we mean  $\exists x_1\exists x_2
...\exists x_n (c) $ and 
 with $\exists \vx(s)$ we mean the pointwise application of the
cylindrification operator to the constraints in $s$. 
 \end{notation}
 
We shall assume that 
the diagonal element $d_{xy}$
is  interpreted as the equality $x=y$. Furthermore, 
following \cite{DBLP:journals/jlp/GiacobazziDL95}, we extend the use of $d_{xy}$ to consider terms as in $d_{xt}$. More precisely,
%
%the constraint system
%models the equality theory, i.e., we assume a predicate symbol ``$=$'' with the usual meaning. Hence,  

\begin{convention}[Diagonal elements]\label{conv:diag}
We assume that the constraint system under consideration contains an
equality theory. Then, diagonal elements $d_{xy}$ can be 
thought of as formulas of the form $x=y$.  
We shall use indistinguishably both notations.  Given a variable $x$ and
a term $t$ (i.e., a variable, constant or $n$-place function of $n$ terms symbol), we shall use $d_{xt}$ to denote the equality $x=t$.
Similarly, given a sequence of distinct variables $\vec{x}$ and a sequence of terms  $\vec{t}$, if $|\vec{x}|=|\vec{t}|=n$ then   $\dxt $  
denotes the constraint $\bigsqcup\limits_{1\leq i \leq n} x_i = t_i$.
If $|\vec{x}|=|\vec{t}|=0$ then $\dxt  = \true$. Given a set of diagonal elements  $E$, we shall write $E\DEQ \dxt$ whenever $d_i \entails  \dxt$ for some $d_i \in E$. Otherwise, we write $E \DNEQ \dxt$.
\end{convention}
 
Finally,  we set the notation for substitutions.
% that we shall use in the forthcoming sections with the aim of the , we set  allows to define substitutions as follows. 

\begin{notation}[Admissible substitutions] \label{not:terms}
%We denote by  $\cT$ the set of terms in the constraint system (a constant 
%symbol, a variable symbol, or an $n$-place function of $n$ terms). 
%We shall use $\vec{t}$ to denote a sequence of terms (variables, constant 
%symbols, $n$-place function of $n$ terms)  $t_1,\ldots,t_n$ with 
%length $\vec{t} |=n.$  If $|\vec{t} |= 0$ then  $\vec{t}$ is written 
%as $\epsilon$. 
Let $\vx$ be a sequence of pairwise distinct variables and $\vt$ be a sequence of terms s.t. $|\vec{t}|=|\vec{x}|$. 
We denote by  $c\sxt $ the constraint $\exists \vx(c \sqcup \dxt )$
which represents abstractly the constraint obtained from $c$ by
replacing the variables $\vec{x}$ by $\vec{t}$.
%We shall use $\doteq$ to denote syntactic term equivalence (e.g.,
%$x\doteq x$ and $x \not\doteq y$). Given $\vec{x}$ and $\vec{t}$ s.t.
%$|\vec{x}| = |\vec{t}|=n$, $\vec{x} \not\doteq \vec{t}$ holds if
%there exists $1\leq i \leq n$ s.t., $x_i\not\doteq t_i$. If
%$|\vec{x}|=0$, $\vec{x} \not\doteq \vec{t}$ is defined as $\false$. 
We say that $\vec{t}$ is admissible for $\vec{x}$, notation
$\adm{\vec{x}}{\vec{t}}$, if the variables in $\vt$ are different from those in $\vx$. 
%if $|\vec{x}|=|\vec{t}|$ and $\vx$ and $\vt$ are disjoint sets of terms, i.e., $x_i \not\doteq t_j$ for all $i,j \in \{1,...,|\vec{x}|\}$. 
If $|\vec{x}| =|\vec{t}| =0$ then trivially
$\adm{\vec{x}}{\vec{t}}$.
Similarly, we say that the substitution $\sxt $ is admissible iff
$\adm{\vec{x}}{\vec{t}}$. 
Given an admissible
substitution $\sxt $, from   Property (3) of diagonal elements in Definition \ref{def:cs},  we note that   $c\sxt  \sqcup \dxt  \entails c$. 
\end{notation}

%Finally we point out the relation between diagonal elements and substitutions. 
%\begin{observation}[Diagonal elements and substitutions]
%Given a constraint $c$,  $\vec{x}\subseteq \fv(c)$ and an admissible 
%substitution $\sxt $, we have  $c\sxt  \equiv \exists \vec{x} (c \sqcup \dxt )$
%\end{observation}
\subsection{Reactive Systems and  Timed CCP}
\label{sec:tcc-definition}
Reactive systems \cite{BeGo92} are those that react continuously with
their environment at a  rate controlled by the environment. 
For example, a controller or a  signal-processing system, receives a
stimulus (input) from the environment. It computes an output  and
then, waits for the next interaction with the environment. 

In the \ccp\ model,  the shared store of constraints grows
monotonically, i.e., agents cannot drop information (constraints)
from it. Then, a system that changes the state of a  variable as in  ``${\it signal=on}$" and ``${\it signal=off}" $ leads to an inconsistent store. 

Timed \ccp\   (\tccp) \cite{tcc-lics94} extends \ccp\  for
reactive  systems. Time is conceptually divided into \emph{time
intervals }(or \emph{time-units}). In a particular time
interval, a \ccp\ process $P$  gets an input  $c$
from the environment, it executes with this input as the initial
\emph{store}, and when it reaches
its resting point, it \emph{outputs} the resulting store $d$ to the
environment. The resting point determines also a residual process $Q$
which is then executed in the next time-unit. The resulting store $d$
is not automatically transferred to the next time-unit.
This way, computations during a time-unit proceed monotonically but
outputs of two different time-units are not supposed to be related to
each other. Therefore, the variable ${\it signal}$ in the example
above may change its value when passing from one 
time-unit to the next one.  
%Following the notation in  \cite{NPV02}, \tccp\ processes are built as follows:
\begin{definition}[\tccp\  Processes]\label{tcc:syntax} The set
$Proc$ of \tccp\ processes is built from the syntax
\[
\begin{array}{lll}
P,Q &:=&  \skipp  \ \ |\  \  \tellp{c} \ \ |\  \whenp{c}{P} \ \ |\  \
P\parallel
Q \ \ |\  \  \localp{\vx }{P} \ \ |\  \\ 
  & &  \nextp{P}  \ \ |\  \  \unlessp{c}{P}     
  %\ \ |\  \  \bangp{P} 
  \ \ | \ \ p(\vt)
   \end{array}
\]
where $c$ is a compact element of the underlying constraint system.
Let $\cD$ be a set of process declarations of the form
$p(\vx)\defsymbol P$. A \tccp\ program takes the form $\cD.P$.  We assume $\cD$ to have a unique process definition for every process name, and recursive calls to be guarded by a ${\bf next}$ process. 
\end{definition}

The process $\skipp$ does nothing thus representing inaction. The
process  $\tellp{c}$ adds $c$ to the store in the current time
interval making it available to the other processes. 
The process $\whenp{c}{P}$ \emph{asks} if $c$ can be deduced from the
store. If so, it behaves as $P$. In other case, it remains blocked
until the store contains at least as much information as $c$.
 The parallel composition of $P$ and $Q$  is denoted by
$\parp{P}{Q}$. Given a  set of indexes $I=\{1,...,n\}$, we shall use
$\prod\limits_{i\in I} P_i$ to denote the parallel composition $P_{1}
\parallel ... \parallel P_{n}$. 
The process  \(\localp{\vx}{P} \)  \emph{binds}  $\vec{x}$ in
$P$ by declaring it private to $P$.  It behaves like $P$, except that
all the information on  the variables $\vec{x}$
produced by $P$ can only be seen by $P$  and the information on the
global variables in $\vec{x}$  produced by other processes cannot be
seen by $P$. %The local information on $\vec{x}$ produced by $P$ corresponds to the constraint $c$ representing a \emph{local store}. When $c\equivC \true$, we shall simply write  \( \localp{\vec{x}}{P} \) instead of  \( \localp{\vec{x};\true}{P} .\)  

The process  \( \nextp{P} \) is a \emph{unit-delay} that executes $P$
in the next time-unit.  The \emph{time-out} \( \unlessp{c}{P} \) is
also a unit-delay, but  \( P \)  is executed in the next time-unit if
and only if  \( c \) is not entailed by the final store at the
current time interval. We use \( \mathbf{next}^{n}P \) as a shorthand
for
 \( \mathbf{next} \dots \mathbf{next}\, P \), 
with  \(\mathbf{next}\) repeated $n$ times.  
%Finally, the \emph{replication} \( !\,P \) 
%means  \( P\parallel \mathbf{next}\,P\parallel \mathbf{next}^{2}P\parallel 
%\dots\), i.e., unboundedly many copies of \( P \) but one at a time.

We extend the definition of free variables to processes as follows:   $\fv(\skipp)=\emptyset$; $\fv(\tellp{c}) = \fv(c)$; $\fv(\whenp{c}{Q}) =  \fv(c) \cup \fv(Q)$; $\fv(\unlessp{c}{Q})= \fv(c) \cup \fv(Q)$; $\fv(Q \parallel Q')=\fv(Q) \cup \fv(Q')$; $\fv(\localp{\vx}{Q})=\fv(Q)\setminus \vx$; $\fv(\nextp{Q})=\fv(Q)$;  $\fv(p(\vt)) = vars(\vt)$ where $vars(\vt)$ is the set of variables occurring in $\vt$. 
A variable $x$ is bound in $P$ if $x$ occurs in $P$ and $x\notin \fv(P)$. We use $\bv(P)$ to denote the set of bound variables in $P$.

Assume a (recursive) process definition 
$
\ \ p(\vx) \defsymbol P \ \ 
$ where  $ \fv(P) \subseteq \vx $. The call $p(\vt)$ reduces to $P\sxt$.
Recursive calls in $P$ are assumed to be guarded by a 
$\nextp$ process to avoid non-terminating sequences of recursive
calls during a time-unit  (see \cite{tcc-lics94,NPV02}). 

%Programs in \tccp\ are then defined from a set of process definitions
%and an initial agent (process) as follows. 
%\begin{remark}[Replication]
%Notice that  $\bangp P$ can be encoded by using recursion. 
%We keep this (redundant) operator since it eases the notation in some programs. 
%\end{remark}

%\begin{figure}[t]
%$
%\begin{array}{ll}
%\hline\\
% {\it micCtrl}(Error, Signal)  \defsymbol \\
%\tabp  \localp{Error', Signal',er,sl}{(}\\
%\tabp\tabp  \bangp \tellp{Error = [er \ | \ Error'] \sqcup  Signal = [sl \ |
%\ Signal']} \\
%\tabp\tabp  \parallel \whenp{\texttt{on} \sqcup \texttt{open}}
%{\bangp\tellp{er= yes \sqcup Error' = [] \sqcup sl = {stop } }} \\
%\tabp\tabp  \parallel \whenp{\texttt{off} }{(\bangp\tellp{er=no} \parallel
%\nextp{\it micCtrl(Error',Signal')}}) \\
%\tabp\tabp  \parallel \whenp{\texttt{closed}}{(\bangp\tellp{er=no} \parallel
%\nextp{\it micCtrl(Error',Signal')}})) \\\\
% \hline
%\end{array}
%$
%\caption{Model for a microwave controller (see Notation 
%\ref{rem:bang} for the definition of $\bangp$).  \label{fig:micro} }
%\end{figure}

%The set of bound variables of $P$,  to denote the set of
%\emph{bound} (resp. \emph{free}) variables  occurring in $Q$. 
%The following example, taken and slightly modified from \cite{FalaschiV06}, presents a  \tccp\ program modeling the
%behavior of  a control system. In this example, we shall use the idiom $\bangp{P}$ defined as follows:

In the forthcoming sections  we shall use the idiom $\bangp{P}$ defined as follows:
\begin{notation}[Replication]\label{rem:bang}
The replication of $P$, denoted as 
$\bangp{P}$,  is a short hand for a call to a process definition  $\texttt{bang}_P() \defsymbol P \parallel \nextp{\texttt{bang}_P()}$. 
Hence, $\bangp{P}$ means    $P \parallel \nextp{P} \parallel \nextp{^2 P} ...$. 
\end{notation}

%\begin{example}[Control System]\label{ex:control}
%Assume a  control system for a microwave checking  that the door must
%be closed when it is turned on. Otherwise, it must emit an error
%signal.  The specification of this system  is depicted in Figure
%\ref{fig:micro}.
%
%
%In this \tccp\ program, constraints of the form $X = [h | X']$ 
%assert that  $X$ is a list with head $h$ and tail $X'$. The  process $micCtrl$  binds $Error$ to a list ended by ``$yes$"
%when the microwave was turned on and the door was open at the same
%interval of time. Furthermore, the constant $stop$ is added into the
%list $Signal$ signaling the environment that the microwave must be powered off. In this example  the environment comprises the devices required to sense the current state of  the system, thus providing  as input to the control system the right  information about the door ($\texttt{open}$/$\texttt{closed}$) and the state of the microwave ($\texttt{on}$/$\texttt{off}$). 
%\end{example}

%Later on, in Section \ref{app:control}, we shall show how the
%abstract interpretation framework developed here allows for the
%verification of this system. 

\subsection{Mobile behavior and \utcc}\label{sec:mobility}
As we have shown, interaction of \tccp\ processes is asynchronous as
communication takes place through the shared store of partial
information. Similar to other formalisms, by defining local (or
private) variables, \tccp\ processes specify boundaries in the
interface they offer to interact with each other. Once these
interfaces are established, there are few mechanisms to modify them.
This is not the case e.g., in the $\pi$-calculus \cite{milner.parrow.ea:calculus-mobile} where
processes can change their communication patterns by exchanging their
private names.  The following example illustrates the limitation of  $ask$ processes to communicate values and local variables.

\begin{example}
Let  $\outp(\cdot)$ be a constraint and let  $P=\whenp{\outp(x)}{R}$
be 
 a system that must react when receiving a stimulus (i.e., an input) of the form
$\outp(n)$ for $n>0$.   We notice that   $P$ in a store $\outp(42)$ 
does not execute $R$ since   $\outp(42)\notentails \outp(x)$. 
\end{example}

The key point in the previous example  is that $x$ is a free-variable and
hence,  it  does not act as a formal parameter (or place holder) for every
term $t$  such that $\outp(t)$ is entailed by the store.

In  \cite{Olarte:08:SAC}, \tccp\ is extended for \emph{mobile
reactive} systems leading to  \emph{universal timed} \ccp\   (\utcc).
To model mobile behavior,  \utcc\  replaces the  ask operation
$\whenp{c}{P}$ with a   parametric ask construction,
namely $\absp{\vec{x}}{c}{P}$.  This process can be viewed as a
$\lambda$-\emph{abstraction} of the process $P$ on the variables
$\vec{x}$ under the constraint (or with the \emph{guard}) $c$.
Intuitively, for all admissible substitution $\sxt$ s.t. the current store entails $c\sxt$, the process  $\absp{\vec{x}}{c}{P} $ performs $P\sxt $. For example, $\absp{x}{\outp(x)}{R}$   in a store entailing both $\outp(z)$ and $\outp(42)$  executes $R[42/x]$ and $R[z/x]$. 
 
\begin{definition}[\utcc\  Processes and Programs]\label{utcc:syntax}
 The \utcc\ processes and programs result from replacing in  
Definition \ref{tcc:syntax} the expression $\whenp{c}{P}$ with 
$\absp{\vec{x}}{c}{P}$ where the variables in $\vec{x}$ are pairwise distinct. 
\end{definition}

  When
$|\vec{x}|=0$ we  write $\whenp{c}{P}$ instead of
$\absp{\epsilon}{c}{P}$.  
Furthermore, the process $\absp{\vec{x}}{c}{P} $ binds $\vec{x}$ in $P$ and $c$. We thus extend accordingly the sets $\fv(\cdot)$ and $\bv(\cdot)$ of free and bound variables.  
%Furthermore $Q$ evolves into $\skipp$ at
%the end of the time-unit, i.e. abstractions are not persistent when
%passing from one time-unit to the next one.   

 From a programming point of view,  we can see the variables
$\vec{x}$ in the abstraction $\absp{\vec{x}}{c}{P}$ as the formal
parameters of    $P$. In fact, 
the \utcc\ calculus was introduced in  \cite{Olarte:08:SAC}  
with replication ($\bangp{P}$) and 
without process definitions since replication and abstractions are enough to encode recursion. 
%in \utcc\ are enough to encode recursion. 
%Roughly, constraints of the form  $call_p(\cdot)$ of arity
%$|\vec{y}|$ are added to the underlying constraint system 
%and process definitions are defined as:
%\[
%\ulcorner p(\vec{y}) \defsymbol P \urcorner =
%\bangp\absp{\vec{y}}{{\it call_p}(\vec{y})}{\widehat{P}}
%\]
% where $\widehat{P}$ is the process obtained by replacing in $P$ any
%call $p(\vec{x})$ by $\tellp{{\it call_p}(\vec{x})}$. Analogously,
%the call $p(\vec{x})$ in all its other occurrences is replaced by
%$\tellp{{\it call_p}(\vec{x})}$. 
 Here we add process definitions to properly deal with  \tccp\ programs with
recursion which are more expressive than those without it (see
\cite{DBLP:conf/ppdp/NielsenPV02}) and we omit replication to avoid
redundancy in the set of operators (see Notation \ref{rem:bang}). 
We thus could have  dispensed with the next-guarded restriction in Definition \ref{tcc:syntax} for  \utcc\ programs. Nevertheless,  in order to give a unified presentation of the forthcoming results, we assume that  \utcc\ programs adhere also to that restriction.
% . In \utcc, recursive definitions do not add any expressiveness 
% (assuming replication) since they can be encoded by using 
% abstractions
%\end{remark}

We conclude  with an example of mobile behavior where  a process $P$ sends a local variable to  $Q$. Then,  both
processes can communicate through the shared variable. 

\begin{example}[Scope extrusion]\label{ex:mobility}
Assume two components  $P$ and $Q$ of a system such that $P$ creates
a local variable that must be shared with $Q$. This system can be modeled
as
\[
\begin{array}{lll l lll}
P & = & \localp{x}{(\tellp{\outp(x)} \parallel P')}  & \quad & 
Q & = &  \absp{z}{\outp(z)}{Q'}
\end{array}
\]
We shall show later that the parallel composition of
$P$ and $Q$ evolves to a process of the form
$
P' \parallel Q'[x/z]
$ where $P'$ and $Q'$ share the local variable $x$ created by $P$.
Then, any information produced by $P'$ on $x$ can be seen by $Q'$ and
vice versa. 
\end{example}

%%%%%%%%%%%%%%%%%%%%%%%%%%%%%%%%%%%%%%%%%%%%%%%%%
%%% OPERATIONAL SEMANTICS OF UTCC
%%%%%%%%%%%%%%%%%%%%%%%%%%%%%%%%%%%%%%%%%%%%%%%%%
\subsection{Operational Semantics (SOS)}\label{sec:opersem}
%%%%%%%%%%%%%
% Operational Semantics
%%%%%%%%%%%%%
We take inspiration on the structural operational semantics (SOS)  for {\em linear} \ccp\ in \cite{DBLP:journals/iandc/FagesRS01,DBLP:conf/fsttcs/HaemmerleFS07} to 
define the behavior of processes.  We consider \emph{transitions}  between \emph{configurations} of the form  \( \mconf{\vx;P ; c}  \) 
where $c$ is a constraint representing the current store, $P$ a process and $\vx$ is a set of distinct variables representing the  bound
(local) variables of $c$ and $P$.
  We shall use $\gamma,\gamma',\ldots$ to range over configurations.  Processes 
 are  quotiented by $\equiv$ defined as follows. 
\begin{definition}[Structural Congruence]\label{struct} Let \( \equiv  \)
be the smallest congruence  satisfying:  
(1)  \( P \equiv Q \) if they differ only by a renaming of bound variables (alpha-conversion); (2)  \( P \parallel \skipp \equiv P \); (3)  \( P \parallel Q \equiv Q \parallel P \); and  (4)  \( P \parallel (Q \parallel R) \equiv (P \parallel Q) \parallel R \).
%\item[(5)]  \( P\parallel \localp{\vx;c}{Q} \equiv \localp{\vx;c}{(P \parallel Q)} \) if $\vx \not\in \fv(P)$ (Scope Extrusion)
%\item[(6)] $\localp{\vx;c}{}\localp{\vec{y};d}{P} \equiv 
%	\localp{\vx\cdot\vec{y}\  ; c\sqcup d}{P}$ \ \ if $\vx\cap\vec{y}=\emptyset $ and $\vec{y} \notin \fv(c)$. 
\end{definition}
%Recall that $c\equivC d$ iff $c\entails d$ and $d\entails c$. 
 The congruence relation $\equiv$ is extended to configurations by decreeing that $\mconf{\vx;P;c} \equiv \mconf{\vy;Q;d}$ iff 
  $\localp{\vx}{P} \equiv \localp{\vy}{Q}$ and $\exists \vx(c) \equivC \exists \vy(d)$. 
%\vspace{-0.7cm}
%%%%%%% TABLE: OPERATIONAL SEMANTICS %%%%%%%%%%%%%%%
\begin{figure}
\resizebox{\textwidth}{!}{
$
\begin{array}{ccc}
 \rightinfer[\rTell]{\mconf{\vx;
\tellp{c};d} \redi_{{}} \mconf{\vx;
\skipp;d \sqcup c}}{ } & \quad &
\rightinfer[\rPar]{\mconf{\vx; P\parallel
Q;c} \redi_{{}}  \mconf{\vx \cup \vy;P'\parallel
Q;d}}{ \mconf{\vx;P;c} \redi_{{}}
\mconf{\vx\cup \vy; P';d} \mbox{ , } \vy\cap \fv(Q)=\emptyset}\\\\
\multicolumn{3}{c}{
\rightinfer[\rLocal]{\mconf{\vx; \localp{\vy}{P};d } \redi_{{}} \mconf{\vx \cup \vy; P;d}}
{ \vy \cap \vx = \emptyset,  \vy\cap \fv(d) = \emptyset }}\\\\
\multicolumn{3}{c}{
\rightinfer[\rAbs]
{
\mconf{\vx;\absp{\vy}{c;E}{P} ; d}
\redi  \mconf{\vx; P[\vt/\vy] \parallel \absp{\vy}{c; E\cup \{\dyt\}  }{P}; d}
} 
{   
d \entails c\syt, \adm{\vy}{\vt}, \ 
\mbox{and }  E  \DNEQ \dyt }}\\\\
\multicolumn{3}{c}{
\rightinfer[\rStructVar]{\mconf{\vy;P;c} \redi \mconf{\vy\cup\bigcup \vx_i;P;c_1\sqcup ... \sqcup c_n}}
{\nf(c)=\exists \vx_1 c_1 \sqcup \cdots \sqcup \exists \vx_n c_n   \ \ \ \ \   \vy \cap \vx_i = \emptyset\ \ \mbox{forall } i\in 1..n}
} 
\\\\

\multicolumn{3}{c}{
\rightinfer[\rStruct]{\mconf{\vx;P;c} \redi \mconf{\vx';P';c'}}
{\mconf{\vx;Q;c} \redi \mconf{\vy;Q';c''}} \qquad\mbox{if }
P \equiv Q \mbox{ and } \mconf{\vx';P';c'} \equiv \mconf{\vy;Q';c''} }
\\\\

\rightinfer[\rCall]{\mconf{\vx;
p(\vt);d} \redi \mconf{\vx;
P\sxt;d}}{p(\vx)\defsymbol P \in \mathcal{D} 
\ \ \ \ adm(\vx,\vt)} & \quad & 
\rightinfer[\rUnless]{\mconf{\vx;\unlessp{c}{P};d}\redi_{{}} 
\mconf{\vx; \skipp;d} }{d\entails c}\\\\
%%%%%%%%%
% Obs
\multicolumn{3}{l}{\mbox{ Observable  Transition}}\\
\multicolumn{3}{c}{
 \rObserv \ \frac{\raisebox{.1cm}{$\mconf{\emptyset;
P;c} \redi_{{}}^{*}  \mconf{\vx; Q;d}
\not\redi$}}
{\raisebox{-.2cm}{$P\rede{(c,\exists \vx (d))} \localp{\vx}{F(Q)}$}} \ \mbox{\ \  where \ \ }

{F}(P)=\left\{ 
\begin{array}{ll} 
     F(\skipp) = F(\absp{\vx}{c;D}{Q}) = \skipp\\
     F(P_1 \parallel P_2) = F(P_1) \parallel F(P_2)\\
     F(\nextp{Q}) = F(\unlessp{c}{Q}) = Q
\end{array} \right. }
\end{array}
$
}
\caption{\label{opersem} 
SOS.  In $\rStruct$, $\equiv$ is given in 
Definition \ref{struct}. 
In $\rAbs$ and $\rCall$,  $\adm{\vx}{\vt}$ is defined in Notation \ref{not:terms}. In $\rAbs$, $E$ is assumed to be a set of diagonal elements and $\DNEQ$ is defined in Convention \ref{conv:diag}. In $\rStructVar$, $\nf(d)$ is defined in Notation \ref{not:nf}. 
}
\end{figure}

%%%%%%%%%%%%%%%%%%%%%%%%%%%%%%%%%%%%%%%%
%\vspace{-0.7cm}

Transitions are given by the relations $\redi$ and $\Longrightarrow$  in Figure \ref{opersem}.   The \emph{internal} transition 
$\mconf{\vx;P;c} \redi \mconf{\vx';P';c'}$
  should be read as ``\( P \)  with store \( c \) reduces, in one internal step, to \( P' \)  with store \( c'\)\ ''. We shall use $\redi^*$ as the reflexive and transitive closure of $\redi$. If $\gamma \redi \gamma'$ and $\gamma' \equiv \gamma''$ we write $\gamma \redi\equiv \gamma''$. Similarly for $\redi^*$.

 The \emph{observable transition} {\small \( P\rede{(c,d)} R \)} should be read as ``\( P \)
on input \( c \), reduces
in one \emph{time-unit} to \( R \) and outputs \( d \)''. The observable 
transitions are obtained from finite sequences of internal ones. 

The rules in Figure \ref{opersem} 
are easily seen to realize the operational  intuitions given in Section \ref{sec:tcc-definition}. As clarified below, the seemingly missing rule  for a $\mathbf{next}$  process is given by $\rObserv$.  Before  explaining such rules, let us introduce the following notation needed for  $\rStructVar$.

\begin{notation}[Normal Form]\label{not:nf}
We observe that  the store $c$ in a configuration takes the form $\exists \vx_1(d_1) \sqcup ... \sqcup\exists \vx_n(d_n)$ where each $\vx_i$ may be an empty set of variables. The normal form of $c$, notation $\nf(c)$, is the constraint obtained by renaming the variables in $c$ such that for all $i,j \in 1..n$, if  $i\neq j$ then the variables in  $\vx_i$ do not occur neither bound nor free in $d_j$. It is easy to see that $c \equivC \nf(c)$. 
\end{notation}

\noindent{-} $\rTell$ says that the process $\tellp{c}$ adds $c$ to the current store $d$ (via the lub operator of the constraint system) and then evolves into $\skipp$.
\\\noindent{-}  $\rPar$ says that if $P$ may evolve into $P'$, this reduction also takes place when running in parallel with  $Q$. 
\\\noindent{-} The process $\localp{\vy}{Q} $ 
adds $\vy$ to the local variables of the configuration and then evolves into $Q$.  The side conditions of the rule $\rLocal$ guarantee that $Q$ runs with a different set of variables from those in the store and those used by other processes. 

\noindent{-} We extend the transition relation to consider processes of the form $\absp{\vy}{c;E}{Q}$ where $E$ is a set of diagonal elements. 
If $E$ is empty, we write $\absp{\vy}{c}{Q}$ instead of $\absp{\vy}{c;\emptyset}{Q}$. If $d$ entails $c\syt$, then   $P\syt$ is executed (Rule $\rAbs$).
% and the variables $\vx_i$  are added to the set of variables of the configuration. This allows us to \emph{open} possibly existentially quantified constraints in the store as we show in Example \ref{ex:ex-mob-sos}. 
Moreover, the abstraction persists in the current time interval to allow other potential replacements of $\vy$ in $P$. Notice that $E$ is augmented with  $\dyt$
and the side condition $E  \DNEQ \dyt$ prevents  executing   $P\syt$ again.   The process $P\syt$ is obtained by equating  $\vy$ and $\vt$  and then, hiding the information about  $\vy$, i.e., $
\localp{\vy}{(\bangp\tellp{d_{\vy\vt}} \parallel  P)}$. 
%Assume that the store $\exists \vx_1(d_1) \sqcup \cdots \sqcup \vx_n(d_n)$ is in normal form.
%In Rule ${\rAbs}$, 
%if $d_i$ entails $c\syt$, then   $P\syt$ is executed and the variables $\vx_i$  are added to the set of variables of the configuration. This allows us to \emph{open} possibly existentially quantified constraints in the store as we show in Example \ref{ex:ex-mob-sos}. 
%Additionally, the abstraction persists in the current time interval to allow other potential replacements of $\vy$ in $P$. Notice that $D$ is augmented with  $\dyt$ to avoid executing $P\syt$ again due to the side condition $D  \DNEQ \dyt$.
%  The process $P\syt$ is obtained by equating  $\vy$ and $\vt$  and then, hiding the information about  $\vy$, i.e., $
%\localp{\vy}{(\bangp\tellp{d_{\vy\vt}} \parallel  P)}$. 
%As  standardly done in  \ccp\ \cite{SRP91}, we can make use of the diagonal elements to model  parameter passing (as in the rule $\rCall$ below) and substitutions. 
%Notice also that if the substitution $[\vt/\vx]$ is not admissible, we can always by using alpha conversion in the structural congruence, we can rename the variables $\vx$
%into $\vec{x'}$ to have $\adm{\vec{x'}}{\vt}$ and then being able to apply the rule $\rAbs$. 
%Furthermore, the condition $\adm{\vx}{\vt}$ guarantees that the variables in $\vt$ are not captured in the resulting process 
%$\absp{\vx}{c\sqcup  \vx\not\doteq \vt}{Q}$

\noindent{-} Rule $\rStructVar$ allows us to \emph{open} the scope of  existentially quantified constraints in the store  (see Example \ref{ex:ex-mob-sos} below). If $\gamma$ reduces to $\gamma'$ using this rule then $\gamma\equiv \gamma'$. 
\\\noindent{-} Rule $\rStruct$ says that one can use the structural congruence on processes to continue a derivation (e.g., to do alpha conversion). It is worth noticing that we do not allow in this rule to transform the store via the relation $\equiv$ on configurations and then, via $\equivC$ on constraints. We shall discuss the reasons  behind  this choice  in  Example \ref{ex:ex-mob-sos}. 
%Let us explain why this is important. Let  $c_1=\outp(x)$, $c_2 = \outp(x) \sqcup \exists \vy (\outp(y))$ and $P=\absp{z}{\outp(z)}{Q}$.
%We know that $c_1 \equivC c_2$. If the current store is $c_1$ and we allow  the transition system to use $\equivC$ on stores, the rule $\rAbs$ may 
%be applied on the store $c_2$ , thus executing $Q[y/z]$.  On the other side, if the current store is $c_2$ and $\rStruct$ allows us  to ``\emph{weaken}'' the store to $c_1$, then we do not observe the execution of $Q[y/z]$ as in Example \ref{ex:ex-mob-sos}. For this reason, we can only use alpha conversion to obtain the normal form of the store, but we do not allow the use of $\equivC$  in $\rStruct$. 
\\\noindent{-}What we observe from $p(\vt)$ is $P\sxt$ where the formal 
parameters are substituted by the actual parameter (Rule $\rCall$). 
\\\noindent{-} Since the process $P=\unlessp{c}{Q}$ executes $Q$ in the next time-unit only if  the final store at the current time-unit  does not entail $c$, in the rule $\rUnless$ $P$ evolves into $\skipp$ if the current store $d$ entails $c$. 

%Notice that  by alpha conversion, we can freely rename the variables in $\vx$ to obtain an admissible substitution $\sxt$.
%This can be also done by using diagonal elements  as we explained before. 
For the observable transition relation, rule $\rObserv$ says that an observable transition from \( P \) labeled
with  $(c,\exists \vx(d))$ is obtained from a terminating sequence of internal transitions from $\mconf{\emptyset; P;c}$ to  $\mconf{\vx;Q;d}$. The process  to be executed in the next time interval is   \(\localp{\vx}{F(Q)}\) (the ``future'' of $Q$). $F(Q)$  is obtained by removing from \(Q\) the ${\bf abs}$ processes that could not be executed  and  by 
``unfolding'' the sub-terms within $\mathbf{next}$ and $\mathbf{unless}$ expressions. Notice that 
the output of a process hides the local variables ($\exists \vx (d)$) and those variables are also hidden in the next time-unit ($\localp{\vx}{F(Q)}$).

Now we are ready to show 
 that processes in Example \ref{ex:mobility} evolve into a configuration where a (local) variable can be communicated and shared. 
%by $P$ and $Q$. 
%how the evolution of the processes in Example \ref{ex:mobility} leads to a configuration where the variable $x$ created by $P$ is sent to $Q$ and then, both processes can communicate using it. 

\begin{example}[Scope Extrusion and Structural Rules]\label{ex:ex-mob-sos}
Let $P$ and $Q$ be as in Example \ref{ex:mobility}. 
In the following we show the evolution of the process  $ P \parallel Q$ starting from the store  $\exists w (\outp(w))$:\\

\resizebox{\textwidth}{!}{
$
\begin{array}{llll}
\mbox{\bf\tiny 1} &\mconf{\emptyset;P\parallel Q;\exists w(\outp(w))} & \redi^* & 
\mconf{\{x\}; \tellp{\outp(x)}\parallel P' \parallel Q ; \exists w(\outp(w))}  \\
\mbox{\bf\tiny 2} && \redi^* & \mconf{\{x\};P'\parallel Q; \exists w(\outp(w))\sqcup \outp(x)}\\
\mbox{\bf\tiny 3} && \redi^* & \mconf{\{x,w\};P'\parallel Q; \outp(w)\sqcup \outp(x)}\\
\mbox{\bf\tiny 4} && \redi^* & \mconf{\{x,w\}; P'\parallel Q_1 \parallel Q'[w/z] ; \outp(w)\sqcup \outp(x) } \\
\mbox{\bf\tiny 5} && \redi^* & \mconf{\{x,w\}; P'\parallel Q_2 \parallel Q'[w/z] \parallel Q'[x/z]; \outp(w)\sqcup \outp(x) } 
\end{array}
$
}\\

where $Q_1= \absp{z}{\outp(z);\{d_{wz}\}}{Q'}$
and $Q_2= \absp{z}{\outp(z);\{d_{wz},d_{xz}\}}{Q'}$.  Observe that $P'$ and $Q'[x/z]$ share the local variable $x$ created by $P$. The derivation from line 2 to line 3   uses the Rule $\rStructVar$ to \emph{open} the scope of $w$ in the store $\exists w (\outp(w))$. 
Let $c_1 = \exists w(\outp(w)) \sqcup \outp(x)$  (store in line 2) and $c_2 = \outp(x)$. We know that $c_1 \equivC c_2$. 
 As we said before,  Rule $\rStruct$  allows us to replace structural congruent processes ($\equiv$) but it does not  modify the store via the   relation $\equivC$  on constraints. The reason is that if we replace $c_1$ in line 2 with  $c_2$, then we will not observe the execution of $Q'[w/x]$. 
%Let us explain why this is important. Let  $c_1=\outp(x)$, $c_2 = \outp(x) \sqcup \exists \vy (\outp(y))$ and $P=\absp{z}{\outp(z)}{Q}$.
%We know that $c_1 \equivC c_2$. If the current store is $c_1$ and we allow  the transition system to use $\equivC$ on stores, the rule $\rAbs$ may 
%be applied on the store $c_2$ , thus executing $Q[y/z]$.  On the other side, if the current store is $c_2$ and $\rStruct$ allows us  to ``\emph{weaken}'' the store to $c_1$, then we do not observe the execution of $Q[y/z]$ as in Example \ref{ex:ex-mob-sos}. For this reason, we can only use alpha conversion to obtain the normal form of the store, but we do not allow the use of $\equivC$  in $\rStruct$. 
%
%Furthermore, notice that the {\bf abs} operator is able to \emph{open} the scope of local variables created by processes (as $x$ in $P$) and also the local variables in constraints (as $w$ in $\exists w (\outp(w))$). 
\end{example}

\subsection{Observables and  Behavior} \label{sec:observables}
In this section we study the input-output behavior of programs and we show that such relation is a function. More precisely, we show that the input-output relation is a (partial) upper closure operator. Then, we characterize the behavior of a process by the sequences of constraints such that the process cannot add any information to them. We shall call this behavior the strongest postcondition. This relation is fundamental to later develop the denotational semantics for \tccp\ and \utcc\ programs. 

Next lemma states some fundamental properties of the internal relation. The proof  follows from simple induction on the inference $\gamma \redi \gamma'$. 
\begin{lemma}[Properties of $\redi$]\label{lemma:redi-properties}
Assume that $ \mconf{\vx; P;c}  \redi \mconf{\vx'; Q;d} $. Then, $\vx \subseteq \vx'$. Furthermore: 
\\\noindent 1. (Internal Extensiveness): $ \exists \vx' (d) \entails \exists \vx(c)$, i.e.,  the store can only be augmented. 
\\\noindent 2. (Internal Potentiality): If $e \entails  c$ and $d \entails  e$    then $\mconf{\vx; P; e} \redi\equiv  \mconf{\vx'; Q;d}$, i.e.,   a stronger store triggers more internal transitions. 
\\\noindent 4. (Internal Restartability): $\mconf{\vx; P;d}  \redi \equiv \mconf{\vx'; Q;d}$.
\end{lemma}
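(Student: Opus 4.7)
The plan is to prove all three properties by a single induction on the derivation of $\mconf{\vx;P;c}\redi\mconf{\vx';Q;d}$, doing a case analysis on the last SOS rule used. The inclusion $\vx\subseteq\vx'$ is immediate by inspection: only $\rPar$, $\rLocal$, and $\rStructVar$ modify the local-variable set, and each does so by union. I will also prove, as a rule-by-rule strengthening, that $d\entails c$ in the axiomatic cases ($\rTell$ gives $d=c\sqcup c'$; all other non-$\rStructVar$ axioms leave the store untouched; $\rStructVar$ gives $d\entails\exists\vx_i\,c_i$ for every $i$ hence $d\entails c$); this is what Restartability will ultimately rely on.

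For Internal Extensiveness, the rules $\rAbs$, $\rCall$, $\rLocal$, and $\rUnless$ leave the store unchanged, so the statement reduces to $\exists\vx'(c)\entails\exists\vx(c)$, which follows from the monotonicity axiom (1) of $\exists$ in Definition~\ref{def:cs}. Rule $\rTell$ augments $d$ by $\sqcup c$, and $\rStructVar$ produces a configuration that is $\equiv$ to the original by design, so $\exists\vx'(d)\equivC\exists\vx(c)$ in that case. The inductive case $\rPar$ closes under the IH using monotonicity of $\exists$ and $\sqcup$, and $\rStruct$ closes under the IH together with the observation that $\equiv$ on configurations is defined exactly so that the projected stores coincide modulo $\equivC$.

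For Internal Potentiality, the key observation is that every side condition mentioning the store is monotone in it: the guard $d\entails c\syt$ of $\rAbs$ and the guard $d\entails c$ of $\rUnless$ continue to hold under any stronger store $e\entails c$. For $\rTell$, the transition from $e$ produces store $e\sqcup c'$, and the hypotheses $e\entails c$ together with $d=c\sqcup c'\entails e$ yield $e\sqcup c'\equivC d$ by a short lattice calculation. The cases $\rPar$ and $\rStruct$ close under the IH, the latter using that $\equiv$ is a congruence. The delicate case is $\rStructVar$: here I exploit that $\rStructVar$ applied to $\mconf{\vx;P;e}$ yields a configuration $\equiv\mconf{\vx;P;e}$, and combine $e\entails c$ with $d\entails e$ to line this up with $\mconf{\vx';Q;d}$ modulo the structural congruence on configurations (which quotients stores by $\equivC$ after projecting the local variables). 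Internal Restartability is then just Potentiality instantiated with $e:=d$, whose premises $d\entails c$ and $d\entails d$ are discharged by the rule-by-rule observation above and reflexivity. The main obstacle in carrying this out is precisely the $\rStructVar$ case, because it interacts simultaneously with both the local-variable set and the normal-form decomposition of the store, and so requires one to unwind the definition of $\equiv$ on configurations carefully; all other rules are routine.
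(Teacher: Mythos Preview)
Your overall strategy---induction on the derivation of $\redi$ with a case split on the last rule---is exactly what the paper does (it only says the proof ``follows from simple induction on the inference $\gamma\redi\gamma'$''), and your reduction of Restartability to Potentiality with $e:=d$ is clean.

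There is one concrete slip. In the Extensiveness case for $\rLocal$ you write that ``$\exists\vx'(c)\entails\exists\vx(c)$ follows from axiom (1) of $\exists$.'' Axiom (1) says $\exists x(c)\leq c$, so adding quantifiers only \emph{weakens}: with $\vx\subseteq\vx'$ you get $\exists\vx'(c)\leq\exists\vx(c)$, i.e.\ $\exists\vx(c)\entails\exists\vx'(c)$, which is the wrong direction. What actually makes the $\rLocal$ case go through is its side condition $\vy\cap\fv(d)=\emptyset$: since the store is unchanged and $\vy$ is not free in it, $\exists\vy(d)\equivC d$, hence $\exists\vx'(d)=\exists\vy\exists\vx(d)\equivC\exists\vx(d)=\exists\vx(c)$. (For $\rAbs$, $\rCall$, $\rUnless$ there is nothing to prove since $\vx'=\vx$.)

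A related caution applies to your $\rStructVar$ argument for Potentiality. You want to conclude $\mconf{\vx;P;e}\equiv\mconf{\vx';Q;d}$, which on the store side means $\exists\vx(e)\equivC\exists\vx'(d)\equivC\exists\vx(c)$. From $c\leq e\leq d$ you only get $\exists\vx(c)\leq\exists\vx(e)\leq\exists\vx(d)$, and $\exists\vx(d)$ need not coincide with $\exists\vx(c)$ because $d$ has the freshly opened variables $\bigcup\vx_i$ free. The missing step is to use that the $\vx_i$ are bound in $\nf(c)$ (hence not free in $c$): then $\exists(\bigcup\vx_i)(c)\equivC c$ and $\exists(\bigcup\vx_i)(d)\equivC c$, so $c\leq e\leq d$ squeezes $\exists(\bigcup\vx_i)(e)\equivC c$, and quantifying further over $\vx$ gives the equivalence you need at the level of $\vx'=\vx\cup\bigcup\vx_i$.
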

%\begin{proof}
%The proofs of (1), (2) and (4) proceed by a simple induction on the
%inference of $\mconf{\vx; P;c}  \redi \mconf{\vx'; Q;d}$. (3) follows from (1) and (2). 
%\end{proof}
%Property (1) above states that the store can only be augmented. Property (2) states that augmenting the store may increase the potentiality of internal reductions,  that is, the number of possible internal transitions. Property (3) resembles a fixed point property. 
\subsubsection{Input-Output Behavior}
Recall that \tccp\ and \utcc\  allows for the modeling of reactive systems where processes react according to the stimuli (input) from the environment. We  define the behavior of a process $P$ as the relation of its outputs under the influence of a sequence of inputs (constraints) from the environment. Before  formalizing this idea, it is worth noticing that unlike \tccp, some \utcc\ 
processes may exhibit infinitely many internal reductions during a time-unit due to the $\mathbf{abs}$ operator. 

\begin{example}[Infinite Behavior]\label{ex:inf-behav}
Consider a constant symbol ``$a$'', a function symbol $f$, a unary predicate (constraint) $c(\cdot)$ and let $Q=\absp{x}{c(x)}{\tellp{c(f(x))}}$. Operationally, $Q$  in a store  $c(a)$  engages in an infinite sequence of internal transitions producing the constraints  $c(f(a))$, $c(f(f(a)))$, $c(f(f(f(a))))$ and so on. 
\end{example}
The above behavior will arise, for instance,  in applications to security as those in Section \ref{sec:appsec}. We shall see that  the model of the attacker may generate infinitely many messages (constraints) if we do not restrict the length  of the messages (i.e., the number of nested applications of  $f$). 

 %We shall show later that the abstract semantics defined in Section \ref{sec:absframework} allows us to restrict the number of messages generated, thus avoiding this unbounded behavior. 

%Next definition makes precise the idea of a process reacting to an input prod

\begin{definition}[Input-Output Behavior] \label{def:behavior}
Let $s=c_1.c_2...c_n$,  $s'=c_1'.c_2'...c_n'$ 
(resp. $w=c_1.c_2...$, $w'=c_1'.c_2'...$)
be finite (resp. infinite) sequences of
constraints. If   $P = P_1 \rede{(c_1,c_1')} P_2
\rede{(c_2,c_2')}...P_n\rede{(c_n,c_n')}P_{n+1}$
(resp. $P = P_1 \rede{(c_1,c_1')} P_2
\rede{(c_2,c_2')} ...$ )
, we  write $P\rede{(s,s')}$ (resp. $P\redew{(w,w')}$). We define the 
\emph{input-output} behavior of $P$ as $\io{(P)} = \iofin{(P)} \cup \ioinf{(P)}$ where
\[
\begin{array}{lll}
\iofin{(P)} &=& \{(s,s') \ | \ P \rede{(s,s')}\} \mbox{ for } s,s' \in \cC^{*}\\
\ioinf{(P)} &=& \{(w,w') \ | \ P \redew{(w,w')}\} \mbox{ for } w,w' \in \cC^{\omega}
\end{array}
\]
%If $\iobehav{P}= \iobehav{Q}$ we write $P\ioequiv{} Q$.
\end{definition}

We recall that  the observable transition ($\rede{}$) is defined through a finite number of internal transitions (rule $\rObserv$ in Figure \ref{opersem}). Hence,   it may be the case that for some \utcc\ processes (e.g., $Q$ in Example \ref{ex:inf-behav}), $\ioinf=\emptyset$.  For this reason, we distinguish finite and infinite sequences in the input-output behavior relation.  We notice that if $w\in \ioinf(P)$ then any finite prefix  of $w$ belongs to $\iofin(P)$.
 We shall call \emph{well-terminated}  the  processes which do not exhibit infinite internal behavior.

 \begin{definition}[Well-termination]\label{def:wellterminated}
The process $P$ is said to be \emph{well-terminated} w.r.t. an infinite sequence $w$ if  there exists $w' \in \cC^\omega$ s.t.   $(w,w')\in \ioinf(P).$
\end{definition}

 Note that  \tccp\ processes are well-terminated since recursive calls must be ${\bf next}$ guarded. The fragment of well-terminated \utcc\ processes has been shown to be a meaningful one. For instance, in \cite{Olarte:08:PPDP} the authors show that such fragment is enough to  encode  Turing-powerful formalisms and  \cite{Lopez-Places09} shows the use of this fragment in the  declarative interpretation of languages for structured communications. 

We conclude here by showing that the \utcc\ calculus is deterministic. The result follows from Lemma \ref{lemma:redi-properties} (see  \ref{app:sos}).% and it  follows from the properties of the internal derivation in .
%Before proving this result, we first need to state an important property of internal transitions, namely that of \emph{confluence}.
%\begin{lemma}[Confluence]\label{lemma:confluence}
%Suppose that $\gamma_0 \redi \gamma_1$,
%$\gamma_0 \redi \gamma_2$ and $\gamma_1 \not\equiv \gamma_2$.  Then, there exists 
%$\gamma_3$ such that $\gamma_1 \redi \gamma_3$ and $\gamma_2 \redi \gamma_3$. 
%\end{lemma}
%\begin{proof} 
%Let $\gamma_0 = \mconf{\vx;P;c}$. The proof proceed by structural induction on $P$.  In each case, one verifies whether  $\gamma_0$ has two different transitions  (up to $\equiv$)  $\gamma_0 \redi \gamma_1$ and $\gamma_0 \redi \gamma_2$. If this is the case, one shows that there exists $\gamma_3$ s.t. $\gamma_1 \redi \gamma_3$ and $\gamma_2 \redi \gamma_3$. See details in  \ref{app:proofs}.
%\end{proof}
%
 \begin{theorem}[Determinism] \label{theo:SOS-determinism}
Let $s,w$ and $w'$ be (possibly infinite) sequences of constraints. If both
$(s,w)$, $(s,w') \in \iobehav{P}$ then $w\equivC w'$. 
\end{theorem}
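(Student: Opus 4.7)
The plan is to proceed by induction on the position $i$ within the input sequence $s$, establishing at every step that the $i$-th constraints of $w$ and $w'$ are $\equivC$-equivalent and that the residual processes taken to step $i{+}1$ coincide up to $\equiv$. The induction is well-posed: since both observed traces share the input $s$, at step $i$ the input store $c_i = s(i)$ is the same, and the process $P_i$ to be reduced is the same up to $\equiv$ by the inductive hypothesis (both $\localp{\vx}{\cdot}$ and $F(\cdot)$ used in rule $\rObserv$ preserve $\equiv$). The sequences $w$ and $w'$ necessarily have the same length as $s$, since the existence of an observable transition at step $i$ is a property of $P_i$ and $c_i$ alone.

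The inductive step thus reduces to internal determinism: any two terminating internal reductions $\mconf{\emptyset; P_i; c_i}\redi^*\mconf{\vx_1; Q_1; d_1}\not\redi$ and $\mconf{\emptyset; P_i; c_i}\redi^*\mconf{\vx_2; Q_2; d_2}\not\redi$ yield structurally congruent configurations, which gives $\exists\vx_1(d_1)\equivC\exists\vx_2(d_2)$ for the outputs and $\localp{\vx_1}{F(Q_1)}\equiv\localp{\vx_2}{F(Q_2)}$ for the residuals. To prove this, I would establish a local diamond for $\redi$ modulo $\equiv$: given $\gamma\redi\gamma_1$ and $\gamma\redi\gamma_2$ via different redexes, build a common $\gamma_3$ reachable from both in at most one step up to $\equiv$. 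Internal Extensiveness from Lemma~\ref{lemma:redi-properties} says each step only augments the store, and Internal Potentiality says a redex enabled at a weaker store remains enabled at a stronger one; together these imply that no redex is destroyed by firing another, and a case analysis on the rule pair closes each diamond. Standard diagram chasing then promotes the local diamond to confluence of $\redi^*$, forcing all terminal configurations reachable from $\mconf{\emptyset;P_i;c_i}$ to coincide up to $\equiv$.

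The main obstacle I anticipate is the bookkeeping for rules $\rAbs$ and $\rStructVar$. The $\rAbs$ rule is persistent and records already-processed instantiations in the diagonal-element set $E$; when two distinct substitutions $[\vt_1/\vy]$ and $[\vt_2/\vy]$ are simultaneously enabled, the two firing orders yield intermediate configurations whose $E$-components differ, and the diamond can close only after both substitutions have been recorded on both sides. Rule $\rStructVar$ introduces fresh bound names, forcing the confluence argument to be formulated up to the alpha-conversion contained in $\equiv$; fortunately $\rStructVar$ yields a configuration already $\equiv$-equivalent to its source (as explicitly noted after the rule), which makes its case essentially trivial. Once these delicate situations are dispatched, the determinism claim follows at once from the outlined induction.
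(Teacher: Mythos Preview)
Your proposal is correct and takes essentially the same approach as the paper: both reduce determinism to a one-step diamond lemma for $\redi$ modulo $\equiv$ (the paper's Lemma~\ref{lemma:confluence}), lift it by tiling to uniqueness of internal normal forms, and then propagate along the time axis, with $\rAbs$ singled out as the delicate case. The only notable differences are cosmetic: the paper proves the diamond by structural induction with an explicit size measure on configurations rather than by invoking the monotonicity properties of Lemma~\ref{lemma:redi-properties}, and it records separately (as a Finite-Traces lemma) that one terminating internal derivation forces all derivations from the same source to terminate.
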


%\subsection{The Monotonic fragment}

\subsubsection{Closure Properties and Strongest Postcondition}
The $\mathbf{unless}$ operator is the only construct in the language that exhibits 
non-monotonic input-output behavior in the following sense: Let $P=\unlessp{c}{Q}$ and  $s \leq s'$.  If $(s,w), (s',w')\in \iobehav{P}$, it may be the case that $w \not\leq w'$. For example, take $Q=\tellp{d}$, $s=\true^\omega$ and  $s'=c.\true^\omega$.
The reader can verify that  $w=\true.d.\true^\omega$, $w'=c.\true^\omega$ and then,  $w \not\leq w'$. 

%We then define a monotonic process as follows:
\begin{definition}[Monotonic Processes]\label{def:monotonic}
We say that $P$ is a monotonic process if it does not have occurrences of ${\bf unless}$ processes. Similarly, the program $\cD.P$ is monotonic if $P$ and all $P_i$ in a process definition $p_i(\vx)\defsymbol P_i$ are monotonic. 
\end{definition}

Now we  show that $\iobehav{P}$  is a \emph{partial upper closure operator}, i.e., it is a function satisfying  \emph{extensiveness} and  \emph{idempotence}. Furthermore, if $P$ is \emph{monotonic},  $\iobehav{P}$ is a \emph{closure operator} satisfying additionally monotonicity. The proof of this result follows from   Lemma \ref{lemma:redi-properties} (see details in \ref{app:sos}). 
%Before to prove this, let us generalize Properties (1) and (3) in  Lemma \ref{lemma:redi-properties}  for the case of sequences of constraints, i.e., for the observable transition relation. 

\begin{lemma}[Closure Properties]\label{lem:COP}
Let $P$ be a process. Then, $\iobehav{P}$ is a function. Furthermore, $\iobehav{P}$  is a  partial  upper  closure operator,  namely it satisfies:
\\ \noindent {\bf Extensiveness}: If $(s,s') \in  \iobehav{P}$  then $s \leq s'$.\\
\noindent {\bf Idempotence}: If $(s,s') \in  \iobehav{P}$  then $(s',s') \in  \iobehav{P}$.

Moreover, if $P$ is monotonic, then:

\noindent {\bf Monotonicity}: If  $(s_1,s_1') \in  \iobehav{P}$,   $(s_2 ,s_2' ) \in  \iobehav{P}$ and  $s_1 \leq s_2$, then  $s_1' \leq s_2'$.

\end{lemma}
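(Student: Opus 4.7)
The plan is to prove the four claims in order, leveraging the internal transition properties already recorded in Lemma~\ref{lemma:redi-properties} and the determinism established in Theorem~\ref{theo:SOS-determinism}. The fact that $\iobehav{P}$ is a function is immediate from determinism: if $(s,w)$ and $(s,w')$ both lie in $\iobehav{P}$, then by determinism $w \equivC w'$, i.e.\ the output is determined up to the constraint-equivalence used throughout the paper.

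For \emph{extensiveness}, I would reason time-unit by time-unit. At time unit $i$ we have an observable step $P_i \rede{(c_i,c_i')} P_{i+1}$, which by rule $\rObserv$ is derived from a terminating sequence $\mconf{\emptyset;P_i;c_i}\redi^* \mconf{\vec{x};Q;d}$ with $c_i'=\exists\vec{x}(d)$. A straightforward induction on the length of this internal derivation, using internal extensiveness from Lemma~\ref{lemma:redi-properties}(1), yields $\exists \vec{x}(d) \entails c_i$, i.e.\ $c_i' \entails c_i$. Since this holds for every $i$, we obtain $s \leq s'$ as required.

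For \emph{idempotence}, the key move is to show that replaying each time unit with the stronger input $c_i'$ (instead of $c_i$) reaches a configuration structurally congruent to the original terminating one, and hence produces the same output and the same residual process. Given the internal derivation $\mconf{\emptyset;P_i;c_i}\redi^*\mconf{\vec{x};Q;d}$, I would pick $e = \exists\vec{x}(d)$, observe that $e \entails c_i$ by extensiveness and that $d \entails e$, and apply internal potentiality iteratively along the derivation (combined with internal restartability to handle any steps whose store is already maximal) to conclude $\mconf{\emptyset;P_i;e}\redi^*\equiv \mconf{\vec{x};Q;d}$. Since the final configuration is identical up to $\equiv$, the rule $\rObserv$ produces the same output $\exists\vec{x}(d)=c_i'$ and the same residual process $\localp{\vec{x}}{F(Q)}=P_{i+1}$; an induction on $i$ then yields $(s',s')\in\iobehav{P}$.

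The \emph{monotonicity} part for monotonic $P$ is where I expect the real work to lie, because the two runs on $s_1 \leq s_2$ generate \emph{different} residual processes, so we cannot simply compare stores at a single time unit in isolation. The plan is to strengthen the induction hypothesis: at each time $i$, both $P_i^{(1)}$ and $P_i^{(2)}$ arise from the same original $P$ via observable steps, and a stronger input triggers a superset of the $\rAbs$ activations and of the $\rUnless$-clearances (the latter being vacuous since $P$ is monotonic). Concretely, I would prove by induction on $i$ that $s_1'(i)\entails$-monotonically refines when we strengthen $s_2(i)$, using internal potentiality to lift each one-step derivation from $s_1(i)$ to $s_2(i)$, and that the next residual $P_{i+1}^{(2)}$ enjoys the property that any further internal derivation from $P_{i+1}^{(1)}$ can be simulated by $P_{i+1}^{(2)}$ on the same input. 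The absence of $\mathbf{unless}$ is essential here: without it, no subterm is discarded when the store grows, so the function $F(\cdot)$ together with the persistence of abstractions preserves the simulation relation across time units. Combining this with the pointwise preservation of entailment gives $s_1' \leq s_2'$.
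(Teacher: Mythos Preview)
Your treatment of functionality, extensiveness, and idempotence matches the paper's proof: determinism (Theorem~\ref{theo:SOS-determinism}) gives functionality, Lemma~\ref{lemma:redi-properties}(1) gives extensiveness pointwise, and the combination of potentiality and restartability from Lemma~\ref{lemma:redi-properties} gives idempotence by replaying each time-unit derivation from the stronger input $c_i'$. This is exactly how the paper proceeds.

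For monotonicity, your intuition is right but the paper makes the ``simulation relation across time units'' concrete in a way you only sketch. Following~\cite{NPV02}, the paper introduces an explicit preorder $\preceq$ on processes, defined as the least congruence with $\skipp \preceq P$; intuitively, $P \preceq Q$ means $Q$ contains at least as much code as $P$. The induction hypothesis then becomes: whenever $\langle \vx; P;c \rangle \redi^* \langle \vx';P';c' \rangle\not\redi$, for every $d\entails c$ and every $Q\succeq P$ there is a terminating derivation $\langle \vx; Q;d \rangle \redi^* \langle \vy;Q';d' \rangle\not\redi$ with $\exists \vy(d')\entails \exists \vx'(c')$ and $\localp{\vx'}{F(P')} \preceq \localp{\vy}{F(Q')}$. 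This is established from two one-step facts: (a) $\redi$ is monotone in both the store and $\preceq$, and (b) if $P$ is stuck then any $Q\succeq P$ under a stronger store is either stuck or terminates in something $\succeq F(P)$; property~(b) is precisely where the absence of $\mathbf{unless}$ is needed. Your plan (``superset of $\rAbs$ activations'', ``residual $P_{i+1}^{(2)}$ simulates $P_{i+1}^{(1)}$'') is the informal shadow of this argument; what you are missing is naming the relation $\preceq$ and checking that $F(\cdot)$ preserves it, which is the technical hinge that carries the comparison from one time unit to the next.
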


A pleasant property of closure operators is that they are uniquely determined by their set of fixpoints, here called the \emph{strongest postcondition}.
\begin{definition}[Strongest Postcondition]\label{def:sp}
Given a \utcc\ process $P$, the strongest postcondition of $P$,
denoted by $\spbehav{P}$, is defined as the set $\{s\in \cC^\omega \cup \cC^* \ | \ (s,s) \in \iobehav{P}\}$.
\end{definition}

Intuitively,   $s \in \spbehav{P}$ iff $P$ under input $s$ cannot 
add any information whatsoever, i.e.  $s$  is a quiescent sequence for $P$.  We can also think of   $\spbehav{P}$ as the set of sequences that $P$ can output under the influence of an arbitrary environment. 
Therefore, proving whether $P$ satisfies a given  property $A$,  in the presence of any environment, reduces to proving whether  $\spbehav{P}$ is a subset of the set of sequences (outputs) satisfying the property $A$. Recall that $\io(P) = \iofin(P) \cup \ioinf(P)$. Therefore, the sequences in $\spbehav{P}$ can be finite or infinite. 

We conclude here by showing that for the    monotonic fragment,   the input-output behavior can be retrieved 
from  the strongest postcondition. The proof of this result follows  straightforward from Lemma \ref{lem:COP} 
and it can be found in \ref{app:sos}.

\begin{theorem}\label{the:col:CO}
Let $min$ be the minimum function w.r.t. the order induced by $\leq$ and  $P$ be a monotonic process. Then, 
$(s, s') \in \iobehav{P} \mbox{\ \ iff\ \  } 
s' = min(\spbehav{P} \cap \{w \ | \ s \leq w \})
$.
\end{theorem}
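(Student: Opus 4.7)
The plan is to exploit Lemma \ref{lem:COP}, which tells us that for a monotonic $P$ the relation $\iobehav{P}$ is an (upper) closure operator: extensive, idempotent, and monotone with respect to $\leq$; and $\spbehav{P}$ is, by definition, precisely its set of fixpoints. The theorem is then a restatement of the classical fact that a closure operator is uniquely determined by its fixpoints via $f(x) = \min\{y \geq x \mid f(y) = y\}$, so the proof amounts to instantiating that argument in our setting.

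For the ($\Rightarrow$) direction, assume $(s,s') \in \iobehav{P}$. Extensiveness yields $s \leq s'$, and idempotence gives $(s',s') \in \iobehav{P}$, so $s' \in \spbehav{P}$; hence $s' \in \spbehav{P} \cap \{w \mid s \leq w\}$. To show $s'$ is the minimum of that set, I would pick an arbitrary $t \in \spbehav{P}$ with $s \leq t$. Then $(t,t) \in \iobehav{P}$, and applying monotonicity to the pairs $(s,s')$ and $(t,t)$ with $s \leq t$ gives $s' \leq t$, as required.

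For the ($\Leftarrow$) direction, assume $s' = \min(\spbehav{P} \cap \{w \mid s \leq w\})$. Since $\iobehav{P}$ is a (partial) function by Lemma \ref{lem:COP}, it suffices to exhibit some $t$ with $(s,t) \in \iobehav{P}$: the forward direction just proved will then force $t = s'$. Existence comes from the fact that $s' \in \spbehav{P}$ means $(s',s') \in \iobehav{P}$, so $P$ runs to an observable output at every time-unit starting from $s'$; using internal potentiality (Lemma \ref{lemma:redi-properties}) together with $s \leq s'$, I can replay, time-unit by time-unit, the same finite derivations over the weaker input $s$, using rule $\rObserv$ to package them into observable transitions and hence producing a well-defined behavior on $s$.

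I expect the only non-routine step to be this last one: transferring well-termination on the stronger input $s'$ to well-termination on the weaker input $s$, using potentiality at each time-unit and then stitching the resulting internal derivations together through $\rObserv$. Once existence is granted, the closure-operator computation is immediate and the two directions close the iff.
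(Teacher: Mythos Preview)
Your forward direction is correct and coincides with the paper's argument (which in fact only spells out this direction): extensiveness and idempotence place $s'$ in $\spbehav{P}\cap\{w\mid s\leq w\}$, and monotonicity shows it lies below every other element.

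For the backward direction you correctly isolate the only non-trivial point---transferring well-termination from the stronger input $s'$ to the weaker $s$---but the property you invoke does not deliver it. Internal potentiality (Lemma~\ref{lemma:redi-properties}) says that a step $\mconf{\vx;P;c}\redi\mconf{\vx';Q;d}$ can be replayed from any \emph{intermediate} store $e$ with $c\leq e\leq d$; it lets you move from a weaker initial store to a stronger one, not the reverse, and in any case the derivation from $s(i)$ will not be ``the same'' as the one from $s'(i)$, since fewer $\mathbf{abs}$-substitutions fire from a weaker store. The ingredient you actually need is the single-step simulation established inside the proof of the monotonicity clause of Lemma~\ref{lem:COP}: every internal transition from a configuration with store $c$ and process $P$ is matched by one from any configuration with a stronger store $d\entails c$ and a larger process $Q$ with $P\preceq Q$. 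By contraposition, an infinite internal derivation from $s(i)$ would force one from $s'(i)$, contradicting $(s',s')\in\iobehav{P}$; carrying this across time-units via the $\preceq$ order on residuals yields well-termination on $s$. Your closing step---functionality of $\iobehav{P}$ together with the already-proved forward direction---then finishes the argument.
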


%%%%%%%%%%%%%%%%%%%%%%%%%%%%%%%%%%%%%%%%%%%%%%%%%
%%% DENOTATIONAL SEMANTICS
%%%%%%%%%%%%%%%%%%%%%%%%%%%%%%%%%%%%%%%%%%%%%%%%%

\section{A Denotational model for TCC and UTCC}\label{sec:denotsem}
%%%%%%%%%%%%
% Denotational Semantics
%%%%%%%%%%%%%

As we explained before,  the strongest postcondition relation fully captures the behavior of a process  considering any possible output under an arbitrary environment. In this section we develop a denotational model for the strongest postcondition. The semantics is compositional and it is the basis for the 
abstract interpretation framework that we 
develop in Section \ref{sec:absframework}.

Our semantics is built on the closure operator semantics for \ccp\ and \tccp\ in \cite{SRP91,tcc-lics94} and \cite{deBoer:97:TOPLAS,NPV02}.
Unlike the denotational semantics for \utcc\ in \cite{Olarte:08:PPDP}, our semantics is more appropriate for the data-flow analysis due to its simpler domain based on sequences of constraints instead of sequences of temporal formulas. In  Section \ref{sec:concluding} we elaborate  more on the differences between both semantics. 

  Roughly speaking,  the semantics is  based on a continuous immediate consequence operator $T_{\cD}$, which computes in a bottom-up fashion  the \emph{interpretation} of each  process definition $p(\vx) \defsymbol P$  in  $\cD$. Such an interpretation is given in terms of the set of the quiescent sequences for  $p(\vx)$.%, i.e, the sequences in strongest postcondition of the process. 

Assume a \utcc\ program $\cD.P$. We shall denote  the set of process names with their  formal parameters in $\cD$ as ${\it ProcHeads}$. We shall call \emph{Interpretations} the set of functions in the domain ${\it ProcHeads} \rightarrow  {\mathcal{P}}(\C^\omega )$. We shall define the  semantics as a function  $\os \cdot\cs_I: ({\it ProcHeads} \rightarrow {\mathcal{P}}(\C^\omega) ) \rightarrow ({\it Proc} \rightarrow {\mathcal{P}}(\C^\omega) )$
which  given an interpretation $I$, associates to each process a set of  sequences of constraints.

\begin{figure}
%\resizebox{\textwidth}{!}
{
$
\begin{array}{llcl}
\rdSkip \ \ \ \ & \os \skipp \cs_{I} & = & \cC^\omega \\
\rdTell \ \ \ \ & \os \tellp{c} \cs_{I} & = & \up{c}.\cC^\omega \\

\rdAsk \ \ \ \ & \os \whenp{c}{P} \cs_{I} & = & \overline{\up{c}}.\cC^\omega  \ \cup  \  (\up{c}.\cC^\omega \cap \os P \cs_I) \\
%\ \ \ \{d.s \in \cC^\omega \ | \ d \entails c \mbox{ and } d.s \in \os P \cs_{I} \} \\ & & & \cup \ \{d.s \in \cC^\omega \ | \ d \not\entails c \}\\\\

%\rdAbs \ \ \ \ & \os \absp{\vx}{c}{P} \cs_{I} & = & \bigcap\limits_{\dxt\in\cC} \os (\whenp{c}{P})\sxt ) \cs_I \\\\

\rdAbs \ \ \ \ & \os \absp{\vx}{c}{P} \cs_{I} & = & \Forall \vx (\os \whenp{c}{P}\cs_I)  \\

\rdPar \ \ \ \ & \os P \parallel Q\cs_{I} & = & \os P \cs_{I} \cap \os Q \cs_{I}  \\

\rdLocal \ \ \ \ & \os \localp{\vx}{P} \cs_{I} & = & 
\Exists \vx(\os P \cs_I)
\\

{\rdNext} \ \ \ \ & \os \nextp P \cs_{I} & = & \cC.\os P\cs_I\\

{\rdUnless} \ \ \ \ & \os \unlessp{c}{P} \cs_{I} & = & \overline{\up{c}}.\os P \cs_I  \ \cup \ \up{c}.\cC^\omega \\

%\rdBang \ \ \ \ & \os \bangp{P} \cs_{I} & = & \{ s \in \cC^\omega \ | \ \mbox{ for all $s''$, $s'$ s.t. $s=s''.s'$, } s' \in \os P\cs_{I} \} \\\\

%\rdBang \ \ \ \ & \os \bangp{P} \cs_{I} & = & \os P\cs_I \cap \cC. \os \bangp{P} \cs_I  \\\\

%\rdLocal \ \ \ \ & \os \localp{\vx;c}{P} \cs_{I} & = & \{s\in \cC^\omega \ | \ \mbox{ there exists an $\vx$-variant $s'$ of $s$ s.t. }  \\ & & &  \tab\tab\tab\ \ s'(1) \entails c \mbox{ and } s' \in \os P\cs_{I} \}\\\\

\rdCall \ \ \ \ & \os {p(\vt)} \cs_{I} & = & I(p(\vt))
\end{array}
$
}
\caption{Semantic Equations for \tccp\ and \utcc\ constructs. Operands ``$.$'', $\up$ \ \ ,\ $\Forall$   and $\ \Exists$ are defined in Notation \ref{not:clos-seq}. $\overline{A}$ denotes the set complement of $A$ in $\cCW$. \label{tab:densem} }
\end{figure}
Before  defining the semantics, we introduce the following notation.
 \begin{notation}[Closures and Operators on Sequences]\label{not:clos-seq}
    Given a constraint $c$, we shall use $\up{c}$ (the upward closure)  to denote the set $\{d \in \cC \  |\  d \entails c \}$, i.e., the set of constraints entailing $c$. Similarly, we shall use $\up{s}$ to denote the set of sequences $\{s' \in \cC^\omega \ | \ s \leq s' \}$.
Given $S\subseteq \cC^\omega$ and $\cC' \subseteq \cC$, we shall extend the use of  the sequences-concatenation operator ``$.$'' by declaring that $c.S = \{ c.s \ | \ s \in S\}$, $\cC'.s = \{ c.s \ | \ c \in \cC' \}$ and $\cC'.S = \{ c.s \ | \ c\in \cC' \mbox{ and } s\in S\} $. 
Furthermore, given a set of sequences of constraints $S\subseteq \cC^\omega$, we define:
 \[
 \begin{array}{lll}
 \Exists \vx(S) &=& \{s\in \cC^\omega \ | \ \mbox{ there exists } s'\in S \mbox{ s.t. }  \exists \vx (s) \equivC \exists \vx (s') \}\\
% \Forall \vx (S) &=& \{ s \in S \ \ | \ \mbox{for all } 
% s'\in \cC \mbox{ s.t. }  \exists \vx (s) \equivC \exists \vx (s') \mbox{, if }
% \dxt \leq s' \mbox{ and } \adm{\vx}{\vt}   \\
% & & \quad\quad \qquad \qquad  \mbox{ then } s' \in S  \} \cap \\
%& &  \{ \exists \vy(s) \in S \ \ | \ \mbox{for all } 
% s'\in \cC \mbox{ s.t. }  \exists \vx (s) \equivC \exists \vx (s') \mbox{, if }
% \dxt \leq s' \mbox{ and } \adm{\vx}{\vt}   \\
% & & \quad\quad \qquad \qquad  \mbox{ then } s' \in S  
% \\
  \Forall \vx (S) &=& \{ \exists \vy(s) \in S \  |  \
  \vy \subseteq {\it Var}, s \in S 
    \mbox{ and for all  } s' \in \cC^\omega, \mbox{ if } 
  \exists \vx (s) \equivC \exists \vx (s') \mbox{,} \\
  & & \qquad\qquad\quad\ 
 \ \dxt^\omega \leq s' \mbox{ and } \adm{\vx}{\vt}  \mbox{ then } s' \in S  \}
%\Forall \vx (S) &=& \{ s \in S \ | \ \mbox{for all } \vec{x}\mbox{-variant }  s'  \mbox{ of  s and term } \vt, \\
%	& &  \ \ \ \ \ \ \ \ \ \ \  \mbox{ if }   \dxt^\omega \leq s' \mbox{ and } \adm{\vx}{\vt}   \mbox{ then } s' \in S  \} 
 \end{array}
 \]
 \end{notation}
The operators above are used to define the semantic equations  in Figure \ref{tab:densem} and explained in the following. 
 Recall that $\os P \cs_I$ aims at capturing the strongest postcondition (or quiescent sequences) of $P$, i.e. those  sequences $s$ such that $P$ under input $s$ cannot add any information whatsoever. The process $\skipp$ cannot add any information to any sequence and hence, its denotation is $\cC^\omega$ (Equation $\rdSkip$).    The sequences to which $\tellp{c}$ cannot add information are those whose first element entails $c$, i.e., the upward closure of $c$ (Equation $\rdTell$). %The explanations for $\rdAsk$ and $\rdAbs$ are given later. 
  If neither $P$ nor $Q$ can add any information to $s$, then $s$ is quiescent for $P\parallel Q$. (Equation $\rdPar$). 
  
  We say that  $s$ is an ${\vx}$-variant of $s'$ if
$\exists\vx(s)\equivC\exists\vx(s')$, i.e., $s$ and $s'$ differ only on the information about  $\vx$. Let $S=\Exists \vx(S')$. We note that 
$s\in S$ if there is an $\vx$-variant $s'$ of $s$ in $S'$. 
Therefore, a sequence $s$ is  quiescent for $Q=\localp{\vx}{P}$ if there exists an $\vx$-variant $s'$ of $s$ s.t. $s'$ is quiescent for $P$.  Hence, if $P$ cannot add any information to $s'$ then  $Q$ cannot add any information to $s$ (Equation $\rdLocal$).

The process $\nextp{P}$ has no influence on the first 
element of a sequence. Hence if $s$ is quiescent for $P$ then $c.s$ is quiescent for $\nextp{P}$ for any $c\in \cC$   (Equation $\rdNext$). 
Recall that the process $Q=\unlessp{c}{P}$ executes $P$ in the next time interval if and only if the guard $c$ cannot be deduced from the store in the current time-unit. Then, a sequence $d.s$ is quiescent for $Q$  if either  $s$ is quiescent for $P$ or $d$ entails $c$  (Equation $\rdUnless$). This equation can be equivalently written as  $\cC.\os P \cs_I  \ \cup \ \up{c}.\cC^\omega$.

%A  sequence $s$ is quiescent for $\bangp{P}$ if 
%it is quiescent for every process of the form $\nextp^n{P}$ with $n\geq0$.  Then, every suffix of $s$ must be  quiescent for $P$ (Equation $\rdBang$).   
Recall that the interpretation $I$ maps process names to sequences of constraints. Then, the meaning of  $p(\vt)$ is directly given by the interpretation $I$ (Rule $\rdCall$).

%  For the sake of presentation we introduce a rule for the \emph{ask} process. %In short we show that only with that of the $\mathbf{abs}$ process suffices.
   Let $Q=\whenp{c}{P}$. A sequence $d.s$ is quiescent for $Q$ if $d$ does not entail $c$. If $d$ entails $c$, then  $d.s$ must be quiescent for $P$ (rule $\rdAsk$). In some cases, for the sake of presentation, we may write this equations as:
   \[
   \os \whenp{c}{P} \cs_I = \{ d.s \ | \  \mbox{ if } d\entails c \mbox { then } d.s \in \os P \cs_I \} 
   \]
      Before explaining the Rule $\rdAbs$, let us show some properties of $\Forall \vx (\cdot)$. First, we note that the $\vx$-variables satisfying the condition 
   $\dxt^\omega \leq s$ in the definition of $\Forall$ are equivalent (see the proof in \ref{app:proofs-den}). 
   
   \begin{observation}[Equality and  $\vx$-variants]\label{l-vx-eq}
Let $S \subseteq \cC^\omega$, $\vz \subseteq {\it Var}$ and   $s,w \in \cC^\omega$ be $\vx$-variants such that $\dxt^\omega \leq s$,  $\dxt^\omega \leq w$  and $\adm{\vx}{\vt}$. (1)  $s\equivC w$. (2)   $\exists \vz(s) \in \Forall \vx (S)$ iff $s \in \Forall \vx (S)$. 
\end{observation}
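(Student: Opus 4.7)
The plan is to treat the two parts separately, with Part (1) serving as the crucial technical tool for Part (2). Throughout, the essential ingredients are the substitution identity $c\sxt \sqcup \dxt \entails c$ recorded in Notation \ref{not:terms}, together with the cylindric axioms of Definition \ref{def:cs}.

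For Part (1), I would argue pointwise: fix an index $i$ and set $c = s(i)$, $d = w(i)$. The hypotheses give $c, d \entails \dxt$ and, from the $\vx$-variant condition, $\exists\vx(c) \equivC \exists\vx(d)$. Because $c \entails \dxt$, the substitution $c\sxt = \exists\vx(c \sqcup \dxt)$ collapses to $\exists\vx(c)$; symmetrically $d\sxt \equivC \exists\vx(d)$, so $c\sxt \equivC d\sxt$. Next I would show $c \equivC c\sxt \sqcup \dxt$: one direction is immediate ($c\sxt \sqcup \dxt \entails c$ from Notation \ref{not:terms}), and the other follows from $c \entails \dxt$ together with $c \entails \exists\vx(c) \equivC c\sxt$. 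An analogous computation gives $d \equivC d\sxt \sqcup \dxt$. Combining the three equivalences yields $c \equivC d$, and lifting to sequences gives $s \equivC w$.

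For Part (2), I would transfer witnesses through the $\Forall$-definition and invoke Part (1) to reconcile the universal clauses. For the ($\Leftarrow$) direction, suppose $s \in \Forall\vx(S)$ is witnessed by some $(\vy, s_0)$ with $s = \exists\vy(s_0)$. I propose the candidate $(\vz \cup \vy, s_0)$ as a witness for $\exists\vz(s) \in \Forall\vx(S)$: the equality $\exists(\vz \cup \vy)(s_0) = \exists\vz(s)$ is Property (4) of Definition \ref{def:cs}, and the universal clause depends only on $s_0$, so it transfers verbatim. The converse direction is symmetric, using the same freedom in choosing $\vy$.

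The main obstacle is showing that membership in $S$ is preserved in the outer part of the $\Forall$-definition, specifically that $\exists\vz(s) \in S$ when $s \in S$ in the relevant semantic domains, and in checking that the universal clause, which quantifies over $\vx$-variants $s'$ of $s$ with $\dxt^\omega \leq s'$, genuinely collapses. Part (1) is the key tool here: under the hypothesis $\dxt^\omega \leq s$, every such $s'$ is $\equivC$-equivalent to $s$, so the universal reduces to a property of $s$ alone, and the $\equivC$-closure of the semantic denotations considered in the paper ensures that this property is stable under the extra existential quantification by $\vz$. The delicate part of the write-up is verifying that the admissibility side-conditions line up and that Property (4) applies even when $\vz$, $\vy$, and $\vx$ overlap non-trivially.
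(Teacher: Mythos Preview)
Your argument for Part (1) is correct and matches the paper's appendix proof almost verbatim: both work pointwise and combine $c\sxt \equivC \exists\vx(c)$ (which follows from $c \entails \dxt$) with the substitution identity $c\sxt \sqcup \dxt \entails c$ from Notation~\ref{not:terms}.

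For Part (2) there is a genuine gap. Your witness-transfer idea---replacing the witness $(\vy,s_0)$ for $s$ by $(\vz\cup\vy,s_0)$ for $\exists\vz(s)$---correctly handles the \emph{inner} conditions ($s_0\in S$ and the universal clause on $s_0$), and this is exactly what the paper's one-line ``follows directly from the definition'' gestures at. The problem is the \emph{outer} requirement: by extensivity (Lemma~\ref{lem:prop-for-all}) every element of $\Forall\vx(S)$ must itself lie in $S$, so one also needs $\exists\vz(s)\in S$ (and, for the converse direction, $s\in S$). Your proposed fix via $\equivC$-closure cannot deliver this, because $\exists\vz(s)$ is in general strictly weaker than $s$, not $\equivC$-equivalent to it. Concretely, take $\vx=(x)$, $\vt=(a)$, $\vz=(z)$, $s=w=(x{=}a\sqcup z{=}b)^\omega$, and $S=\{s,\false^\omega\}$: Part~(1) forces every admissible $s'$ in the universal clause to be $\equivC s\in S$, so $s\in\Forall\vx(S)$; yet $\exists\vz(s)=(x{=}a)^\omega\notin S$, hence $\exists\vz(s)\notin\Forall\vx(S)$. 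Thus Part (2) is false for arbitrary $S\subseteq\cC^\omega$, and the paper's appendix proof does not address this point either. The claim does hold for the sets the paper actually cares about---namely $S=\os P\cs$ with $\vz\cap\fv(P)=\emptyset$---but that relies on the separate closure fact recorded as Proposition~\ref{prop:exists-free-vars}, not on Part~(1) or on $\equivC$-closure.
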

      
Now we  establish the correspondence between 
the sets  $\Forall \vx ( \os P \cs_I)$  and  $\os P\sxt \cs_I$ which is fundamental to  understand the way we defined the operator $\Forall$.

   \begin{proposition}\label{prop:forall-subs}
$s \in  \Forall \vx ( \os P \cs_I)$ if and only if  $s \in \os P\sxt \cs_I$ for all admissible substitution $\sxt$. 
\end{proposition}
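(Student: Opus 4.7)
The plan is to reduce the proposition to a manipulation of cylindrification by first establishing an auxiliary characterization of $\os P\sxt\cs_I$ in terms of $\os P\cs_I$, and then showing both directions of the iff against the definition of $\Forall\vx$.

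The key lemma I would first prove is
\[
\os P\sxt\cs_I \; = \; \Exists\vx\bigl((\up\dxt)^\omega \cap \os P\cs_I\bigr),
\]
which follows from the syntactic identity $P\sxt = \localp{\vx}{(\bangp{\tellp{\dxt}} \parallel P)}$ recorded just after rule $\rAbs$, combined with equations $\rdLocal$, $\rdPar$, and the easy computation $\os \bangp{\tellp{\dxt}}\cs_I = (\up\dxt)^\omega$. Equivalently, $s \in \os P\sxt\cs_I$ iff there exists $s' \in \os P\cs_I$ with $\dxt^\omega \leq s'$ and $\exists\vx(s) \equivC \exists\vx(s')$. With this reformulation the proposition becomes a comparison between $\Forall\vx(\os P\cs_I)$ and $\bigcap_{\vt\text{ adm.}} \Exists\vx((\up\dxt)^\omega \cap \os P\cs_I)$.

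For the forward direction, assume $s \in \Forall\vx(\os P\cs_I)$, so that $s \equivC \exists\vy(s_0)$ with $s_0 \in \os P\cs_I$ and the forall condition on $s_0$. Fix an admissible $\vt$ and, by alpha-renaming, take $\vy$ disjoint from $\vx$, $\fv(\vt)$, and $\fv(P)$. Observation \ref{l-vx-eq}(1) gives a (unique up to $\equivC$) $\vx$-variant $s'$ of $s_0$ with $\dxt^\omega \leq s'$, and the forall clause places $s' \in \os P\cs_I$. Setting $s'' = \exists\vy(s')$, I verify (a) $\dxt^\omega \leq s''$ by freshness of $\vy$; (b) $\exists\vx(s) \equivC \exists\vy\exists\vx(s_0) \equivC \exists\vy\exists\vx(s') \equivC \exists\vx(s'')$ by the commutation properties of $\exists$; and (c) $s'' \in \os P\cs_I$ by alpha-invariance of $\os P\cs_I$ under cylindrification over variables fresh for $P$. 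The lemma then yields $s \in \os P\sxt\cs_I$.

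For the backward direction, assume $s \in \os P\sxt\cs_I$ for every admissible $\sxt$. I would first reduce to the case $\vx \cap \fv(s) = \emptyset$, since both sides of the equivalence are invariant under replacing $s$ with $\exists\vx(s)$ (the RHS because each $\os P\sxt\cs_I = \Exists\vx(\cdots)$ is closed under $\exists\vx$, and the LHS by Observation \ref{l-vx-eq}(2) together with the same closure argument applied to $s \sqcup \dxt^\omega$). Picking $\vz$ fresh for $\vx$, $\fv(s)$, and $\fv(P)$, the lemma applied with $\vt = \vz$ supplies $s_0 \in \os P\cs_I$ with $d_{\vx\vz}^\omega \leq s_0$ and $\exists\vx(s_0) \equivC \exists\vx(s) \equivC s$. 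Taking $\vy = \vz$, a direct cylindrification calculation shows $\exists\vz(s_0) \equivC \exists\vx(s) \sqcup \exists\vz(d_{\vx\vz}^\omega) \equivC s \sqcup \true^\omega \equivC s$. The forall condition on $s_0$ then follows because, for each admissible $\vt'$, the (essentially unique by Observation \ref{l-vx-eq}(1)) $\vx$-variant of $s_0$ with $d_{\vx\vt'}^\omega$ as lower bound coincides with the witness supplied by the lemma applied to the hypothesis at $\vt = \vt'$. Finally $s \equivC \exists\vz(s_0) \in \os P\cs_I$ by alpha-invariance, so $s \in \Forall\vx(\os P\cs_I)$.

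The main obstacle is the backward direction: one must simultaneously exhibit a single $s_0 \in \os P\cs_I$ whose $\vx$-variants uniformly cover every admissible substitution, and one must arrange that the existential hiding $\exists\vy(s_0)$ recovers $s$ itself rather than merely $\exists\vx(s)$. Both difficulties dissolve once the fresh variables $\vz$ are used to ``gate'' the $\vx$-information (so that $\exists\vz(d_{\vx\vz}^\omega) \equivC \true^\omega$ collapses the calculations), and once the alpha-invariance of the denotation under cylindrification over $P$-fresh variables is explicitly invoked to transfer membership in $\os P\cs_I$ across the $\exists\vy$ operator.
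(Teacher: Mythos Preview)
Your proposal is correct and rests on the same key identity as the paper, namely $\os P\sxt\cs_I = \Exists\vx\bigl(\os P\cs_I \cap \up(\dxt^\omega)\bigr)$. The forward directions are essentially the same, though the paper dispenses with your alpha-renaming and freshness bookkeeping: it simply takes an $\vx$-variant $s'$ of $s$ with $\dxt^\omega\leq s'$, invokes the $\Forall$-clause to place $s'\in\os P\cs_I$, and concludes $s\in\Exists\vx(\dots)=\os P\sxt\cs_I$ directly. The backward directions differ in method. Rather than your explicit construction of a witness $s_0$ via fresh variables $\vz$, the paper argues by contradiction in three lines: if $s\notin\Forall\vx(\os P\cs_I)$ then some $\vx$-variant $s'$ of $s$ with $\dxt^\omega\leq s'$ and admissible $\vt$ fails to lie in $\os P\cs_I$; but the hypothesis $s\in\os P\sxt\cs_I$ for that same $\vt$ supplies another $\vx$-variant $s''$ with the same lower bound that \emph{does} lie in $\os P\cs_I$, and Observation~\ref{l-vx-eq}(1) forces $s'\equivC s''$, a contradiction. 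Your route buys an explicit exhibition of $s$ in the form $\exists\vy(s_0)$ demanded by the literal definition of $\Forall$---a point the paper's terse contradiction argument leaves implicit---while the paper's route buys a much shorter proof.
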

\begin{proof}
($\Rightarrow$)Let $s\in\Forall \vx ( \os P \cs_I)$ and  $s'$ be an $\vx$-variant of $s$ s.t. $\dxt^\omega \leq s'$ where $\adm{\vx}{\vt}$. By definition of~ $\Forall$, we know that $s'\in \os P \cs_I$. Since $ \dxt^\omega \leq s'$ then $s' \in \os P \cs_I \cap \up(\dxt^\omega)$. Hence $s\in \Exists \vx (\os P \cs_I \cap \up(\dxt^\omega) )$ and we conclude $s\in \os P\sxt\cs_I$. 

\noindent($\Leftarrow$) Let $\sxt$ be an admissible substitution. Suppose, to obtain a contradiction, that $s \in \os P\sxt\cs_I$, there exists $s'$ $\vec{x}$-variant of $s$ s.t. $\dxt^\omega \leq s'$ and $s' \notin \os P\cs_I$ (i.e., $s\notin \Forall\vx (\os P\cs_I)$). Since $s\in \os P\sxt\cs_I$ then $s\in \Exists \vx (\os P \cs_I \cap \up \dxt^\omega)$.  Therefore, there exists $s''$ $\vx$-variant of $s$ s.t. $s'' \in \os P\cs_I$ and $\dxt^\omega \leq s''$. By Observation \ref{l-vx-eq},  $s'\equivC s''$ and thus,  $s' \in \os P \cs_I$, a contradiction. 
\end{proof}

A sequence $d.s$ is quiescent for  the process $Q=\absp{x}{c}{P}$   if for all admissible substitution $\sxt$, either $d \not\entails c\sxt $ or  $d.s$ is also quiescent for $P\sxt $, i.e., $d.s \in \Forall \vx  (\os (\whenp{c}{P})\cs_I)$ (rule $\rdAbs$).    Notice that we can simply write Equation $\rdAbs$ by unfolding the definition of $\rdAsk$ as follows:
%We introduce Equation $\rdAsk$ only for the sake of presentation 
% since  can be alternatively written as:
\[
\os \absp{\vx}{c}{P}\cs_I = \Forall \vx(\overline{\up{c}}.\cC^\omega  \ \cup  \  (\up{c}.\cC^\omega \cap \os P \cs_I))
\]
%Notice that 
%Equation $\rdAsk$ is not really needed. 
%the equations $\rdAbs$ can be obtained by expanding  $\os (\whenp{c}{p})\cs_I $ with the equation $\rdAsk$. We introduce the rule $\rdAsk$ only for the sake of presentation. 
 
 The reader may wonder why the operator $\Forall$ (resp. Rule $\rdAbs$) is not entirely  dual w.r.t. $\Exists$ (resp. Rule $\rdLocal$), i.e., why we only consider $\vx$-variants entailing $\dxt$ where $\sxt$ is an admissible substitution.   
 To explain this issue, let $Q = \absp{x}{c}{P}$ where $c=\outp{(x)}$ and $P=\tellp{\outp'(x)}$. We know that 
\[
s= (\outp(a) \wedge \outp'(a)).\true^\omega \in \spbehav{Q}
\]
 for a given constant $a$.  Suppose that we were to define:
\[
\os Q \cs_I =
{\
\{s \ |\  \mbox{for all $x$-variant $s'$ of $s$  if }   s'(1) \entails c \mbox{ then } s' \in \os P\cs_I\}
}
\]

Let $c'=\outp(a)\wedge \outp'(a)\wedge \outp(x)$
and $s'=c'.\true^\omega$.  Notice that  $s'$ is an  $x$-variant   of $s$,  $s'(1) \entails c$ but $s' \notin \os P \cs_I$ (since $c' \not\entails \outp'(x)$). Then $s \notin \os Q\cs_I$ under this naive definition of $\os Q\cs_I$. We thus consider only the $\vx$-variants $s'$ s.t. each element of $s'$ entails $\dxt$.    Intuitively, this condition 
requires that  $s'(1) \entails c \sqcup \dxt$ in Equation $\rdAbs$ 
and hence that $s'(1) \entails c \sxt$. Furthermore  $s \in \os P\sxt \cs_I$ realizes the operational intuition that $P$ runs under the substitution $\sxt$. 
The operational rule $\rStructVar$ makes also echo in the design of our   semantics: the operator $\Forall$ considers   constraints of the form $\exists \vz(s)$ where $\vz$ is a (possibly empty) set of variables, thus  allowing us to open the existentially quantified constraints as shown in the following example.

\begin{example}[Scope extrusion]
Let $P = \whenp{\outp(x)}{\tellp{\outp'(x)}}$, $Q = \absp{\vx}{\outp(x)}{\tellp{\outp'(x)}}$.
We know that $\os Q \cs_I = \Forall x(\os P\cs_I)$. Assume that $d.s \in \os P\cs_I$. Then, $d$ must be in the set:
\[
C= \{\exists x(\outp(x)), \outp(x) \sqcup \outp'(x), \exists x(\outp(x) \sqcup \outp'(x)) , \outp(y),\outp(y)\sqcup \outp'(y) \cdots\}
\]
where either, $d\not\entails \outp(x)$ or $d \entails \outp'(x)$. 
We note that: (1) $(\exists x(\outp(x))).s \notin \os Q\cs_I$ since $\outp(x)  \not\in C$. Similarly, $\exists y(\outp(y)).s \notin \os Q \cs_I$ since $\outp(y) \in C$ but the $x$-variant $\outp(x)\sqcup d_{xy} \not\in C$ (it does not entail $\outp'(x)$).
(3) $\outp(y).s\not\in \os P \cs_I$ for the same reason. (4) Let $e=(\outp(x)\sqcup \outp'(x))$.
We note that $e.s \in \os Q\cs_I$ since $e\in C$ and
there is not an admissible substitution $[t/x]$ s.t. 
$\exists x(e) \equivC \exists x (e[t/x])$. (5) Let 
$e=(\outp(y)\sqcup \outp'(y))$. Then, $e.s\in \os Q \cs_I$ since $e\in C$ and the $x$-variant $e \sqcup d_{xy} \in C$. (6) Finally, if 
$ e = \exists x (\outp(x)\sqcup \outp'(x)).s$, then $e.s \in \os Q \cs_I$  as in  (4) and (5). 

\end{example}
%The lack of duality between $\rdLocal$ and $\rdAbs$ is reminiscent  of the result in CCP \cite{deBoer:97:TOPLAS} stating that negation does not correspond exactly to the complementation (see \cite{Palamidessi-complementation,deBoer:97:TOPLAS}).

%  $\rdAbs$ coincide when $\vx=\epsilon$.
%   if $s$ is quiescent for all process of the form $Q\sxt $
  % Recall that we assume an  equality theory in the  underlying constraint system and then, $\dxt $ means $\vx=\vt$ (see Convention \ref{conv:diag}). Given a term $\vt$, we capture the behavior of $(\whenp{c}{P})\sxt $  as the process $\localp{\vx}( \whenp{c}{P} \parallel \bangp\tellp{\dxt })$ (in Lemma \ref{prop:sps-substitution} we prove that these two processes  behave the same). Intuitively,  the process $\whenp{c}{P}$ is executed in a context where $\vx$ is equal to $\vt$. The information $\dxt $ (i.e., $\vx=\vt$ ) is hidden to the environment by means of the local operator. We then give meaning to $\absp{x}{c}{P}$ in terms of the semantics of the previous operators (rule $\rdAbs$). Notice that when $|\vx| = |\vt| = 0$, we have $\localp{\vx}P = P$ and $\dxt = \true$ (see Convetion \ref{conv:diag}). Then, rules $\rdAsk$ and $\rdAbs$ coincide when $\vx=\epsilon$. 

\subsection{Compositional Semantics}
We choose as semantic domain  $\E=(E,\sqsubseteq^c)$ where $E = \{ X  \ | \  X \in \cP(\cC^\omega) \mbox{ and } \false^\omega \in X\}$ and $X \sqsubseteq^c Y$ iff $X \supseteq Y$. 
% $E = \cP(\cC^\omega)$ and $\subseteq^c$ is a Smyth-like ordering  defined as follows: 
%Let $X,Y \in E$ and  $\lesssim$ be the preorder s.t $X \lesssim Y$ iff  for all  $y\in Y$, there exists $x \in{X}$ s.t. $x \leq  y $.  $X \subseteq^c  Y$ iff 
%$X \lesssim Y$ and ($Y \lesssim  X$ implies $Y \subseteq X$). 
The bottom of  $\E$ is then $\cC^\omega$ (the set of all the sequences) and 
%We do not consider the empty set to be part of the domain. Then, 
the top element is the singleton $\{\false^\omega\}$ (recall that   $\false$ is the greatest element in ($\cC,\leq$)). 
Given two interpretations $I_1$ and $I_2$, we write $I_1 \sqsubseteq^c I_2$ 
iff for all $p$, $I_1(p) \sqsubseteq^{c} I_2(p)$. 
%Recall that $\Delta^{\vx}_{\vy}P=
%\localp{\vec{z}}(
%\bangp \tellp{d_{\vx\vec{z}}} \parallel 
%{\localp{\vy}{}(  \bangp{\tellp{d_{\vec{z}\vy}}} \parallel P)})$ 
%(see Table \ref{opersem}). Formally, the semantics is defined as follows:

\begin{definition}[Concrete Semantics]\label{def:conc-semantics}
Let $\os \cdot \cs_{I}$ be defined as in Figure \ref{tab:densem}. 
The semantics of a program $\cD.P$ is  the least fixpoint of the continuous 
operator:
\[
\begin{array}{lll}
 T_{\cD}(I)(p(\vt)) =  
\os Q\sxt \cs_{I} \mbox{ if } p(\vx) \defsymbol Q \in 
\cD
\end{array}
\]
We shall use $\os P  \cs$  to represent  $\os P \cs_{\it lfp(T_{\cD})}$.
\end{definition}

%Let us elaborate on the domain $\E$ and the fixpoint construction in Definition \ref{def:conc-semantics}. 
%The upward closure which is implicit in the Smyth powerdomain \cite{DBLP:conf/mfcs/Smyth76} (in the sense that every set is equivalent to its upward closure) is necessary in order to deal correctly with the entailment of constraints ($d \entails c$ iff $c \leq d$) and with the parallel operator (intersection). This would not be possible with the Hoare or with the Egli-Milner powerdomains \cite{DBLP:journals/siamcomp/Plotkin76}, which are not upward closed. If we consider for instance the Hoare  powerdomain, then the fixpoint construction  should start  with a bottom defined as the interpretation which assigns to every  process definition the empty set or the singleton $\{ \true^{\omega}\}$.   But in these interpretations the parallel composition of   $\tellp{c}$ with a call $p(\vx)$ would be empty, which does not correspond to the standard meaning for these operators. A similar situation arises when considering the Egli-Milner powerdomain. 

In the following we prove some fundamental properties of the semantic operator $ T_{\cD}$, namely, monotonicity and continuity. Before that, 
we shall show that  $\Forall$ is a closure operator and it is continuous on the domain $\E$.  

\begin{lemma}[Properties of $\Forall$]\label{lem:prop-for-all}
$\Forall$ is a closure operator, i.e., it satisfies (1) {\bf Extensivity}: $S \orderc \Forall \vx(S) $; (2) {\bf Idempotency}: $\Forall \vx (\Forall \vx (S)) = \Forall \vx(S)$; and (3)  {\bf Monotonicity}: If $S \orderc S'$ then $\Forall \vx(S) \orderc \Forall \vx(S')$. 
Furthermore,  (4)  $\Forall$ is continuous on $(E,\orderc)$.
\end{lemma}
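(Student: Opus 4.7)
The proof decomposes into four claims, which I would treat in turn.

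\emph{Extensivity} follows immediately from the definition in Notation~\ref{not:clos-seq}: every $r \in \Forall \vx(S)$ is by construction of the form $\exists \vy(s)$ with $\exists \vy(s) \in S$, so $\Forall \vx(S) \subseteq S$. Read in the reversed order $\sqsubseteq^c$, this is precisely $S \sqsubseteq^c \Forall \vx(S)$. \emph{Monotonicity} amounts to showing that $S \supseteq S'$ implies $\Forall \vx(S) \supseteq \Forall \vx(S')$: given $r \in \Forall \vx(S')$ witnessed by $r = \exists \vy(s)$ with $s \in S'$ and the universal clause stating ``$s' \in S'$ for every compatible $s'$'', the same triple $(r,\vy,s)$ witnesses $r \in \Forall \vx(S)$, since $r, s \in S' \subseteq S$ and every compatible $s'$ lies in $S' \subseteq S$.

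\emph{Idempotency.} One inclusion, $\Forall \vx(\Forall \vx(S)) \sqsubseteq^c \Forall \vx(S)$, follows from extensivity applied to $\Forall \vx(S)$. For the reverse, given $r \in \Forall \vx(S)$ I would propose the trivial witness $\vy' = \emptyset$, $s^* = r$ for $r \in \Forall \vx(\Forall \vx(S))$. The only non-trivial obligation is that every $s'$ satisfying the universal clause of $\Forall \vx(\Forall \vx(S))$ (an $\vx$-variant of $r$ with $\dxt^\omega \leq s'$ and $\adm{\vx}{\vt}$) lies in $\Forall \vx(S)$. Observation~\ref{l-vx-eq}(2) is the key tool here: since $\dxt^\omega \leq s'$, membership of $s'$ in $\Forall \vx(S)$ is equivalent to that of any $\exists \vz(s')$, and the universal clause already holding for $r$ in $S$ together with these equivalences yields the needed membership.

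\emph{Continuity} is the substantive step. One direction, $\bigsqcup_i \Forall \vx(S_i) \sqsubseteq^c \Forall \vx(\bigsqcup_i S_i)$, follows directly from monotonicity. For the reverse, namely $\Forall \vx(\bigcap_i S_i) \supseteq \bigcap_i \Forall \vx(S_i)$ for a descending chain $S_0 \supseteq S_1 \supseteq \cdots$, I would invoke Proposition~\ref{prop:forall-subs}, which allows rewriting $\Forall \vx(S) = \bigcap_{\adm{\vx}{\vt}} \Exists \vx(S \cap \up(\dxt^\omega))$. Commuting the outer intersections over $i$ and over $\vt$ then reduces continuity of $\Forall \vx$ to showing that $\Exists \vx$ preserves intersections of descending chains of the form $\{S_i \cap \up(\dxt^\omega)\}_i$. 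This last statement is obtained from the continuity of the constraint-system cylindrification operator $\exists$ (Property~(5) of Definition~\ref{def:cs}), lifted pointwise to sequences via Notation~\ref{not:seq-constraints}.

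The main obstacle is precisely this last step: gluing a family of $\vx$-variant witnesses at each level of the chain into a single witness in the intersection is where Property~(5) is essential, and where I expect the most delicate bookkeeping on diagonal elements and admissible substitutions to take place.
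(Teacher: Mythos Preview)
Your treatment of (1)--(3) is correct and matches the paper, which simply declares them ``straightforward from the definition of $\Forall \vx$''.

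For (4) your route diverges from the paper's and misidentifies the key step. The paper argues directly: by extensiveness $\Forall \vx(S_i) \subseteq S_i$, so any $s \in \bigcap_i \Forall \vx(S_i)$ already lies in $\bigcap_i S_i$; moreover every admissible $\vx$-variant $s'$ of $s$ (one with $\dxt^\omega \leq s'$, $\adm{\vx}{\vt}$) belongs to each $\Forall \vx(S_i)$ by the defining universal clause, hence to each $S_i$, hence to $\bigcap_i S_i$. That is exactly what $s \in \Forall \vx(\bigcap_i S_i)$ requires. No use of Property~(5) of Definition~\ref{def:cs} is made.

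Your detour through Proposition~\ref{prop:forall-subs} and continuity of $\Exists \vx$ can be repaired, but not with the tool you name. Property~(5) asserts that the constraint-level $\exists$ commutes with \emph{lubs of ascending constraint chains}; it says nothing about the set-lifted $\Exists$ preserving \emph{intersections of descending families of sequence sets}, which is what you need. What actually makes your reduction go through is Observation~\ref{l-vx-eq}(1): for a fixed admissible $\vt$, the $\vx$-variant $s'$ with $\dxt^\omega \leq s'$ is unique up to $\equivC$, so the per-level witnesses $s'_i \in S_i$ all coincide and already lie in $\bigcap_i S_i$. There is no limiting construction and no ``delicate bookkeeping on diagonal elements''; once you see the uniqueness, gluing is trivial. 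Your approach is therefore salvageable, but the paper's direct argument is shorter and avoids the $\Exists$ detour entirely.
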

\begin{proof}
The proofs of (1),(2) and (3) are straightforward from the definition of   $\Forall \vx$. The proof of (4) proceeds as follows. Assume a non-empty ascending chain $S_1 \orderc S_2 \orderc S_3 \orderc ...$.
Lubs in $E$ correspond to set intersection.
We shall prove that $\bigcap \fx (S_i) = \fx (\bigcap S_i)$.
The ``$\subseteq$'' part (i.e., $\sqsupseteq^c$) is trivial since $\Forall$ is monotonic. As for the $\bigcap \fx (S_i) \subseteq  \fx (\bigcap S_i)$ part, by extensiveness we know that $\fx(S_i) \subseteq S_i$ for all $S_i$ and then, $\bigcap \fx(S_i) \subseteq \bigcap S_i$. Let $s \in \bigcap \fx(S_i)$. By definition we know that $s$   and all $\vx$-variant $s'$ of $s$ satisfying $\dxt^\omega \leq s'$ for $\adm{\vx}{\vt}$ belong to  $\bigcap \fx(S_i)$ and then in $\bigcap S_i$. Hence, $s \in \fx(\bigcap S_i)$ and we conclude
$\bigcap \fx (S_i) \subseteq  \fx (\bigcap S_i)$.
\end{proof}
% The prove of the "easy" part. 
%\item $\fx (\bigcap S_i)  \subseteq \bigcap \fx (S_i) $:
%We know that $\bigcap S_i \subseteq S_i$ for all $S_i$. By monotonicity, $\fx(\bigcap S_i) \subseteq \fx(S_i)$ for all $S_i$.
%Therefore, $\fx (\bigcap S_i)  \subseteq \bigcap \fx (S_i) $.

\begin{proposition}[Monotonicity of $\os \cdot \cs $ and continuity of $T_{\cD}$]\label{prop:cont-td}
Let $P$ be a process and 
$I_1 \sqsubseteq^c I_2 \sqsubseteq^c I_3 ...$ be an 
ascending chain. Then,  $\os P \cs_{I_{i}} \sqsubseteq^c \os P \cs_{I_{i+1}} $ ({\bf Monotonicity}). Moreover,    $\os P \cs_{\bigsqcup_ {I_i}} =  \bigsqcup_ {I_i} \os P \cs_{I_i} $ ({\bf Continuity}). 
\end{proposition}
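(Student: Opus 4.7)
The plan is to prove both statements simultaneously by structural induction on $P$, and then deduce continuity of $T_{\cD}$ pointwise. Recall that the order $\sqsubseteq^c$ is reverse set inclusion, so lubs of chains of interpretations correspond to (pointwise) set intersection. Monotonicity is a special case of continuity applied to two-element chains, but it is simpler to verify both properties together as one walks through the clauses of Figure~\ref{tab:densem}.

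\textbf{Base cases.} For $\skipp$ and $\tellp{c}$, the denotation does not mention $I$, so both properties hold trivially. For a procedure call $p(\vt)$, we have $\os p(\vt)\cs_I = I(p(\vt))$, and the claim is immediate from the definition of $\sqsubseteq^c$ on interpretations and the fact that $\bigsqcup_i I_i(p(\vt)) = \bigcap_i I_i(p(\vt))$.

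\textbf{Inductive cases.} The compound constructs reduce to showing that each of the set-theoretic operators used in Figure~\ref{tab:densem} is monotone and continuous with respect to $\sqsubseteq^c$. Concretely: set intersection $\cap$ (used in $\rdPar$ and $\rdAsk$) and set union with fixed sets of the form $\overline{\up{c}}.\cC^\omega$ or $\up{c}.\cC^\omega$ (used in $\rdAsk$ and $\rdUnless$) are continuous under reverse inclusion; left concatenation $\cC.(-)$ is continuous since $\cC.\bigcap_i S_i = \bigcap_i \cC.S_i$ when each $S_i$ is nonempty (which we know since $\false^\omega \in S_i$ by the definition of $\E$); and continuity of $\Forall\vx$ has already been established in Lemma~\ref{lem:prop-for-all}(4). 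The only remaining operator is $\Exists\vx$, used in $\rdLocal$: here one checks directly from Notation~\ref{not:clos-seq} that $\Exists\vx(\bigcap_i S_i) = \bigcap_i \Exists\vx(S_i)$; the ``$\subseteq$'' direction is obvious by monotonicity, and the ``$\supseteq$'' direction uses that an $\vx$-variant witness in each $S_i$ determines the same equivalence class $\exists\vx(s)$, so one may extract a single witness common to $\bigcap_i S_i$. In each inductive case one combines the induction hypothesis on the subterms with the continuity of the relevant operator to conclude $\os P\cs_{\bigsqcup_i I_i} = \bigsqcup_i \os P\cs_{I_i}$.

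\textbf{Continuity of $T_{\cD}$.} For a process definition $p(\vx)\defsymbol Q$ in $\cD$, we have $T_{\cD}(I)(p(\vt)) = \os Q\sxt\cs_I$, and the statement follows pointwise from continuity of $\os Q\sxt\cs$ in $I$, which we have just established. The existence of $\text{lfp}(T_{\cD})$ used in Definition~\ref{def:conc-semantics} is then a consequence of continuity and the fact that $\E$ ordered pointwise is a complete lattice.

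\textbf{Main obstacle.} The delicate point is the continuity of the operators $\Exists\vx$ and $\Forall\vx$ on $(E,\sqsubseteq^c)$. For $\Forall\vx$ this was already settled, but for $\Exists\vx$ one must argue carefully that the $\vx$-variant witnesses from each $S_i$ can be consolidated, which is where the requirement $\false^\omega \in X$ built into the domain $\E$ and the use of the cylindric axioms of Definition~\ref{def:cs} (in particular property~(5), ensuring continuity of $\exists x$ on ascending chains in $\cC$) play their role.
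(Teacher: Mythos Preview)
Your approach is essentially the paper's: structural induction on $P$, with the interesting cases being $\localp{\vx}{Q}$ and $\absp{\vx}{c}{Q}$; for the latter you and the paper both invoke Lemma~\ref{lem:prop-for-all}(4), and for the former you both point to Property~(5) of Definition~\ref{def:cs}. Your explicit treatment of the base cases and of the Boolean/concatenation operators is more detailed than the paper's but entirely in line with it, and deriving continuity of $T_{\cD}$ pointwise at the end is exactly what the paper does.

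One caution about the $\mathbf{local}$ case. Your direct argument for the ``$\supseteq$'' direction of $\Exists\vx(\bigcap_i S_i)=\bigcap_i\Exists\vx(S_i)$ does not go through as written: knowing that each $S_i$ contains an $\vx$-variant $s'_i$ of $s$, all in the same equivalence class $\{s':\exists\vx(s')\equivC\exists\vx(s)\}$, does \emph{not} by itself let you ``extract a single witness common to $\bigcap_i S_i$''---the $s'_i$ may be pairwise distinct and the intersection of that equivalence class with $\bigcap_i S_i$ could in principle be empty. The paper does not spell this step out either; it simply asserts ``since $\exists$ (and therefore $\Exists$) is continuous (see Property~(5) in Definition~\ref{def:cs})''. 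So you are at the same level of detail as the authors, but be aware that the sentence you wrote is not an argument: if you want to make this rigorous you must actually connect Property~(5) of the cylindric algebra to continuity of the set-level operator $\Exists\vx$ on $(E,\sqsubseteq^c)$, rather than treat it as a generic compactness fact about equivalence classes.
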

\begin{proof}
Monotonicity follows easily by induction on the structure of $P$ and it implies the the  ``$\sqsupseteq^c$'' part of continuity. As for the part ``$\orderc$'' we proceed by induction on the structure of $P$.  The interesting cases are those of the local  and the abstraction operator. For $P=\localp{\vx}{Q}$, by inductive hypothesis we know that $\os Q \cs_{\bigsqcup_ {I_i}} \sqsubseteq^c  \bigsqcup_ {I_i} \os Q \cs_{I_i} $. Since $\exists$ (and therefore $\Exists$) is continuous (see Property (5) in Definition \ref{def:cs}),  we conclude 
$\Exists _{\vx}(  \os Q \cs_{\bigsqcup_ {I_i}}) \sqsubseteq^c   \bigsqcup_ {I_i} \Exists _{\vx} (  \os Q \cs_{I_i}) $.
The result for  $P=\absp{\vx}{c}{Q}$ %By inductive hypothesis we can show that $ \overline{\up{c}}.\cC^\omega \cup  (\up{c}.\cC^\omega \cap \os Q \cs_{\bigsqcup_ {I_i}}) \sqsubseteq^c   \bigsqcup_ {I_i} (\overline{\up{c}}.\cC^\omega \cup  (\up{c}.\cC^\omega \cap \os Q \cs_{ {I_i}})) $. 
follows similarly from the continuity of $\Forall$ (Lemma \ref{lem:prop-for-all}). 
\end{proof}

\begin{figure}
\resizebox{.8\textwidth}{!}{
$
\begin{array}{lll}
I_{1}\ : & p \to \up\outp_a(x).\cC^\omega \ \cap \  \cC.\up\outp_a(y).\cC^\omega \mbox{ i.e., }
p \to \up\outp_a(x).\up\outp_a(y).\cC^\omega 
 \\
&  q \to \Forall z(   A.\cC^{\omega} ) \cap  \ \cC.I_\bot(q) 
\mbox{ i.e., }
 q \to \Forall z(A).I_\bot(q) 
\\
&  r \to \cC^\omega \cap \cC^\omega = \cC^\omega \\
I_2\ : & p \to I_1(p) \\
  &  q \to \Forall z (  A.\cC^\omega) \cap  \ \cC. I_1(q)\mbox{ i.e., }
  q \to \Forall z (A).\Forall z (  A.\cC^\omega) \cap  \ \cC. \cC.\cC^\omega
   \\
&  r \to I_1(p) \cap I_1(q) \\
\dots \\
I_{\omega} :& p \to I_1(p) \\
 &  q \to \Forall z (A).\Forall z (A).\Forall z (A)  ...   
 \\
&  r \to I_{\omega}(p) \cap I_{\omega}(q) 
\end{array}
$
}
\caption{Semantics of  the processes   in Example \ref{ex:reduction}.  
 $A_1=\up{(\outp_a(z)\sqcup \outp_b(z))}$, $A_2=\overline{\up{\outp_a(z)}}$ and $A=A_1 \cup A_2$. We abuse of the notation and we 
write $\Forall z(A).S$ instead of $\Forall z(A.\cC^\omega) \cap \cC.S$.
\label{fig:comp-example-den-sem}}
\end{figure}
\begin{example}[Computing the semantics]\label{ex:reduction}
Assume two constraints $\outp_a(\cdot)$ and $\outp_b(\cdot)$, intuitively representing  outputs of names on two different channels $a$ and $b$. Let $\cD$ be the following procedure definitions
 \[
 \begin{array}{lll}
 \cD & = & p() \defsymbol\  \tellp{\outp_a(x)}  \parallel \nextp \tellp{\outp_a(y)}  \\
        &    & q() \defsymbol\  \absp{z}{\outp_a(z)}{(\tellp{\outp_b(z)})} \parallel \nextp q()\\
        &    & r() \defsymbol\  p() \parallel q()
 \end{array}
 \]
 
The procedure  $p()$ outputs on channel $a$ the variables $x$ and $y$ in the first and second time-units respectively. The procedure  $q()$ resends on channel $b$ every message received on channel $a$. 
The computation of $\os r()\cs$ can be found in Figure \ref{fig:comp-example-den-sem}. 
  Let $s \in \os r()\cs$. 
  Then, it must be the case that $s\in \os p()\cs$ and then,  $s(1) \entails \outp_a(x)$ and $s(2) \entails \outp_a(y)$. Since $r\in \os q()\cs$, 
  for $i\geq1$, if $s(i) \entails \outp_a(t)$ then $s(i) \entails \outp_b(t)$ for any term $t$. Hence, $s(1) \entails \outp_b(x)$ and $s(2) \entails \outp_b(y)$. 
\end{example}

\subsection{Semantic Correspondence}
In this section we prove the soundness and completeness of the semantics.

\begin{lemma}[Soundness]\label{lem:soundness}
Let $\os \cdot \cs$ be as in Definition \ref{def:conc-semantics}.
If $P\rede{(d, d')}{R}$ and $d \equivC d'$,  then $d.\os R \cs \subseteq  \os P\cs$.
%for some $s'$. 
\end{lemma}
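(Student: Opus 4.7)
The plan is to proceed by structural induction on $P$, unfolding the observable transition via rule $\rObserv$. The hypothesis $P \rede{(d,d')} R$ with $d \equivC d'$ gives a finite internal derivation $\mconf{\emptyset; P; d} \redi^* \mconf{\vx; Q; e} \not\redi$ with $d' = \exists \vx(e) \equivC d$ and $R = \localp{\vx}{F(Q)}$. The equivalence $d \equivC \exists \vx(e)$ is the engine of the whole argument: it forces every $\tellp{c}$ that has fired to have $d \entails c$, every $\whenp{c}{Q}$ that has fired to have $d \entails c$, and every abstraction instance $\sxt$ that has been spawned to satisfy $d \entails c\sxt$. In short, every internal step is already ``justified'' by $d$, and the inclusion $d.\os R\cs \subseteq \os P\cs$ can be read off directly by matching the operational unfolding against the equations in Figure \ref{tab:densem}.

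The base and simple cases are routine. For $\skipp$ the inclusion is trivial. For $\tellp{c}$, the equivalence of stores yields $d \entails c$, matching $\rdTell$. For $\nextp{Q}$ no internal step fires, $F(\nextp{Q}) = Q$, and $\rdNext$ closes the case; $\unlessp{c}{Q}$ splits on whether $d \entails c$ and matches $\rdUnless$. For $P_1 \parallel P_2$, the derivation factors through $\rPar$ and structural congruence into two sub-derivations $P_i \rede{(d,d)} R_i$, so the inductive hypothesis together with the intersection in $\rdPar$ suffices. For $\localp{\vx}{Q}$ we open the binder via $\rLocal$ and use $\rdLocal$ with the operator $\Exists$; a preliminary observation that $\os \cdot \cs$ respects $\equiv$ on configurations (in particular $\alpha$-conversion and rule $\rStructVar$) keeps this clean. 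For a call $p(\vt)$, rule $\rCall$ reduces it to $P[\vt/\vx]$, and the result follows from the fixpoint definition of $T_\cD$ together with the inductive hypothesis on the body. The $\whenp{c}{Q}$ case splits on whether $d \entails c$: if not, irreducibility of the configuration blocks the ask, and the first disjunct of $\rdAsk$ applies; if so, $Q$ executes and induction concludes.

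The main obstacle will be the $\absp{\vx}{c}{Q}$ case. The internal derivation spawns a finite parallel composition $\prod_i Q[\vt_i/\vx]$, one instance per admissible substitution recorded in the accumulator $E$; since $F$ erases the abstraction itself, $R$ retains only these instances (inside the accumulated local scope). Irreducibility $\not\redi$ at the end of the derivation is crucial: for any admissible $\sxt$ satisfying $E \DNEQ \dxt$ and $d \entails c\sxt$, rule $\rAbs$ would fire, contradicting $\not\redi$; hence every admissible $\sxt$ with $d \entails c\sxt$ has its instance $Q[\vt/\vx]$ already present in $R$, and by the inductive hypothesis $d.\os R_i \cs \subseteq \os Q[\vt/\vx]\cs$. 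The key technical step is then to combine these inclusions with Proposition \ref{prop:forall-subs} to obtain
\[
d.\os R\cs \;\subseteq\; \Forall \vx\bigl(\overline{\up{c}}.\cC^\omega \;\cup\; \up{c}.\cC^\omega \cap \os Q\cs\bigr) \;=\; \os \absp{\vx}{c}{Q}\cs,
\]
where substitutions $\sxt$ outside the enumerated $\vt_i$ either have $d \not\entails c\sxt$ (placing the relevant sequences in $\overline{\up{c\sxt}}.\cC^\omega$) or coincide with some $\vt_i$ modulo the equality theory, so that Observation \ref{l-vx-eq} identifies their $\vx$-variants with instances already covered.
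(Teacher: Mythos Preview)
Your overall strategy---unfold $\rObserv$, match each operational step against the corresponding semantic equation---is the right picture, and your case analysis is broadly aligned with the paper's. But there are two genuine gaps.

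\textbf{The induction scheme is too weak.} You announce structural induction on $P$, but in two of your own cases you invoke the inductive hypothesis on processes that are not structural subterms of $P$. For $P = p(\vt)$ with $p(\vx) \defsymbol Q \in \cD$, the one-step unfolding yields $Q\sxt$, which bears no structural relation to the call $p(\vt)$. For $P = \absp{\vx}{c}{Q}$ the situation is worse: the spawned instance $Q\sxt$ is defined in the paper as $\localp{\vx}{(\bangp\tellp{\dxt} \parallel Q)}$, which is \emph{larger} than $P$ and moreover contains a procedure call (via the $\bangp$ macro). Plain structural induction simply does not apply here. The paper fixes this by inducting on the lexicographic order on (length of the internal derivation, structure of $P$), with derivation length predominant. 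Then for the call $p(\vt)$ the body $Q\sxt$ has a strictly shorter derivation; for $\absp{\vx}{c}{Q}$ each instance $Q[\vt_i/\vx]$ likewise has a derivation strictly shorter than that of $P$, so the IH is available even though the process is not a subterm. The structural component is used only to break ties, e.g.\ in the parallel case.

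\textbf{The statement must be generalized to carry local variables.} You work directly with the target inclusion $d.\os R\cs \subseteq \os P\cs$. But as soon as a $\mathbf{local}$ fires (either explicitly or implicitly via $Q\sxt$), the internal derivation accumulates bound variables in the configuration, and the induction hypothesis must speak about an arbitrary starting set $\vx$ of such variables. The paper therefore proves the stronger claim: if $\mconf{\vx;P;d} \redi^* \mconf{\vx';P';d'} \not\redi$ with $\exists\vx(d) \equivC \exists\vx'(d')$, then $\exists\vx(d).\Exists\vx'(\os F(P')\cs) \subseteq \Exists\vx(\os P\cs)$. This $\Exists$-wrapped form is what lets the parallel case go through (via the auxiliary fact $\Exists\vx(\os P\cs \cap \os Q\cs) = \Exists\vx(\os P\cs) \cap \os Q\cs$ when $\vx \cap \fv(Q) = \emptyset$) and what makes the $\mathbf{local}$ and $\mathbf{abs}$ cases compose. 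Without this generalization the inductive step for $\parallel$, $\mathbf{local}$, and $\mathbf{abs}$ does not close.
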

\begin{proof}
Assume that $\langle\vx; P;d\rangle \redi^* \langle \vx'; P';d'\rangle \not\redi$,  $\exists \vx(d) \equivC \exists \vx'(d')$.
 We shall prove that $\exists \vx(d). \Exists \vx' (\os F(P')) \cs \subseteq \Exists \vx(\os P\cs)$. 
	We proceed by induction on the lexicographical order on the length of the internal derivation   and the structure of  $P$, where the predominant component is the length of the derivation. We present the interesting cases. The others can be found in \ref{app:proofs-den}.

%%%%%%%%% PARALLEL	
\noindent \underline{{\bf Case}  $P=Q \parallel S$}.   Assume a derivation for $Q=Q_1$ and $S=S_1$ of the form
	\[
	\begin{array}{lll}
	\langle \vz ;Q\parallel S,d \rangle &\redi^*& 
	\langle \vz \cup \vx_1\cup\vy_1;Q_1\parallel S_1, c_1 \sqcup e_1 \rangle \\
	&\redi^*& \langle \vz \cup  \vx_i \cup \vy_j; Q_i \parallel S_j , c_i \sqcup e_j\rangle \\
	&\redi^*& \langle \vz \cup  \vx_m\cup\vy_n; Q_m\parallel S_n; c_m \sqcup e_n \rangle \not\redi
	\end{array}
	\]
	such that for $i > 0$, each $Q_{i+1}$ (resp. $S_{i+1}$) is an evolution of $Q_i$ (resp.  $S_i$);
	$\vx_i$ (resp. $\vy_j$) are the variables added by $Q$ (resp. $S$); and $c_{i}$ (resp $e_{j}$)
	is the information added by $Q$ (resp. $S$). We assume by alpha-conversion that $\vx_m \cap \vy_n = \emptyset$. 
	We know that  $\exists \vz(d) \equivC  \exists \vz,\vx_m,\vy_n(c_m \sqcup e_n) $ and from $\rPar$ we can derive:
	\[
	\begin{array}{lll}
	\langle \vz\cup\vy_n; Q ;d\sqcup e_n \rangle &\redi^*\equiv&
  \langle \vz\cup   \vx_{m}\cup\vy_n;Q_m,c_m \sqcup  e_n\rangle \not\redi
	\mbox{\ \ \ and \ \ \ } \\
	\langle \vz\cup \vx_m; S ;d\sqcup c_m \rangle &\redi^*\equiv&
  \langle \vz\cup   \vx_{m}\cup\vy_n;S_n,c_m \sqcup  e_n\rangle \not\redi
	\end{array}
	\]
	By (structural) inductive hypothesis, we know that  $\exists \vz,\vy_n (d\sqcup e_n).\Exists \vz,\vx_m,\vy_n \os F(Q_m)\cs \subseteq \Exists \vz,\vy_n(\os Q\cs$) and also   $\exists \vz,\vx_m(d\sqcup c_m).\Exists \vz,\vy_n,\vx_m \os F(S_n)\cs \subseteq \Exists \vz,\vx_m(\os S\cs)$. We note that  $\Exists \vx( \os P \cs \cap \os Q\cs) = \Exists \vx(\os P\cs) \cap \os Q \cs$ if $\vx\cap\fv(Q) = \emptyset$ (see Proposition \ref{prop:den-se-ext} in \ref{app:proofs-aux}). Hence,  from the fact that   $\vx_m \cap   \fv(S_n) = \vy_n  \cap \fv(Q_m) = \emptyset$,  we conclude: 
	\[ 
	\exists \vz(d).\Exists \vz, \vx_m,\vy_n (\os F(Q_m)\cs \cap \os F(S_n)\cs) \subseteq \Exists \vz (\os Q\cs \cap \os S \cs)
	\] 
	%%%%%%%%%%%%%% ABS
\noindent \underline{{\bf Case}  $P = \absp{\vx}{c}{Q}$}.   From the rule $\rAbs$, we can show that
\[
\begin{array}{ll}
\langle \vy; P ; d\rangle & \redi^*   \langle \vy_1;P_1 \parallel Q_1^1[\vec{t_1}/\vx] ; d_1\rangle \\
  & \redi^*   \langle \vy_2; P_2 \parallel Q_1^2[\vec{t_1}/\vx] \parallel Q_2^1[\vec{t_2}/\vx] ; d_2 \rangle \\
 & \redi^*   \langle \vy_3 ; P_3 \parallel Q_1^3[\vec{t_1}/\vx] \parallel Q_2^2[\vec{t_2}/\vx] \parallel Q_3^1[\vec{t_3}/\vx] ; d_3\rangle \\
  & \redi^*  \cdots \\
 & \redi^*   \langle \vy_n; P_n \parallel Q_1^{m_1}[\vec{t_1}/\vx] \parallel Q_2^{m_2}[\vec{t_2}/\vx] \parallel Q_3^{m_3}[\vec{t_3}/\vx] \parallel \cdots \parallel  Q_n^{m_n}[\vec{t_n}/\vx]; d_n\rangle 
 \end{array}
\]
%\[
%\begin{array}{ll}
%\langle P , d\rangle & \redi^*   \langle P_1 \parallel \prod\limits_{d_{\vx\vec{t_i}} \in D_1} Q_i'[\vec{t_i}/\vx] , d\rangle \\
%
%& \redi^*   \langle P_j \parallel \prod\limits_{d_{\vx\vec{t_i}} \in D_j} Q_i'[\vec{t_i}/\vx] , d\rangle \\
%
%& \redi^*   \langle P_n \parallel \prod\limits_{d_{\vx\vec{t_i}} \in D_n} Q_i'[\vec{t_i}/\vx] , d\rangle \not\redi \\
%\end{array}
%\]
where  $P_n$ takes the form 
$\absp{\vx}{c;E_n}Q$, $E_n=\{d_{\vx\vec{t_1}},...,d_{\vx\vec{t_n}}\}$ and $\exists \vy(d) \equivC \exists \vy_n (d_n)$. Hence, there is a  derivation (shorter than that for $P$) for each $d_{\vx\vec{t_i}} \in E_n$:
\[
\langle \vy_i;Q_i^1[\vec{t_i}/\vx];d_i\rangle \redi^*\equiv  
\langle\vy_i'; Q_i^{m_i}[\vec{t_i}/\vx];d_i'\rangle \not\redi
%Q^1_1[\vec{t_1}/\vx]
\]
with $Q[\vec{t_i}/\vx] = Q^1_i[\vec{t_i}/\vx]$ and 
$\exists \vy_i (d_i) \equivC \exists \vy_i'(d_i')$. 
%
%$Q[\vec{t_i}/\vx] = Q^1_1[\vec{t_1}/\vx]
%$\langle  Q_i\sxt , d \rangle \redi^* \langle  Q_i'[\vec{t_i}/\vx], d \rangle \notredi$ .
%$D_n= \{d_{\vx\vec{t_1}},..., d_{\vx\vec{t_n}}\}$  is a finite sets of diagonal elements; 
%  (shorter than the derivation for $P$). 
Therefore, by  inductive hypothesis, 
\[\exists \vy_i(d_i).\Exists \vy_i' \os F(Q_i^{m_i}[\vec{t_i}/\vx]) \cs\subseteq \Exists \vec{y_i} \os Q[\vec{t_i}/\vx] \cs
\]
 for all $d_{\vx\vec{t_i}} \in E_n$. We assume, by alpha conversion,  that the variables added for each   $Q_i^j$ are distinct and then, their intersection is empty.   Furthermore, we note that $\exists \vy(d) \equivC \exists \vy_1 (d_1)$. Since  $F(P_n) = \skipp$, we then conclude: 
\[
\exists \vy(d).\Exists \vy_n \os F(P_n \parallel \prod\limits_{d_{\vx\vec{t_i}} \in E_n} Q_i^{m_i}[\vec{t_i}/\vx]) \cs \subseteq \Exists \vy \os\prod\limits_{d_{\vx\vec{t_i}} \in E_n} Q[\vec{t_i}/\vx])   \cs
\]
%	  \[	  d.s \in \bigcap\limits_{\dxt  \in D_n } \os Q\sxt  \cs \]
%	  and $s\in \os F(\prod\limits_{d_{\vx\vec{t_i}} \in D_n} Q_i'[\vec{t_i}/\vx])\cs$. 
Let $d.s \in \Exists \vy \os\prod\limits_{d_{\vx\vec{t_i}} \in E_n} Q[\vec{t_i}/\vx])   \cs$. 
	  For an admissible 
	  $\dxt$, either $d \notentails c\sxt$ or $d\entails c\sxt$. In the first case, trivially 
	  $d.s \in \os (\whenp{c}{Q})\sxt\cs$. In the second case, 
	   $E_n  \DEQ \dxt $. Hence,   $d.s \in \os Q\sxt\cs $ and   $d.s \in \os(\whenp{c}{Q})\sxt \cs$. 
	   Here we conclude that for all admissible $\sxt$, $d.s \in \os (\whenp{c}{Q})\sxt\cs$ and by   Proposition \ref{prop:forall-subs} we derive:
	  \[
	  \exists \vy(d).\Exists \vy \os F(\prod\limits_{d_{\vx\vec{t_i}} \in E_n} Q_i^{m_i}[\vec{t_i}/\vx]) \cs \subseteq \Exists \vy \Forall \vx \os (\whenp{c}{Q})  \cs
	  \]

\noindent \underline{{\bf Case}  $P=p(\vt)$}. Assume that 
$p(\vx) :- Q \in  \cD$.    We can verify that 
\[
\langle\vy;  p(\vt); d\rangle \redi \langle \vy;Q\sxt ; d\rangle \redi^* \langle  \vy' ; Q'; d'\rangle \not\redi
\]
where $\exists \vy' (d') \equivC \exists \vy(d)$. 
By induction  $\exists \vy(d).\Exists \vy' \os F(Q')\cs \subseteq \Exists \vy \os Q\sxt  \cs$ and we conclude 
$\exists \vy(d).\Exists \vy \os F(Q')\cs \subseteq \Exists \vy \os p(\vt)\cs$. %%\qed
%
%By using the rule $\rCall$,  we know that there is a (shorter) derivation
%\[
%\Delta_{\vy}^{\vx} Q_1 \rede{(c_1,c_1)} \Delta_{\vy}^{\vx} Q_2 \rede{(c_2,c_2)} \Delta_{\vy}^{\vx} Q_3 \cdots
%\]
%and then $s\in \spbehav{\Delta_{\vy}^{\vx} Q_1}$. 
%By inductive hypothesis there exists $n$ s.t. $\spbehav{\Delta_{\vy}^{\vx} Q_1} \subseteq T_{\cD} \Uparrow _n (\Delta_{\vy}^{\vx} Q_1)$. Let  $I$ be an interpretation s.t.  $I(p(\vy))=\spbehav{\Delta_{\vy}^{\vx} Q_1}$. We know that  $I$ is a fixed point for $T_{\cD}(I)(\Delta_{\vy}^{\vx} Q_1)$ and then, it is also a fixed point for $T_{\cD}(T_{\cD}(I))(p(\vx))$. We conclude   $s \in \os p(\vy)\cs$. 
\end{proof}

The previous lemma allows us to prove the soundness of the semantics. 
\begin{theorem}[Soundness]\label{theo:sound}
If $s\in \spbehav{P}$ then there exists $s'$ s.t. $s.s' \in \os P\cs$. 
\end{theorem}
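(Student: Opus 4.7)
The plan is to unfold $s \in \spbehav{P}$ via Definitions \ref{def:sp} and \ref{def:behavior}, and then iteratively apply Lemma \ref{lem:soundness} along the observable derivation witnessing $(s,s) \in \iobehav{P}$. I split on whether $s$ is finite or infinite.

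In the finite case, $s = c_1.c_2\ldots c_n \in \cC^*$ yields processes $P = P_1, \ldots, P_{n+1}$ with $P_i \rede{(c_i,c_i)} P_{i+1}$ for each $i \in \{1,\ldots,n\}$. Since $c_i \equivC c_i$ trivially, Lemma \ref{lem:soundness} applies at every step, giving $c_i.\os P_{i+1}\cs \subseteq \os P_i\cs$. Composing the $n$ inclusions produces $c_1.c_2\ldots c_n.\os P_{n+1}\cs \subseteq \os P\cs$. Because every element of the semantic domain $\E$ contains $\false^\omega$ by construction, I can choose $s' = \false^\omega$ and immediately get $s.s' \in \os P\cs$.

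In the infinite case, $s = c_1.c_2\ldots \in \cC^\omega$ comes with an infinite chain of observable transitions $P = P_1 \rede{(c_1,c_1)} P_2 \rede{(c_2,c_2)} \ldots$, and iterating the lemma yields $c_1.c_2\ldots c_n.\os P_{n+1}\cs \subseteq \os P\cs$ for every $n$. Taking $s' = \epsilon$ so that $s.s' = s$, the goal reduces to showing that the limit sequence $s$ itself lives in $\os P\cs$. I expect this to be the main obstacle: Lemma \ref{lem:soundness} alone only controls finite prefixes, and $\os P\cs$ is in general neither upward nor downward closed in the information ordering, so passage to the limit requires a separate continuity argument.

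The strategy to overcome this is to establish the following strengthening by induction on the structure of $P$, combined with fixed-point induction on $T_\cD$ for recursive calls: every infinite observable derivation $P = P_1 \rede{(c_1,c_1)} P_2 \rede{(c_2,c_2)} \ldots$ forces $c_1.c_2\ldots \in \os P\cs$. The base cases match the equations of Figure \ref{tab:densem} directly: $\tellp c$ reduces only if $c_1 \entails c$; $\nextp Q$ and $\unlessp c Q$ defer the tail to the inductive hypothesis on $Q$; and parallel composition becomes the intersection of two such witnesses. The delicate cases are $\localp{\vx}{Q}$ and $\absp{\vx}{c}{Q}$, where one must bookkeep the local variables and admissible substitutions introduced by rules $\rLocal$ and $\rAbs$ across infinitely many time-units; here Observation \ref{l-vx-eq}, Proposition \ref{prop:forall-subs}, and the continuity of $\Forall$ (Lemma \ref{lem:prop-for-all}) and of $\Exists$ (Property~(5) of Definition \ref{def:cs}) supply the closure of $\os P\cs$ needed to pass to the limit. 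Finally, the recursive call $p(\vt)$ is handled by the continuity of $T_\cD$ established in Proposition \ref{prop:cont-td}, which transfers the property through the iterates of the fixed-point construction.
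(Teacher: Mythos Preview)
Your proposal is correct and follows the same skeleton as the paper's own proof: the same case split (finite versus infinite $s$), the same appeal to Lemma~\ref{lem:soundness} step by step, and the same choice $s'=\false^\omega$ in the finite case. The paper's argument for the infinite case is in fact terser than yours---it simply writes ``by repeated applications of Lemma~\ref{lem:soundness}, $s\in\os P\cs$'' and stops there---so your explicit acknowledgement that the chain of inclusions $c_1\ldots c_n.\os P_{n+1}\cs\subseteq\os P\cs$ does not by itself place the \emph{infinite} sequence in $\os P\cs$, together with your sketch of a structural-plus-fixpoint induction to close the gap, actually goes beyond what the paper spells out.

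One comment on that sketch: your plan to carry the structural induction through the $\mathbf{local}$ case by invoking ``continuity of $\Exists$'' is the right instinct, but note that the continuity statement available (Proposition~\ref{prop:cont-td}) is continuity with respect to chains of \emph{interpretations}, not a closure property of $\os Q\cs$ under pointwise limits of sequences. What you really need there is to manufacture, from the growing families of local variables $\vx\subseteq\vx_1\subseteq\vx_2\subseteq\cdots$ produced by rule $\rLocal$ across time-units, a \emph{single} $\vx$-variant $s'$ of $s$ lying in $\os Q\cs$; this is a compactness-flavoured step that neither Lemma~\ref{lem:prop-for-all} nor Property~(5) of Definition~\ref{def:cs} delivers on its own. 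The paper does not work this out either, so you are not missing anything relative to the reference proof, but be aware that making this step fully rigorous is where the real work lies.
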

\begin{proof}
If $P$ is well-terminated under input $s$, let $s'=\epsilon$. By repeated applications of Lemma \ref{lem:soundness}, $s \in \os P\cs$. If $P$ is not well-terminated, then $s$ is finite and let $s'=\false^\omega$
(recall that $\false^\omega$ is quiescent for any process). Via Lemma \ref{lem:soundness} we can show $s.s' \in \os P\cs$.
\end{proof}

Moreover, the semantics approximates any infinite computation.
\begin{corollary}[Infinite Computations]
       Assume that $d.s \in \Exists \vx_1 (\os P_1\cs \cap \up(c_1.\cC^\omega))$ and  that $\mconf{\vx_1;P_1;c_1} \redi^* \mconf{\vx_i;P_i;c_i} \redi^*  \mconf{\vx_n;P_n;c_n} \redi^* \cdots.$ Then,  $\bigsqcup  \exists \vx_i(c_i) \leq d$. 
\end{corollary}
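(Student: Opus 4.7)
The plan is to reduce the statement to a finite-stage soundness claim on partial internal derivations and then pass to the lub. First, I would show that $\{\exists \vx_i(c_i)\}_i$ is an ascending chain in the lattice $\cC$. Indeed, Internal Extensiveness (Lemma \ref{lemma:redi-properties}) guarantees $\exists \vx_{i+1}(c_{i+1}) \entails \exists \vx_i(c_i)$ for every $i$, so $\exists \vx_i(c_i) \leq \exists \vx_{i+1}(c_{i+1})$, and the lub $\bigsqcup_i \exists \vx_i(c_i)$ is well-defined. It therefore suffices to prove $\exists \vx_i(c_i) \leq d$ for each fixed $i$, since then $d$ is an upper bound of the chain and the conclusion follows from the definition of lub.

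Next, fixing $i$, I would unfold the hypothesis $d.s \in \Exists \vx_1(\os P_1\cs \cap \up(c_1.\cC^\omega))$ to extract a witness $d^*.s^* \in \os P_1\cs$ with $d^* \entails c_1$ and $\exists \vx_1(d) \equivC \exists \vx_1(d^*)$, assuming via alpha-conversion that the new local variables $\vx_i \setminus \vx_1$ introduced along the derivation $\mconf{\vx_1; P_1; c_1} \redi^i \mconf{\vx_i; P_i; c_i}$ are fresh with respect to $d^*$. The central step is a \emph{non-terminating} variant of the internal soundness lemma: whenever $d^*.s^* \in \os P_1 \cs$, $d^* \entails c_1$, and $\mconf{\vx_1; P_1; c_1} \redi^i \mconf{\vx_i; P_i; c_i}$, then $d^* \entails \exists (\vx_i \setminus \vx_1)(c_i)$. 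I would prove this by induction on $i$ following the case analysis of the proof of Lemma \ref{lem:soundness}. The intuition is that $d^*$, being the first component of a quiescent sequence for $P_1$, cannot be strictly strengthened (modulo fresh locals) by any finite prefix of the internal computation of $P_1$ starting from a store entailed by $d^*$; hence any information $c_i$ accumulated at stage $i$ must already be subsumed by $d^*$ after hiding the fresh variables.

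From this central step, applying $\exists \vx_1$ to both sides and using $\vx_1 \subseteq \vx_i$ together with the cylindrification axioms of Definition \ref{def:cs} yields $\exists \vx_1(d^*) \entails \exists \vx_i(c_i)$. Combining with $d \entails \exists \vx_1(d) \equivC \exists \vx_1(d^*)$ gives $d \entails \exists \vx_i(c_i)$, and taking the lub over $i$ delivers $\bigsqcup_i \exists \vx_i(c_i) \leq d$, as required. The main obstacle will be the inductive proof of the non-terminating soundness claim: the cases for $\rPar$ and $\rAbs$ require carefully tracking how the information contributed by each subprocess is absorbed into $d^*$ along the intermediate (non-terminal) configurations, using continuity of $\exists$ (Property (5) of Definition \ref{def:cs}) to handle the limit phenomena when a sub-derivation itself generates an infinite internal chain. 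Nevertheless, the overall structure mirrors the terminating argument already carried out for Lemma \ref{lem:soundness}.
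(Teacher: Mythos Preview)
Your proposal is correct and essentially aligned with the paper's argument, though you organize it differently. The paper argues directly that the only source of unbounded internal behavior is an $\mathbf{abs}$ process (since recursive calls are $\mathbf{next}$-guarded), and that each piece of information $e[\vt_i/\vx]$ contributed by such a process is entailed by $d$ ``by an analysis similar to that of Lemma~\ref{lem:soundness}''. You instead reduce to a uniform finite-prefix bound $\exists \vx_i(c_i) \leq d$ for every $i$ and pass to the lub; this is a cleaner and more general packaging of the same soundness-style argument, and it makes explicit the chain structure and the role of the witness $d^*$ extracted from the $\Exists \vx_1(\cdot)$ hypothesis.

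One small correction: your closing remark that continuity of $\exists$ (Property~(5) of Definition~\ref{def:cs}) is needed ``when a sub-derivation itself generates an infinite internal chain'' is misplaced. In your scheme you fix $i$ and work with a finite $i$-step derivation, so every sub-derivation is finite and no limit is taken inside the inductive argument; the passage to $\bigsqcup_i$ happens only at the outermost level, after the per-$i$ bound is established, and that step uses only the definition of lub, not continuity of $\exists$. The genuine work in the $\rPar$ and $\rAbs$ cases is the bookkeeping of local variables and the fact that $d^*$, being the first component of a sequence in $\os P_1\cs$, already absorbs every contribution along any finite prefix---exactly the content of Lemma~\ref{lem:soundness} restricted to partial derivations.
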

\begin{proof}
 Recall that procedure calls must be next guarded. Then, any infinite behavior in $P_1$ is due to a process of the form $\absp{\vx}{c}{Q}$
that executes $Q[\vt_i/\vx]$ and adds new information of the form $e[\vt_i/\vx]$. By an analysis  similar to that of Lemma \ref{lem:soundness}, we can show that $d$ entails $e[\vt_i/\vx]$. 
\end{proof}

%\begin{example}\label{ex:sem-correspondence}
%Let $Q$ be as in Example \ref{ex:inf-behav} and $P=\tellp{c(a)}$. We have:
%\[
%\begin{array}{lll}
%\os P\cs &=& \uparrow(c(a)).\true^\omega \\
% \os Q \cs &=& \Forall x (\{d.s \in \cC^\omega\ | \ \mbox{ if } d\entails c(x) \mbox{ then } d\entails c(f(x))  \})
%\end{array}
%\]
%Let $d.s \in \cC^\omega$ and $d'=d\sqcup c(a)$. It is easy to see that $d'.s \in\sp(P) $ and $d'.s \in \os P\cs$. We know that  $R= P\parallel Q$, even under input $\true$, engages in an infinite sequence of internal transitions and then,  $\sp(R)=\{\epsilon\}$. 
%Let $d'$ be the limit of the chain $\bigsqcup_i c(f^i(a))$. Then, $d'.s \in \os R\cs$ for any $s$. 
%\end{example}

\begin{example}[Infinite behavior]
Let  $P=\absp{z}{\outp(z)}{\localp{x}{(\tellp{\outp(x))}}}$ and 
let $c=\outp(w)$. Starting from the store $c$, the process $P$   engages in infinitely many internal transitions of the form 
\[
\begin{array}{c}
\mconf{\emptyset;P ;c} \redi^* \mconf{\{x_1,\cdots, x_i\}; P_i; \outp(x_1)\sqcup \cdots \sqcup \outp(x_i)\sqcup \outp(w)} \redi^* \\
\mconf{\{x_1,\cdots,x_i,\cdots, x_n\}; P_n; \outp(x_1)\sqcup \cdots \sqcup \outp(x_n)\sqcup \outp(w)} \redi^* \cdots
\end{array}
\]
At any step of the computation, the observable store is $\outp(w) \sqcup \bigsqcup\limits_{i\in 1..n}\exists x_i  \outp(x_i) $ which is equivalent to $\outp(w)$. Note also that $\outp(w).\cC^\omega \in \os P \cs$. 
\end{example}

 For the converse of Theorem \ref{theo:sound}, we have similar technical problems as in the case of \tccp,  namely: the combination of the $\mathbf{local}$ operator with the $\mathbf{unless}$ constructor.  Thus, similarly to \tccp, completeness is verified only for the  fragment of \utcc\ where there are no occurrences of $\mathbf{unless}$ processes in the body of $\mathbf{local}$ processes. The reader may refer \cite{BoerPP95,NPV02} for  counterexamples showing that  $\os P\cs \not\subseteq \spbehav{P}$ when $P$ is not locally independent. 

 \begin{definition}[Locally Independent Fragment] \label{def:LI}
Let $\cD.P$ be a program where $\cD$ contains process definitions of the form $p_i(\vx) \defsymbol P_i$.
We say that $\cD.P$ is 
 locally independent if for each process of the form $\localp{\vx;c}{Q}$ in $P$ and $P_i$  it holds that 
(1)  $Q$ does not have occurrences of $\mathbf{unless}$ processes; and (2) if $Q$ calls to $p_j(\vx)$, then $P_j$ satisfies also conditions (1) and (2).
\end{definition}

\begin{lemma}[Completeness]\label{theo:comp}
Let  $\cD.P$ be a locally independent program s.t. $d.s \in \os P \cs$. If $P\rede{(d, d')} R$ then $d' \equivC d$ and 
$s \in \os R \cs$. 
\end{lemma}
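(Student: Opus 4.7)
The plan is to mirror the soundness argument (Lemma~\ref{lem:soundness}) in reverse. Unfolding rule $\rObserv$, the hypothesis $P \rede{(d, d')} R$ means there is a finite internal derivation $\mconf{\emptyset; P; d} \redi^* \mconf{\vx; Q; d_f} \not\redi$ with $d' = \exists \vx(d_f)$ and $R = \localp{\vx}{F(Q)}$. Both conclusions, $d' \equivC d$ and $s \in \os R \cs$, will follow from a single invariant maintained along the internal trace.

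I would first establish, by induction on the derivation of an internal reduction (with a secondary induction on process structure when needed), the following preservation lemma: if $\mconf{\vy; P_0; e} \redi \mconf{\vy \cup \vz; Q_0; e'}$ and $\exists \vy(e).t \in \Exists \vy(\os P_0 \cs)$, then $\exists (\vy \cup \vz)(e') \equivC \exists \vy(e)$ and $\exists (\vy \cup \vz) (e').t \in \Exists (\vy \cup \vz) (\os Q_0 \cs)$. The case analysis on the SOS rule is routine for most constructs: $\rTell$, because quiescence for $\tellp{c}$ forces $e \entails c$, so $e \sqcup c \equivC e$; $\rPar$, because $\rdPar$ intersects the components and the standard commutation of $\Exists$ with intersection on fresh variables applies; $\rLocal$ and $\rStructVar$, because the newly bound variables can simply be absorbed into the existential prefix; $\rCall$, because the interpretation is a fixpoint of $T_{\cD}$; $\rUnless$, because its guard is entailed and its reduct is $\skipp$; and $\rStruct$, because structurally congruent configurations denote the same set. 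The decisive case is $\rAbs$: from $\exists \vy(e).t \in \Exists \vy (\Forall \vz (\os \whenp{c}{Q_0} \cs))$ together with $e \entails c[\vt/\vz]$, Proposition~\ref{prop:forall-subs} yields $\exists \vy(e).t \in \os (\whenp{c}{Q_0})[\vt/\vz] \cs$, hence in $\os Q_0[\vt/\vz] \cs$, while the persistent copy of the abstraction keeps its semantics by the idempotency of $\Forall \vz$ (Lemma~\ref{lem:prop-for-all}).

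Iterating the preservation lemma along the finite internal trace gives $\exists \vx(d_f) \equivC d$, so $d' \equivC d$. For the second conclusion, I would analyze the stuck configuration $\mconf{\vx; Q; d_f}$: up to $\equiv$, the process $Q$ is a parallel composition of $\skipp$'s, suspended abstractions $\absp{\vz}{c'; E}{Q'}$ whose guard fails at $d_f$, delayed processes $\nextp{Q''}$, and time-outs $\unlessp{c''}{Q''}$ whose guards are not entailed by $d_f$. Applying the semantic equations $\rdSkip$, $\rdAbs$, $\rdNext$, and $\rdUnless$ to the fact that $d_f. s$ lies in $\Exists \vx (\os Q \cs)$ (already established), the tail $s$ must belong to $\Exists \vx (\os F(Q) \cs) = \os \localp{\vx}{F(Q)} \cs = \os R \cs$, as required.

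The main obstacle will be the interaction between $\localp{\vx}{\cdot}$ and internal reductions that instantiate abstractions with terms involving the locally bound $\vx$: when $\rStructVar$ opens the scope of $\vx$ and $\rAbs$ subsequently picks such a substitution, the required commutation between $\Exists \vx$ and $\Forall \vz$ must not lose quiescent sequences. The local independence hypothesis (no $\mathbf{unless}$ inside a $\mathbf{local}$) is precisely what sidesteps the known counterexamples from \cite{BoerPP95,NPV02}, since it guarantees the equality $\Exists \vx (\os Q' \cs) \cap \up(d.\cC^\omega) = \Exists \vx (\os Q' \cs \cap \up(d.\cC^\omega))$ on local bodies, allowing the invariant to be propagated safely through $\mathbf{local}$ blocks.
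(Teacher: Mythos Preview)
Your single-step preservation lemma is too weak for the case you yourself flag as the main obstacle: an application of $\rAbs$ whose substitution $[\vt/\vz]$ mentions variables previously hidden by $\rLocal$. Once $\rLocal$ has enlarged $\vy$ to include $\vx$, the invariant $\exists\vy(e).t\in\Exists\vy(\os P_0\cs)$ only supplies a $\vy$-variant witness $e''.t''\in\os P_0\cs$ with $\exists\vy(e'')\equivC\exists\vy(e)$; it does \emph{not} give $e''\entails e$. Hence from the operational premise $e\entails c[\vt/\vz]$ you cannot conclude $e''\entails c[\vt/\vz]$ when $\vt$ contains variables of $\vy$, and Proposition~\ref{prop:forall-subs} then yields only $e''.t''\in\os(\whenp{c}{Q_0})[\vt/\vz]\cs$, not membership in $\os Q_0[\vt/\vz]\cs$. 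The identity you invoke, $\Exists\vx(\os Q'\cs)\cap\up(d.\cC^\omega)=\Exists\vx(\os Q'\cs\cap\up(d.\cC^\omega))$, is neither a consequence of local independence nor does it effect the required commutation of $\Exists\vy$ past $\Forall\vz$. A secondary issue: your iteration passes through run-time forms $\absp{\vz}{c;E}{Q}$ with non-empty $E$, for which the paper defines no denotation, so ``the persistent copy keeps its semantics'' is not yet meaningful.

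The paper avoids a step-by-step invariant altogether. It inducts on the lexicographic pair (derivation length, structure of $P$) and decomposes the \emph{entire} derivation according to the outermost constructor of the initial process. For $P=\localp{\vx}{Q}$ it takes the $\vx$-variant $d'.s'\in\os Q\cs$ provided by $\rdLocal$, applies the inductive hypothesis to the derivation of $Q$ from the \emph{stronger} store $d'$ to obtain a final process $Q'$ with $s'\in\Exists\vy'\os F(Q')\cs$, and then compares $Q'$ with the actual final process $Q''$ reached from $d$. Local independence enters exactly here: since $Q$ contains no $\mathbf{unless}$, running from the stronger $d'$ can only spawn \emph{additional} parallel components via $\rAbs$, so $\os F(Q')\cs\subseteq\os F(Q'')\cs$ by $\rdPar$ and $s'$ transfers to $\Exists\vy''\os F(Q'')\cs$. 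This monotonicity comparison of two full derivations, not any commutation identity, is the actual role of the hypothesis, and it has no single-step reformulation under the invariant you state.
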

\begin{proof} 
Assume that   $P$ is locally independent, $d.s \in \os P \cs$ and there is a derivation of the form  $\langle \vx;P;d\rangle \redi^*\langle \vx';P';d'\rangle \not\redi$. We shall prove that $\exists \vx(d) \equivC \exists \vx'(d')$ and
$s\in \Exists \vx' \os  F(P')\cs$. 
	We proceed by induction on the lexicographical order on the length of the internal derivation  ($\redi^*$)  and the structure of  $P$, where the predominant component is the length of the derivation. 
 The locally independent condition is used for the case $P=\localp{\vx;c}{Q}$. We only present the interesting cases. The others can be found in \ref{app:proofs-den}.

%%%%%%% PARALLEL 
\noindent \underline{{\bf Case}  $P=Q \parallel S$}. We know that $d.s \in \os Q\cs$ and $d.s \in \os S\cs$ and by (structural) inductive hypothesis,  there are derivations  $\langle \vz; Q;d\rangle \redi^* \langle \vz\cup \vx'; Q';d'\sqcup c\rangle\ \not\redi $ and $\langle \vz ;S;d\rangle \redi^* \langle \vz\cup  \vy'; S';d'' \sqcup e\rangle\ \not\redi $ s.t.  $s \in  \Exists  \vz,\vx' \os F(Q')\cs$,  $ s\in  \Exists \vz,\vy'\os  F(S')\cs$, $  \exists \vz(d) \equivC \exists \vz,\vx'(d' \sqcup c)$ and $  \exists \vz(d) \equivC   \exists \vz,\vy' (d''\sqcup  e)$. Therefore,
assuming by alpha conversion that $\vx' \cap \vy' = \emptyset$, 
 $\exists \vz(d) \equivC \exists \vz,\vx',\vy' (d' \sqcup d'' \sqcup c \sqcup e)$ and by  rule $\rPar$, 
\[
 \langle \vz; Q\parallel S, d \rangle \redi^* \equiv \langle \vz\cup\vx'\cup \vy' ;Q'\parallel S'; d'\sqcup d'' \sqcup c \sqcup e \rangle \not\redi
\]

We note that  $\Exists \vx( \os P \cs \cap \os Q\cs) = \Exists \vx(\os P\cs) \cap \os Q \cs$ if $\vx\cap\fv(Q) = \emptyset$ (see Proposition \ref{prop:den-se-ext} in \ref{app:proofs-aux}). Since $F(Q'\parallel S') = F(Q') \parallel F(S')$
and $\vx' \cap \fv(S') = \vy' \cap \fv(Q')=\emptyset$,  we conclude $s\in \Exists \vz,\vx',\vy' (\os F(Q'\parallel R')\cs)$.

%Since $s\in \os P \cs$ we know that $s\in \os R\cs$ and $s\in \os Q\cs$. As in the proof of the same case in Proposition \ref{prop:soundness}, we can show that there exist shorter derivations for $R$ and $S$ (w.r.t $P$) to conclude that $s\in \spbehav{R\parallel S}$. 

% and by inductive hypothesis, $s\in \spbehav{R}$ and $s\in \spbehav{Q}$. Then one can show that there is a derivation of the form \[ \begin{array}{c} Q_1 \parallel R_1 \rede{(c_1,c_1)} Q_2 \parallel R_2 \rede{(c_2,c_2)} Q_3 \parallel R_3 \rede{(c_3,c_3)} \cdots \end{array}\]
% for $Q_1 = Q$ and $R_1=R$. We conclude $s\in \spbehav{P}$.
 
%------------- ABS
\noindent \underline{{\bf Case}  $P = \absp{\vx}{c}{Q}$}.   
	 %By alpha-conversion we  assume  $\vx \notin \fv(d.s)$. 
	 By using the rule $\rAbs$ we can show that:
	 \[
\begin{array}{ll}
\langle \vx; P ; d\rangle & \redi^*   \langle \vy_1; P_1 \parallel Q_1^1[\vec{t_1}/\vx] ;   d_1^1\rangle \\
  & \redi^*   \langle \vy_2; P_2 \parallel Q_1^2[\vec{t_1}/\vx] \parallel Q_2^1[\vec{t_2}/\vx] ; d_1^2 \sqcup  d_2^1\rangle \\
 & \redi^*   \langle \vy_3 ; P_3 \parallel Q_1^3[\vec{t_1}/\vx] \parallel Q_2^2[\vec{t_2}/\vx] \parallel Q_3^1[\vec{t_3}/\vx] ;  d_1^3 \sqcup d_2^2 \sqcup d_3^1\rangle \\
  & \redi^*  \cdots \\
 & \redi^*   \langle \vy_n; P_n \parallel Q_1^{m_1}[\vec{t_1}/\vx] \parallel \cdots \parallel  Q_n^{m_n}[\vec{t_n}/\vx] ; d_1^{m_1} \sqcup ... \sqcup d_n^{m_n}\rangle 
 \end{array}
\]
%\[
%\begin{array}{ll}
%\langle P , d\rangle & \redi^*   \langle P_1 \parallel \prod\limits_{d_{\vx\vec{t_i}} \in D_1} Q_i'[\vec{t_i}/\vx] , d\rangle \\
%
%& \redi^*   \langle P_j \parallel \prod\limits_{d_{\vx\vec{t_i}} \in D_j} Q_i'[\vec{t_i}/\vx] , d\rangle \\
%
%& \redi^*   \langle P_n \parallel \prod\limits_{d_{\vx\vec{t_i}} \in D_n} Q_i'[\vec{t_i}/\vx] , d\rangle \not\redi \\
%\end{array}
%\]
where $P_n$ takes the form 
$\absp{\vx}{c;E_n}Q$ and $E_n = \{d_{\vx\vec{t_1}},...,d_{\vx\vec{t_n}}\}$.
In the derivation above,  $d_i^j$ represents the constraint  added by  $Q_i^j[\vec{t_i}/\vx]$. Note that  $Q[\vec{t_i}/\vx]=Q_i^1[\vec{t_i}/\vx]$.  There is a derivation (shorter than that for $P$) for each $d_{\vx\vec{t_i}} \in E_n$ of the form
\[
\langle \vy_i; Q_i^1[\vec{t_i}/\vx];d_i \rangle \redi^*  \equiv
\langle \vy_i'; Q_i^{m_i}[\vec{t_i}/\vx];  d_i^{m_i}\rangle \not\redi
%Q^1_1[\vec{t_1}/\vx]
\]
Since $d.s \in \os P\cs$, by Proposition \ref{prop:forall-subs} we know that  $d.s \in \os Q_i^1[\vec{t_i}/\vx] \cs$ and by induction,  $ \exists \vy_i(d_i) \equivC \exists \vy_i' ( d_i^{m_i})$. Furthermore,   it must be the case that $s\in \Exists \vy_i' \os F(Q_i^{m_i}[\vec{t_i}/\vx])\cs$. 
Let $e$ be the constraint $\exists \vy_n( d_1^{m_1} \sqcup ... \sqcup d_n^{m_n})$.  Given that $ \exists \vy_i(d_i) \equivC \exists \vy_i' ( d_i^{m_i})$, we have $\exists \vx(d) \equivC e$. Furthermore, given that  $F(P_n) = \skipp$:
\[
\absp{\vx}{c}{Q}\rede{(d,e)} \localp{\vy_n}F\left(\prod\limits_{d_{\vx\vec{t_i}} \in E_n} Q_i^{m_i}[\vec{t_i}/\vx]\right)
\]
Since  $s\in \Exists \vy_i' \os F(Q_i^{m_i}[\vec{t_i}/\vx])\cs$
for all $d_{\vx\vt_i} \in E_n$,   we conclude 
%By Proposition \ref{prop:forall-subs}, 
% we know that for  all admissible $\dxt \in\cC$,  $d.s \in \os (\whenp{c}{Q})\sxt  \cs$. For each $\dxt \in D_n$, we also know that $d \entails c\sxt$ and then, $d.s \in \os Q\sxt \cs$. 
%We thus conclude 
\[s \in \Exists \vy_n\os F(\prod\limits_{d_{\vx\vec{t_i}} \in E_n} Q_i^{m_i}[\vec{t_i}/\vx])\cs\]
\noindent \underline{{\bf Case}  $P = \localp{\vx}{Q}$}. By alpha conversion assume $\vx \not\in \fv(d.s)$. We know that there exists $d'.s'$ ($\vx$-variant of $d.s$) s.t. $d'.s' \in \os Q \cs$, $\exists \vx (d.s) \equivC d.s$ and  $ d.s \equivC \exists \vx(d'.s')$. By (structural) inductive hypothesis, there is a derivation 
$\langle \vy; Q;d'\rangle \redi^* \langle \vy';Q'; d''\rangle\not\redi$ and 
$\exists \vy(d') \equivC \exists \vy' ( d'')$ and 
$s' \in \Exists \vy' \os F(Q')\cs$. %Furthermore, it must be the case that $\vy \subseteq \vy'$ and then,
%$\exists\vy'(d'\sqcup d'') \equivC \exists\vy(d') \sqcup \exists\vy'(d'')$. 
We assume by alpha conversion that $\vx \cap \vy = \emptyset$. Consider now the following  derivation:
\[
\langle \vy;\localp{\vx}{Q}; d \rangle
\redi
\langle  \vx\cup \vy;Q ; d \rangle
 \redi^* \langle \vy''; Q'', c\rangle\not\redi
\]
where $\vx\cup \vy \subseteq \vy''$.  We know that $d' \entails d$ and by monotonicity, we have $\exists \vy' ( d'') \entails \exists \vy''(c)$ and then,  $d' \entails  \exists \vy''(c)$. We then conclude  $\exists \vy(d) \entails  \exists \vy''(c)$.

Since  $s' \in \Exists \vy' \os F(Q')\cs$ then $s \in \Exists \vx \Exists \vy' \os F(Q') \cs$. Nevertheless, notice that in the above derivation of $\localp{\vx}{Q}$, the final process is $Q''$ and not $Q'$. Since $Q$ is monotonic, there are no $\mathbf{unless}$ processes in it. Furthermore, since $d' \entails d$, it must be the case that $Q'$ may contain sub-terms (in parallel composition) of the form  $R'\sxt $ resulting from  a process of the form $\absp{\vy}{e}{R}$ s.t. $d''\entails e\sxt $ and $c\not\entails e\sxt $. 
 Therefore, 
by Rule $\rdPar$, it must be also the case that $s' \in \os F(Q'')\cs$ and then,  $s \in \Exists \vx,\vy' \os {F(Q'')}\cs$. Finally, note that $\vy''$ is not necessarily equal to $\vy'$. With a  similar analysis
 we can show that in $Q'$ there are possibly more $\mathbf{local}$ processes running in parallel than in $Q''$ and then,  $s \in \Exists \vy'' \os {F(Q'')}\cs$.
% Then  
%
%Assume now that $\langle Q,d\rangle \redi^*\langle Q'',d''\rangle \not\redi$. Since $P$ is locally independent, then $Q$ is a monotonic process and from the fact $\vx\not\in \fv(d)$ we then have a derivation of the form
%\[
%s
%\]
%
%
% Since $\exists \vx (d'.s') = d.s$ and $d'\entails c$, we can show that there is a derivation of the form
%\[
%\langle \localp{\vx;c}{Q},d \rangle \redi^*
%\langle \localp{\vx;d'}{Q'},d \rangle \not\redi
%\]
%Since $F(\localp{\vx;d'}{Q'})= \localp{\vx}{F(Q')}$
%and $s' \in \os F(Q')$, we conclude $s \in \os \localp{\vx}{F(Q')} \cs$. 
%%\cs ($\vx$-variant of $s$)
%%, this is 
%%\[Q = Q_1' \rede{(c_1,c_1'')} Q_2'\rede{(c_2,c_2'')} \cdots\]
% Since $P$ is locally independent, then $Q$ is a monotonic process. From $s = \exists \vx(s') \leq s'$ (Property (1) of the cylindrification operator),  it must be the case that $s \leq s'' \leq s'$ (by monotonicity).  Notice that  $s=\exists \vx(s) \leq \exists \vx(s'')\leq \exists \vx(s')=s$ (Property (2)  of the cylindrification operator), thus $\exists \vx(s'') =s$. Then 
% \[
%\localp{x;c}{Q_1'} \rede{(c_1, c_1)}
%\localp{x}{Q_2'} \rede{(c_2, c_2)} \cdots
%\]
%
%and we conclude $s \in \spbehav{P}$. 
\end{proof}

By repeated applications of the previous Lemma, we show the completeness of the denotation with respect to the strongest postcondition relation. 
\begin{theorem}[Completeness]\label{theo:completeness}
Let  $\cD.P$ be a locally independent program,  $w=s_1.s_1'$ and $w \in \os P\cs$. If $P\rede{(s_1,s_1')}$ then $s_1 \equivC s_1'$. 
Furthermore, if  $P\rede{(w,w')}_\omega$ then 
$w \equivC w'$. 
\end{theorem}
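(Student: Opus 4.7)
The plan is to derive Theorem~\ref{theo:completeness} as a straightforward iteration of Lemma~\ref{theo:comp}, which handles the one-time-unit case. Both the finite claim and the infinite claim follow from the same inductive scheme: peel off the first observable step using Lemma~\ref{theo:comp}, which yields pointwise equivalence of the first input/output pair together with an invariant placing the remaining tail of $w$ in the denotation of the residual process; then repeat on the residual process.

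Concretely, decompose the observable run as $P = P_1 \rede{(c_1,c_1')} P_2 \rede{(c_2,c_2')} \cdots$, writing the tail of $w$ after time-step $i$ as $w_{i+1}$. At each step, Lemma~\ref{theo:comp} supplies $c_i \equivC c_i'$ together with $w_{i+1} \in \os P_{i+1} \cs$, so the invariant used to invoke the lemma again is re-established. In the finite case ($n$ time-steps) one does a simple induction on $n$ and concatenates the component equivalences into $s_1 \equivC s_1'$. In the infinite case the argument is run unboundedly, producing $c_i \equivC c_i'$ for all $i \geq 1$, which is exactly the pointwise equivalence $w \equivC w'$.

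The main obstacle — really the only non-routine point — is ensuring that Lemma~\ref{theo:comp} is applicable at every step, i.e.\ that each residual process $P_{i+1} = \localp{\vec{x}_i}{F(Q_i)}$ produced by rule $\rObserv$ is itself locally independent (otherwise the inductive hypothesis cannot be invoked). I would verify this by a structural inspection of $F$: it only erases outer $\mathbf{abs}$ subterms and unfolds outer $\mathbf{next}$ and $\mathbf{unless}$ constructs, so it neither creates a fresh $\mathbf{unless}$ subterm nor relocates an existing one into the body of a $\mathbf{local}$ scope. Consequently the defining condition of Definition~\ref{def:LI} transfers from $P_i$ to $P_{i+1}$, the invariant is preserved across the iteration, and the proof goes through cleanly.
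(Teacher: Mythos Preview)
Your approach is exactly the paper's: the proof there reads in full ``By repeated applications of the previous Lemma'' (Lemma~\ref{theo:comp}), i.e.\ precisely the step-by-step peeling you describe.

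You go further than the paper in flagging the preservation of local independence across an observable step as something to check, and you are right that this is the only non-trivial point in the iteration. However, your justification does not quite hit the target. You argue about $F$, and indeed $F$ neither creates nor relocates $\mathbf{unless}$ subterms. The problem is not $F$ but the wrapper $\localp{\vec{x}_i}{\,\cdot\,}$ produced by rule $\rObserv$: the residual is $P_{i+1}=\localp{\vec{x}_i}{F(Q_i)}$, and $F(Q_i)$ may perfectly well contain an $\mathbf{unless}$ (e.g.\ coming from $\nextp{\unlessp{c}{R}}$ at top level in $P_i$). For a concrete instance, take
\[
P \;=\; \localp{x}{\tellp{c(x)}}\ \parallel\ \nextp{\unlessp{e}{R}}
\]
with $x\notin\fv(e)\cup\fv(R)$. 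This $P$ is locally independent, yet one observable step yields $\localp{x}{\unlessp{e}{R}}$, which violates the letter of Definition~\ref{def:LI}. So ``the defining condition of Definition~\ref{def:LI} transfers from $P_i$ to $P_{i+1}$'' is false as stated.

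The repair is mild, and your diagnosis points in the right direction. The variables $\vec{x}_i$ accumulated in the configuration arise only from $\rLocal$ (unfolding $\mathbf{local}$ subterms of $P_i$, whose bodies are $\mathbf{unless}$-free by hypothesis) and from $\rStructVar$; hence every $\mathbf{unless}$ subterm surviving in $F(Q_i)$ has no free occurrence of any variable in $\vec{x}_i$. One then has two clean options: (i) carry the iteration at the configuration level, i.e.\ re-invoke the internal statement proved inside Lemma~\ref{theo:comp} (which is phrased for $\langle \vec{x};P;d\rangle$ and already delivers $s\in\Exists\,\vec{x}'\,\os F(P')\cs$), keeping as invariant ``the $\mathbf{unless}$ subterms of the current process do not mention the accumulated local variables''; or (ii) observe that with $\vec{x}_i$ fresh for the $\mathbf{unless}$ parts, the $\mathbf{local}$ case in the proof of Lemma~\ref{theo:comp} goes through unchanged, since the monotonicity argument there is only needed for the subterms that actually interact with $\vec{x}_i$. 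Either route closes the gap; your inspection of $F$ alone does not.
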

Notice that completeness of the semantics holds only for the locally independent fragment, while soundness is achieved for the whole language. For the abstract interpretation framework we 
develop in the next section, we require the semantics to be a sound approximation of the operational semantics and then, the restriction imposed for completeness does not affect the applicability of the framework.

%%%%%%%%%%%%%%%%%%%%%%%%%%%%%%%%%%%%%%%%%%%%%%%%%
%%% ABSTRACT  SEMANTICS
%%%%%%%%%%%%%%%%%%%%%%%%%%%%%%%%%%%%%%%%%%%%%%%%%

\section{Abstract Interpretation Framework}\label{sec:absframework}
In this section we develop an abstract interpretation framework \cite{CC92} 
for the analysis of \utcc\ (and \tccp) programs. 
The framework is based on the above denotational semantics, thus allowing for a 
compositional analysis. 
The abstraction  proceeds as a composition of two different abstractions:  (1) we  abstract the constraint system and then (2) we abstract the  infinite sequences of \emph{abstract} constraints. The abstraction in (1) allows us to reuse the most popular abstract domains previously defined for logic programming. Adapting those domains, it is possible to perform, e.g., groundness, freeness, type and suspension analyses of \utcc\ programs. %Furthermore, it allows us to restrict the set of terms to be considered in the Equation  $\rdAbs$. Thus, we can even approximate the output of a non-well terminated process as we show in  Section \ref{sec:appsec}.
 On the other hand, the abstraction in (2) along with (1) allows for computing the approximated output of the program in a finite number of steps. 

\subsection{Abstract Constraint Systems}
Let us recall some notions from \cite{Falaschi:97:TCS} and \cite{ZaffanellaGL97}.

\begin{definition}[Descriptions] \label{def:description}
A description $(\cC,\alpha, \cA)$ between  two constraint systems 
\[
\begin{array}{lll}
{\bf C} &=& \langle \cC,\leq\ , \sqcup,\true, \false, {\it Var}, \exists, D \rangle \\
{\bf A} &=&  \langle \cA,\leq^\alpha, \sqcup^\alpha,\true^\alpha, \false^\alpha, {\it Var}, \exists^\alpha, D^\alpha \rangle
\end{array}
\]
  consists of an abstract domain  $(\cA,\leq^\alpha)$ and a surjective and monotonic abstraction function $\alpha: \cC \to \cA$. We lift $\alpha$ to sequences of constraints in the obvious way. 
\end{definition}

We shall use $c_{\alpha}$, $d_{\alpha}$ to range over constraints in ${\cA}$ and $s_\alpha, s'_\alpha, w_\alpha, w'_\alpha,  $ to range over sequences in $\cA^*$ and $\cA^\omega$ (the set of finite and infinite sequences of constraints in $\cA$). To simplify the notation, we omit the 
subindex ``$\alpha$'' when no confusion arises.  The entailment $\absentails$ is defined as in the concrete counterpart, i.e.  $c_{\alpha} \leq^\alpha d_{\alpha}$ iff $d_{\alpha} \absentails c_{\alpha}$. Similarly,   $d_\alpha \equivC_\alpha c_\alpha$ iff 
$d_{\alpha} \absentails c_{\alpha}$ and 
$c_{\alpha} \absentails d_{\alpha}$. 
%The set of abstract terms is denoted by $\cT_{\alpha}$ and ranged by  $t_{\alpha}, t'_{\alpha}..$.

Following standard lines in \cite{DBLP:journals/jlp/GiacobazziDL95,Falaschi:97:TCS,ZaffanellaGL97}  we impose the following restrictions over $\alpha$ relating  the cylindrification, diagonal and $lub$ operators of ${\bf C}$ and ${\bf A}$.

\begin{definition}[Correctness]\label{dec:corapp}
Let $\alpha : \cC \to \cA$ be monotonic and surjective. We say that ${\bf A}$ is \emph{upper correct} w.r.t. the constraint system 
 ${\bf C} $ if for all $c\in \cC$ and $x,y \in Var$:

\noindent (1)  $\alpha(\exists \vx (c)) \equivC_\alpha \exists^\alpha \vx (\alpha(c))$.

\noindent (2) $\alpha(\dxt) \equivC_\alpha \dxta$.

Since $\alpha$ is monotonic, we also have $\alpha(c \sqcup d) \absentails \alpha(c) \sqcup^\alpha \alpha(d)$. 
%     \item $\alpha(c\sqcup \dxt) = \alpha(c) \sqcup^\alpha \dxta$.

 % Let $\alpha_t:\cT \to \cT_\alpha$ be   the  term-abstraction structurally based on  $\alpha$.   Given the sequence of variables $\vx$ and $\vt$,$\vec{t'} \in \cT^{|\vx|}$, (4) $\alpha(c[{\vt}/{\vx}]) =\alpha(c[{\vec{t'}}/{\vx}]) $ whenever $\alpha_t(\vt)=\alpha_t(\vec{t'})$.
\end{definition}

%We add condition  (4) to have a safe approximation of  the $\mathbf{abs}$ operator in \utcc,  but it  is not required when analyzing \tccp\ programs. This condition, as we show later,  is required to guarantee that the abstract semantics of $P\sxt$ is the same as that of $P[\vec{t'}/\vx]$ whenever $\alpha(\dxt) = \alpha(d_{\vx\vec{t'}})$. 
% It informally says that substituting by terms mapped to the same abstract term, must lead to the same abstract constraint. 

In the example below we illustrate an abstract domain for the groundness analysis of \tccp\ programs. Here we give just an intuitive description of it. We shall elaborate more on this domain and its applications in  Section \ref{sec:ground}.
 
\begin{example}[Constraint System for Groundness]\label{ex:hcs}
Let the concrete constraint system ${\bf C}$ be the  Herbrand constraint system. As abstract constraint system {\bf A}, let constraints be propositional formulas 
 representing groundness information as in  $x \wedge (y \leftrightarrow z)$ that means, $x$ is a ground variable and,  $y$ is ground iff $z$ is ground. %Abstract terms are variables and the special term $g$ meaning ``ground". 
 In this setting,  $\alpha(x=[a])= x$ (i.e., $x$ is a ground variable). Furthermore, $\alpha(x=[a | y]) = x \leftrightarrow y$ meaning $x$ is ground if and only if $y$ is ground. 
 % $\alpha_t(a) = \alpha_t(b) = g$. By Condition (4) in Definition \ref{dec:corapp},  $\alpha((x=[y])[a/y]) = \alpha((x=[y]) [b/y]) = \iff(x,[])$. 

\end{example}

In the following definition we  make precise the idea  when an  abstract constraint   approximates a concrete one. 

\begin{definition}[Approximations]\label{def:approximations}
Let $(\cC, \alpha,  \cA)$ be a description  satisfying the conditions in Definition \ref{def:description}.   Given $d_{\alpha} = \alpha(d)$, we say that $d_{\alpha}$ is the best approximation of $d$. Furthermore, for all $c_{\alpha} \leq^\alpha d_{\alpha}$ we say that $c_{\alpha}$ approximates $d$ and we write $c_{\alpha} \propto d$.  This definition is pointwise extended  to sequences of constraints in the obvious way (see Figure \ref{fig:abs-domains}a). 
   %Given $f: \cC^\omega \to \cC^\omega$ and $f_{\alpha}: \cA^\omega \to \cA^\omega$, if for every $s_{\alpha} \propto s$, $f_{\alpha}(s_{\alpha}) \propto f(s)$ we say that $f_{\alpha}$ approximates $f$ and we write $f_\alpha \propto f$. 

\end{definition}

\begin{figure}
\resizebox{11cm}{!}{
\subfloat[]{\includegraphics{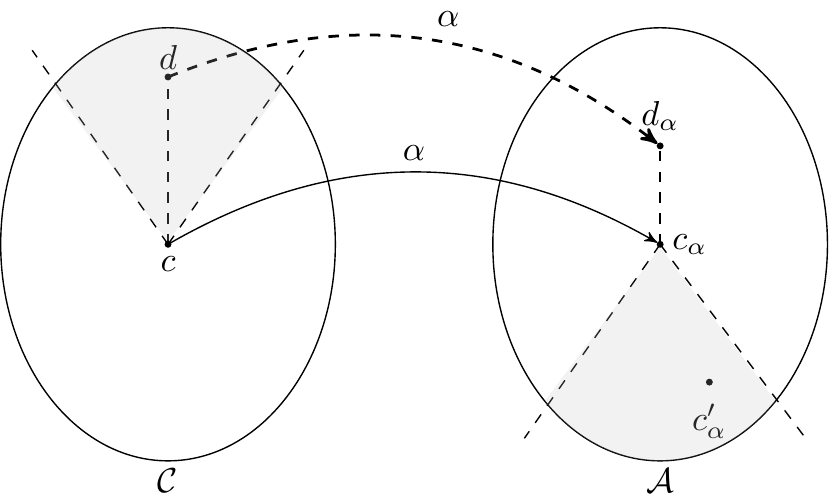}}
\qquad
\subfloat[]{\includegraphics{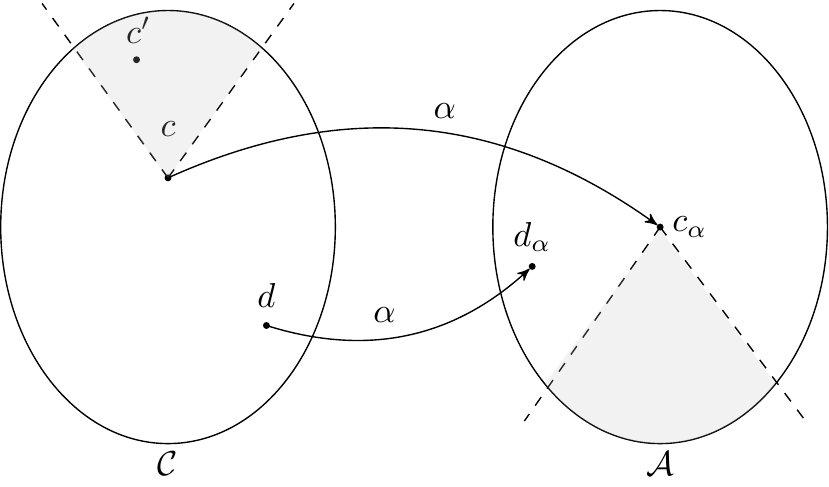}} 
}
\caption{(a). $c'_\alpha$ approximates $c$ (i.e., $c'_{\alpha} \propto c$) and $c_\alpha = \alpha(c)$ is the best approximation of $c$ (Definition \ref{def:approximations}). Since  $\alpha$ is monotonic and $c\leq d$,  $c_\alpha \leq^\alpha d_\alpha$. In (b), assume that for all $d$ s.t. $d\notentails c$, $d$ is not approximated by $c_\alpha$. Then, all  constraint $c'$ approximated by $c_\alpha$ (the upper cone of $c$) entails $c$. In this case, $c_\alpha \absconcentails c$ (Definition \ref{def:absconcentails}). 
 \label{fig:abs-domains}}

\end{figure}

\subsection{Abstract Semantics}\label{sec:abssemantics}
Now we define an abstract semantics that  approximates the observable behavior of a program and is adequate for modular data-flow analysis. The semantic equations are given in Figure \ref{absdensems} and they are parametric on the abstraction function $\alpha$ of the description $(\cC,\alpha,\cA)$. We shall dwell a little upon the description of the  rules  $\raAsk$ and $\raUnless$. The other cases are self-explanatory.
 
% For the case of  $\raLocal$, we use the operator $\Exists$ as in the concrete semantics. Recall that abstract constraints are just con  the notion of $\vx$-variant is the same as in the concrete semantics. Here, a sequences $s_{\alpha}$ and $s'_{\alpha}$ of (abstract) constraints are $\vx$-variant if $\exists^{\alpha}_{\vx}s(i)=\exists^{\alpha}_{\vx}s'(i)$ for $i>0$.
 
\begin{figure}
%\resizebox{.7\textwidth}{!}
{
$
\begin{array}{llcl}
\raSkip \ \ \ \ & \os \skipp \cs^{\alpha}_{X} & = & \cA^{\omega} \\

\raTell \ \ \ \ & \os \tellp{c} \cs^{\alpha}_{X} & = & \up(\alpha(c)).\cA^\omega \\

\raAsk \ \ \ \ & \os \whenp{c}{P} \cs^{\alpha}_{X} & = & 
\overline{\upa{c}}.\cA^\omega \cup (\upa{c}.\cA^\omega \cap \os P \cs^\alpha_X)
%\  \ \  \{d_{\alpha}.s_\alpha \in \cA^\omega | \ d_\alpha \not\absconcentails c\}\\
%& & &\cup \ \{d_{\alpha}.s_\alpha \in \cA^\omega | \ d_\alpha \absconcentails c\mbox{ and } d_{\alpha}.s_{\alpha} \in \os P \cs^{\alpha}_{X}\}
\\

\raAbs \ \ \ \ & \os \absp{\vx}{c}{P} \cs^{\alpha}_{X} & = & \Forall \vx  (\os \whenp{c}{P}  \cs^\alpha_X) \\ %\mbox{ where } \alpha(d_{\vx\vt}) =  d^{\alpha}_{\vx\vt}\\\\\

\raPar \ \ \ \ & \os P \parallel Q\cs^{\alpha}_{X} & = & \os P \cs^{\alpha}_{X} \cap \os Q \cs^{\alpha}_{X}  \\

\raLocal \ \ \ \ & \os \localp{\vx}{P} \cs^{\alpha}_{X} & = & \Exists \vx (  \os  P\cs^{\alpha}_X )
\\

\raNext \ \ \ \ & \os \nextp P \cs^{\alpha}_{X} & = &  \cA.\os P \cs^\alpha_X\\
\raUnless \ \ \ \ & \os \unlessp{c}{P} \cs^{\alpha}_{X} & = & \cA^\omega\\
% \raBang \ \ \ \ & \os \bangp{P} \cs^{\alpha}_{X} & = & \{ s_{\alpha} \in \cA^\omega \ | \ \mbox{ for all $s'_{\alpha}$, $w_{\alpha}$ s.t. $s_{\alpha}=w_{\alpha}.s'_{\alpha}$, } \\
% & & & \tab\tab\tab\tab  s'_{\alpha}   \in \os P\cs_X\} \\\\

%\raAbs \ \ \ \ & \os \absp{\vx}{c}{P} \cs^{\alpha}_{X} & = & \bigcap\limits_{\vec{t_{\alpha}} \in\cT_{\alpha}^{|\vx|}} \os (\whenp{c}{P})[\vec{t'}/\vx]\cs^{\alpha}_{X} \mbox{ where $\alpha_t(\vec{t'}) = \vec{t_{\alpha}}$}\\\\

\raCall \ \ \ \ & \os p(\vt) \cs^{\alpha}_{X} & = & X(p(\vt))
\end{array}
$
}
\caption{Abstract denotational semantics for \utcc.  $\absconcentails$ and $\upa{}$\ \  are  in Definition \ref{def:absconcentails}. $\overline{A}$ denotes the set complement of $A$.\label{absdensems}}

\end{figure}

Given the right abstraction of the synchronization mechanism of blocking asks in \ccp\ is crucial to give a safe approximation of the behavior of programs. In abstract interpretation, abstract elements are \emph{weaker} than the concrete ones. Hence, if we approximate the behavior of $\whenp{c}{P}$ by replacing the guard $c$ with  $\alpha(c)$, it could be the case that $P$ proceeds in the abstract semantics but it does not in the concrete one. More precisely,  let $d,c \in \cC$. Notice that  from  $\alpha(d) \absentails \alpha(c)$ 
we cannot, in general,  conclude $d \entails c$. Take for instance  the constraint systems in Example \ref{ex:hcs}. We know that $\alpha(x=a) \equivC^\alpha \alpha(x=b)$ but $x=a \not\entails x=b$. 
Assume now we were to define the abstract semantics of  ask processes as:
\begin{equation}\label{ask-wrong-eq}
\os \whenp{c}{Q}\cs^\alpha_X =   \overline{\uparrow(\alpha(c))}.\cA^\omega \cup (\uparrow(\alpha(c)).\cA^\omega \cap \os Q\cs^\alpha_X) 
\end{equation}
A correct analysis of the process
 $P=\tellp{x=a} \parallel \whenp{x=b}{\tellp{y=b}}$  
 should conclude that only $x$ is definitely ground. 
 Since $\alpha(x=a) \entails^\alpha \alpha (x=b)$, 
 if  we use   Equation \ref{ask-wrong-eq},  the analysis   ends with the result $(x\wedge y).\cA^\omega$, i.e., it wrongly concludes that $x$ and $y$ are definitely ground. 
  
  We  thus follow 
\cite{ZaffanellaGL97,FalaschiGMP93,Falaschi:97:TCS} for the abstract semantics of the ask operator. 
For this, we need to define the entailment $\absconcentails$ that relates constraints in $\cA$ and $\cC$. 

\begin{definition}[$\absconcentails$ relation]\label{def:absconcentails}
Let $d_\alpha\in \cA$ and $c\in \cC$. 
We say that $d_\alpha$ entails $c$, notation $d_\alpha \absconcentails c$, if for all $c'\in \cC$ s.t. $d_\alpha \propto c'$ it holds that $c' \entails c$.  We shall use $\upa{c}$ to denote the set $\{d_\alpha \in \cA \ | \ d_\alpha \absconcentails c\}$.
\end{definition}

In words, the (abstract) constraint $d_\alpha$ 
entails the (concrete) constraint $c$ if all constraints 
approximated by $d_\alpha$ entail $c$ (see Figure \ref{fig:abs-domains}b).  Then, in Equation $\raAsk$, 
we guarantee that if  the abstract computation proceeds
(i.e., $d_\alpha \absconcentails c$) then every concrete computation 
it approximates proceeds too.

%This is formalized by the relation $ d_{\alpha }\absconcentails c$, meaning that the abstract constraint $d_{\alpha}$ entails $c$ if all concrete constraint approximated by $d_{\alpha}$ entails $c$.

In Equations  $\rdAbs$ and $\rdLocal$ we use the operators $\Forall$ and $\Exists$ analogous to those in  Notation \ref{not:clos-seq}.  In this context, they are defined on sequences of constraints in $\cA^\omega$ and they use the elements  $\exists^\alpha$,  $\sqcup^\alpha$ and  $\dxta$ instead of their concrete counterparts:
 \[
 \begin{array}{lll}
 \Exists \vx(S_\alpha) &=& \{s_\alpha\in \cA^\omega \ | \ \mbox{ there exists } s'_\alpha\in S_\alpha \mbox{ s.t. }  \exists^\alpha \vx (s_\alpha) \equivC_\alpha \exists^\alpha \vx (s'_\alpha) \}\\
   \Forall \vx (S_\alpha) &=& \{ \exists^\alpha \vy(s_\alpha) \in S_\alpha \  |  \
  \vy \subseteq {\it Var}, s_\alpha \in S_\alpha 
    \mbox{ and for all  } s'_\alpha \in \cA^\omega, \\
  & & \quad\    \mbox{ if } 
  \exists^\alpha \vx (s_\alpha) \equivC \exists^\alpha \vx (s'_\alpha) \mbox{,} 
 (\dxt^\alpha)^\omega \leq s_\alpha' \mbox{ and } \adm{\vx}{\vt}  \mbox{ then } s'_\alpha \in S_\alpha  \}
 \end{array}
 \]

 We omitted the superindex ``$\alpha$'' in these operators since it can be easily inferred from the context.

%Equation $\raAbs$ is similar to $\rdAbs$  but we need to consider the $\vx$-variants entailing the (abstract) diagonal elements $\dxt^\alpha$. 
%the intersection is computed over the abstract constraints $d^{\alpha}_{\vx\vt}$. Notice that 
%the choose of the (concrete) diagonal element $\dxt$ s.t.
% $\alpha(\dxt) = \dxta$ is irrelevant: If $\alpha(\dxt)=\alpha(d_{\vx\vec{t'}})=\dxta$ then, by using the properties in 
% Definition \ref{dec:corapp}, it is easy to see that 
% $\alpha(c\sxt) = \alpha(\exists \vx (c \sqcup \dxt))= \existsa \vx(\alpha(c) \sqcup^{\alpha} \dxta) = \alpha(c[\vec{t'}/\vx])$.

The abstract semantics of the $\mathbf{unless}$ operator poses similar difficulties as in the case of the ask operator. Moreover, even if we 
make use of the entailment 
$\absconcentails$  in Definition \ref{def:absconcentails}, we do not obtain a safe approximation. Let us explain this. 
One could think of defining  the semantic equation for the {\bf unless} process as follows:
\begin{equation} \label{eq:unless-1}
\os \unlessp{c}{Q} \cs^{\alpha}_{X}=
\overline{\upa{c}}.\os Q\cs^{\alpha}_X \cup \upa{c}.\cA^\omega
\end{equation}
%Nevertheless, this equation leads to a non safe approximation of the concrete semantics. 
The problem here is that $\alpha(d)  \not \absconcentails c$  does not imply, in general,     $d \not\entails c$. 
Take for instance $\alpha$ in Example \ref{ex:hcs}. We know that  $x \not\absconcentails x=[a]$ and $x=[a] \entails x=[a]$. 
Now let  $Q=\unlessp{c}{\tellp{e}}$, $d$ be a constraint s.t. $d\entails c$
and $d_{\alpha}= \alpha(d)$. 
 We know by rule $\rdUnless$ that  $d.\true^\omega \in \os Q \cs$. If $\alpha(d) \not\absconcentails c$, then 
by using the Equation (\ref{eq:unless-1}), we conclude that 
$  d_{\alpha}.(\true^{\alpha})^\omega \notin \os Q\cs^{\alpha}
$. Hence, we have a sequence $s$ such that  $s \in \os Q\cs$ and $\alpha(s) \not \in \os Q\cs^{\alpha}$ and the abstract semantics cannot be shown to be   a sound approximation of the concrete semantics (see Theorem \ref{teo:corr}). 
%To see this, take for example $\unlessp{c}{\tellp{e}}$ and  let $d$  be a constraint s.t. $d\entails c$. Then, by the rule $\rdUnless$, we know that  $d.\true^\omega \in \os Q \cs$. Let  $c'$ be a constraint s.t. $c' \notentails c$ and 
%let $\alpha(c') = c'_\alpha$ and $\alpha(d) =d_\alpha$. Assume that  $  d_\alpha \absentails c'_\alpha$. We thus have  $d_\alpha \propto c'$ and by definition of $\absconcentails$, 
%it must be the case that $d_\alpha \not\absconcentails c$. 

Notice that defining $d_\alpha \not\absconcentails c$ as true iff  $c' \notentails c$
for  all $c'$ approximated by  $d_\alpha $   does not solve the problem. This is  because 
under this definition,   $d_\alpha \not\absconcentails c$ does not hold 
for any $d_\alpha$ and $c$. To see this, notice that  
$\false$ entails all the concrete constraints 
and it is approximated by any abstract constraint.
Therefore,  we cannot give a better (safe) approximation of the semantics of  $\unlessp{c}{P}$ than $\cA^\omega$ (Rule $\raUnless$).

%Abstract interpretation theory assures that
%$\TR[\alpha]{\cD}\uparrow\omega$ is the best correct approximation of
%$\Sem{}(\cD)$.   Correct means  $\alpha(\Sem{}(\cD))  \subseteq^\alpha  \TR[\alpha]{\cD}\uparrow\omega$  and best means that it is the minimum w.r.t. $\subseteq^\alpha$  of all correct approximations. Now we can define the abstract semantics as the least fixpoint of the following (continuous) operator.
%\begin{definition}\label{def:ab.fixpoint} 
%	The abstract least fixpoint semantics of a program $\cD$  	is defined as 	\[
%	\Fix[\alpha]{}(\cD)= \TR[\alpha]{\cD} \uparrow\omega
%	\]
%\end{definition} 

Now we can formally define the abstract semantics as we did in Section \ref{sec:denotsem}. Given a description $(\cC,\alpha, \cA)$, we choose as  abstract domain is $\A = (A, \sqsubseteq^\alpha)$ where 
$A=\{ X \ | \ X \in \cP(\cA^\omega) \mbox { and } (\false^\alpha)^\omega \in X \}$  and $X\sqsubseteq^\alpha Y$ iff $X \supseteq Y$. 
%$A = \cP(\cA^\omega)$ and  $\subseteq^\alpha$ is defined similarly to $\subseteq^c$: Let $X,Y \in A$ and  $\lesssim^\alpha$ be the preorder s.t.  $X \lesssim^\alpha Y$ iff  for all  $y\in Y$, there exists $x \in{X}$ s.t. $x \leq^{\alpha} y $.  $X \subseteq^\alpha Y$ iff 
%$X \lesssim^\alpha Y$ and ($Y \lesssim^\alpha X$ implies $Y \subseteq X$). 
The bottom and top of this domain are similar to the concrete domain, i.e.,  $\cA^\omega$ and $\{(\false^\alpha)^\omega\}$ respectively.

\begin{definition}\label{def:tpalpha}  Let  $\os \cdot \cs^{\alpha}_{X} $ be  as in Figure \ref{absdensems}. The abstract semantics of a program $\cD.P$ is defined as the 
least fixpoint of the  continuous  semantic operator:
\[
		\TR[\alpha]{\cD}(X)(p(\vt)) = 
		{\os (Q\sxt) \cs^{\alpha}_{X}}
		\mbox{ if } p(\vec{x}) \defsymbol Q \in 
		\mathcal{D}
\]
We shall use $\os P \cs^\alpha $ to denote $\os P \cs^\alpha_{\it lfp(T_{\cD}^{\alpha})}$.
\end{definition} 

The following proposition shows the monotonicity of $\os \cdot \cs^\alpha$ and 
the continuity of $T_{\cD}^\alpha$. The proof is analogous to that of Proposition  \ref{prop:cont-td}.

\begin{proposition}[Monotonicity of $\os \cdot \csa $ and Continuity of $T_{\cD}^\alpha$]
Let $P$ be a process and  $X_1 \sqsubseteq^{\alpha} X_2 \sqsubseteq^{\alpha} X_3 ...$ be an ascending chain. Then, 
$\os P \csa_{X_i} \sqsubseteq^c \os P \csa_{X_{i+1}} $ ({\bf Monotonicity}). Moreover,  $\os P \csa_{\bigsqcup_ {X_i}} =  \bigsqcup_ {X_i} \os P \csa_{X_i} $ ({\bf Continuity}). 
\end{proposition}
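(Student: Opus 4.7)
The plan is to mirror the proof of Proposition \ref{prop:cont-td}, exploiting the structural similarity between the concrete and abstract semantic equations in Figures \ref{tab:densem} and \ref{absdensems}. First, I would establish monotonicity by induction on the structure of $P$. The cases $\raSkip$, $\raTell$ and $\raUnless$ have denotations that do not depend on $X$, so they are trivially monotonic. The cases $\raAsk$, $\raPar$, $\raNext$, $\raLocal$, $\raAbs$ and $\raCall$ are built by applying the operators $\cap$, concatenation with a fixed set, $\Exists \vx$ and $\Forall \vx$ to $\os \cdot \cs^\alpha_X$, each of which is monotonic in the $\sqsubseteq^\alpha$ order (which is reverse inclusion on $\cP(\cA^\omega)$). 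For $\raCall$, monotonicity follows directly from $X_i(p(\vt)) \sqsubseteq^\alpha X_{i+1}(p(\vt))$. Monotonicity of $\os\cdot\cs^\alpha$ immediately yields the ``$\sqsupseteq^\alpha$'' direction of continuity.

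For the non-trivial ``$\sqsubseteq^\alpha$'' direction of continuity, I proceed again by induction on the structure of $P$. The base cases $\raSkip$, $\raTell$, $\raUnless$ are immediate because their denotations are independent of $X$. For $\raPar$, the continuity of binary intersection on a chain suffices: $\bigcap_i (A_i \cap B_i) = (\bigcap_i A_i) \cap (\bigcap_i B_i)$ when both $(A_i)$ and $(B_i)$ are descending chains (ascending in $\sqsubseteq^\alpha$), which they are by the inductive hypothesis. The cases $\raAsk$ and $\raNext$ follow from the fact that intersection with a fixed set and left-concatenation with a fixed set distribute over directed intersections. For $\raCall$, the definition $X(p(\vt))$ means the statement reduces to the trivial equality $\bigsqcup_i X_i(p(\vt)) = (\bigsqcup_i X_i)(p(\vt))$ in the interpretation domain.

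The two genuinely interesting cases are $\raLocal$ and $\raAbs$. For $P = \localp{\vx}{Q}$, the inductive hypothesis gives $\os Q \cs^\alpha_{\bigsqcup X_i} = \bigsqcup_i \os Q \cs^\alpha_{X_i}$, and then I need continuity of $\Exists \vx$ on the abstract domain. This reduces to continuity of $\exists^\alpha$, which holds since the abstract constraint system $\mathbf{A}$ is itself a cylindric constraint system and therefore satisfies Property (5) of Definition \ref{def:cs}. For $P = \absp{\vx}{c}{Q}$, I need the abstract analog of Lemma \ref{lem:prop-for-all}, namely that $\Forall \vx$ is continuous on $(A, \sqsubseteq^\alpha)$. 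The proof of Lemma \ref{lem:prop-for-all} only uses the cylindric-algebra axioms of the constraint system together with the definition of $\Forall$, so it carries over verbatim once $\exists$, $\equivC$, $\dxt$, etc., are replaced by $\exists^\alpha$, $\equivC_\alpha$, $\dxta$.

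The main obstacle is simply to check that the abstract constraint system $\mathbf{A}$ genuinely satisfies all the requirements of Definition \ref{def:cs} needed in these arguments — in particular the continuity of $\exists^\alpha$ and the cylindrification laws that underlie Lemma \ref{lem:prop-for-all}. Since the framework was set up precisely so that $\mathbf{A}$ is a cylindric constraint system (see Definition \ref{def:description}), no new technical machinery is needed, and the induction goes through exactly as in the concrete case.
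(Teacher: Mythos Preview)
Your proposal is correct and follows essentially the same route as the paper: the paper simply states that the proof is analogous to that of Proposition~\ref{prop:cont-td}, and you have spelled out precisely that analogy, with the interesting cases being $\raLocal$ and $\raAbs$, handled respectively via continuity of $\exists^\alpha$ (Property~(5) of Definition~\ref{def:cs} for $\mathbf{A}$) and the abstract version of Lemma~\ref{lem:prop-for-all}.
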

%\begin{proof}
%The proof is analogous to that of Proposition \ref{prop:cont-td}.
%\end{proof}

\subsection{Soundness of the Approximation}
This section proves the correctness of the abstract semantics in Definition \ref{def:tpalpha}. We first establish a Galois insertion between the concrete and the abstract domains. 
%We shall use $\underline{\alpha}$ in $\underline{\alpha}(\E)$ to avoid confusion with  $\alpha$ in  $(\cC, \alpha, \cA)$. 

\begin{proposition}[Galois Insertion]\label{prop:gc}
Let $(\cC,\alpha',\cA)$ be a description and $\E$,  $\A$ be  the concrete and  abstract domains.  If $\mathbf{A}$ is  upper correct w.r.t. $\mathbf{C}$  then there exists an upper Galois insertion $\E \galois{\alpha}{\gamma} \A$. 
\end{proposition}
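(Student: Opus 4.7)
The plan is to induce the Galois insertion on sets of sequences from the description's abstraction function $\alpha'$ by lifting it pointwise. First I would extend $\alpha'$ to $\widehat{\alpha} : \cC^\omega \to \cA^\omega$ componentwise, $\widehat{\alpha}(s)(i) = \alpha'(s(i))$; monotonicity and surjectivity of $\alpha'$ (Definition~\ref{def:description}) transfer to $\widehat{\alpha}$ in the obvious componentwise manner. Since $\sqsubseteq^c$ and $\sqsubseteq^\alpha$ are \emph{reverse} set inclusion, the inverse image naturally plays the role of the concretization, while the dual map that sends each $X$ to the set of abstract sequences whose preimage lies entirely in $X$ plays the role of the abstraction. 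Concretely I would define
\[
\gamma(Y) \;=\; \widehat{\alpha}^{-1}(Y) \;=\; \{\, s \in \cC^\omega \mid \widehat{\alpha}(s) \in Y \,\},
\qquad
\alpha(X) \;=\; \{\, s_\alpha \in \cA^\omega \mid \widehat{\alpha}^{-1}(\{s_\alpha\}) \subseteq X \,\}.
\]

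The four Galois-insertion requirements then admit an almost mechanical verification. Monotonicity of $\alpha$ and $\gamma$ is immediate by unfolding. The adjunction $\alpha(X) \sqsubseteq^\alpha Y \iff X \sqsubseteq^c \gamma(Y)$, which under the reversed orders reads $Y \subseteq \alpha(X) \iff \gamma(Y) \subseteq X$, follows from the observation that $Y \subseteq \alpha(X)$ amounts to $\widehat{\alpha}^{-1}(\{s_\alpha\}) \subseteq X$ for every $s_\alpha \in Y$, which by taking the union over $Y$ is equivalent to $\widehat{\alpha}^{-1}(Y) = \gamma(Y) \subseteq X$. Extensivity $\gamma(\alpha(X)) \subseteq X$ holds because $s \in \gamma(\alpha(X))$ forces $\widehat{\alpha}^{-1}(\{\widehat{\alpha}(s)\}) \subseteq X$, and in particular $s \in X$. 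The insertion identity $\alpha(\gamma(Y)) = Y$ is the only step that uses surjectivity of $\widehat{\alpha}$: given $s_\alpha \in Y$, pick any $s$ with $\widehat{\alpha}(s) = s_\alpha$; then $\widehat{\alpha}^{-1}(\{s_\alpha\}) \subseteq \widehat{\alpha}^{-1}(Y) = \gamma(Y)$, hence $s_\alpha \in \alpha(\gamma(Y))$, while the reverse inclusion is immediate from the definitions.

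The main subtlety, and the only genuinely delicate point, is preserving the well-formedness conditions $\false^\omega \in X$ and $(\false^\alpha)^\omega \in Y$ that pick out $\E \subseteq \cP(\cC^\omega)$ and $\A \subseteq \cP(\cA^\omega)$. For $\gamma(Y) \in \E$ the requirement reduces to $\widehat{\alpha}(\false^\omega) = (\false^\alpha)^\omega$, which follows from $\alpha'(\false) = \false^\alpha$ (itself forced by monotonicity and surjectivity on top elements). For $\alpha(X) \in \A$ it reduces to the mild condition $\alpha'^{-1}(\{\false^\alpha\}) = \{\false\}$ on the description, which I would take to hold. In contrast, the upper-correctness properties of Definition~\ref{dec:corapp}---commutation with $\exists$ and preservation of diagonal elements---play no role in this proposition; they become essential only later, when proving that the abstract semantic equations soundly approximate the concrete ones.
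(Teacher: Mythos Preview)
Your construction is a valid Galois connection, but it differs from the paper's and carries a small genuine gap.

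The paper takes the \emph{direct image} $\alpha(S)=\{\beta(s)\mid s\in S\}$ (with $\beta$ the pointwise lift of $\alpha'$) together with the same $\gamma=\beta^{-1}$ you use, and concludes by citing the general fact from \cite{CC79} that a surjective additive map between complete lattices induces a Galois insertion. Your $\alpha(X)=\{s_\alpha\mid \beta^{-1}(\{s_\alpha\})\subseteq X\}$ is the \emph{other} adjoint of $\beta^{-1}$ (the universal image). Both pairs are bona fide connections sharing the same $\gamma$; under the reversed-inclusion orders your $\alpha$ sits as the lower adjoint while the paper's is the upper one. Note also that the paper's direct-image $\alpha$ is the map reused immediately afterwards when lifting to interpretations, so your choice would not line up with the rest of Section~\ref{sec:absframework}.

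The gap is the ``mild condition'' $\alpha'^{-1}(\{\false^\alpha\})=\{\false\}$ that you need for $\alpha(X)\in\A$. This is \emph{not} guaranteed by Definitions~\ref{def:description} or~\ref{dec:corapp}: monotonicity together with surjectivity force $\alpha'(\false)=\false^\alpha$, but nothing forbids some $c<\false$ from also being sent to $\false^\alpha$. If that happens, then for the top element $X=\{\false^\omega\}\in\E$ your $\alpha(X)$ is empty and hence not in $\A$. The paper's direct-image choice sidesteps this entirely, since $\false^\omega\in X$ immediately yields $(\false^\alpha)^\omega=\beta(\false^\omega)\in\beta[X]$. You are right, however, that the upper-correctness properties of Definition~\ref{dec:corapp} play no role in this proposition; the paper's argument likewise relies only on surjectivity of $\alpha'$.
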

\begin{proof}
%The proof follows as in \cite[Proposition 3]{ZaffanellaGL97}. Let
Let $\A = (A , \sqsubseteq^\alpha)$, $\E = (E, \sqsubseteq^c)$ and   $\alpha:E \to A$ and $\gamma:A \to E $ be defined as follows:
\[
%\scriptsize
\begin{array}{ll}
\alpha(S) &= \{\beta(s) \ | \  s \in S\} \mbox{ for } S \in \{ X  \ | \  X \in \cP(\cC^\omega) \mbox{ and } \false^\omega \in X\}  \\
\gamma(S_\alpha) &= \{s \ |\  \beta(s) \in S_\alpha \} \mbox{ for } S_\alpha  \in \{ X \ | \ X \in \cP(\cA^\omega) \mbox { and } (\false^\alpha)^\omega \in X\}  
%	\alpha(\cP(\cC^\omega)) &:= \{ \beta(s) \mid s \in \cP(\cC^\omega) \} \\ 
%	\gamma(\cP(\cA^\omega)) &:= \{ s \mid \beta(s)\in \cP(\cA^\omega) \}
\end{array} 
\]
where $\beta$ is the pointwise extension of $\alpha'$ over sequences. Notice that $\beta$ is a monotonic and surjective function between $\cC^\omega$ and $\cA^\omega$ and set intersection is the lub in both $\E$ and $\A$. We conclude by the fact that any additive and surjective function between complete lattices defines a Galois insertion \cite{CC79}. 
\end{proof}

 We  lift, as  standardly done in abstract interpretations \cite{CC92},
the approximation induced by the above abstraction. 
Let $I: ProcHeads \rightarrow E$, $X: ProcHeads \rightarrow A$, $\beta$ be as in Proposition \ref{prop:gc} 
and $p$ be a process definition. Then
\[
\begin{array}{lll}
	\alpha(I(p)) = \{ \beta(s) \mid s \in I(p) \} & \quad & 
	\gamma(X(p)) = \{ s \mid \beta(s)\in X(p) \}
\end{array} 
\]

%Before stating the soundness of the abstraction, let us first prove that the choice of the concrete constraint $d_{\vx\vt}$ s.t., 
%$\alpha(d_{\vx\vt}) = d^{\alpha}_{\vx\vt}$ in the Equation $\raAbs$  is irrelevant. 
%\begin{proposition}\label{prop:terms}
%Let $\os \cdot \cs_{X}$ be as in Table \ref{absdensems} and  $d_{\vx\vt}$, $d_{\vx\vec{t'}}$ such that $\alpha(d_{\vx\vt}) = \alpha(d_{\vx\vec{t'}})$. For every process $P$, 
%\[
%\os \localp{\vx}{} (P \parallel \bangp\tellp{d_{\vx\vt}}) \cs^{\alpha} = \os \localp{\vx}{} (P \parallel \bangp\tellp{d_{\vx\vec{t'}})}\cs^{\alpha}
%\]
%\end{proposition}
%\begin{proof} 
%The proof is immediate from the fact that $\alpha(d_{\vx\vt}) = \alpha(d_{\vx\vec{t'}})$ and then, $\os \bangp\tellp{d_{\vx\vt}}\cs{^\alpha}=\os \bangp\tellp{d_{\vx\vec{t'}}}\cs{^\alpha}$
%%Let $\vec{t_1}$, $\vec{t_2} \in \cT^{|\vx|}$ and assume that $\vec{t_1} \neq \vec{t_2}$.  Assume also that $s_{\alpha} \in \os P[\vec{t_1}/\vx]\cs_X^\tau$.  One can show that $s_{\alpha} \in \os \localp{\vx}( P \parallel \bangp\tellp{\vx = \vec{t_1} }) \cs_X^\tau$. By  Equations $\rdLocal$ and $\rdPar$, there exists $s'_{\alpha}$ $\vx$-variant of $s_{\alpha}$ s.t. $s'_{\alpha} \in \os P\cs_X^\tau$ and  $s'_{\alpha} \in \os \bangp{\tellp{\vx = \vec{t_1}}}\cs_X^\tau$. By Property (4) in Definition \ref{dec:corapp}, $s'_{\alpha} \in \os \bangp{\tellp{\vx = \vec{t_2}}}\cs_X^\tau$. We conclude $s_{\alpha} \in \os P[\vec{t_2}/\vx]\cs_X^\tau$.
%\end{proof}

We conclude here by showing that concrete computations are safely  approximated by the abstract semantics. 

\begin{theorem}[Soundness of the approximation]\label{teo:corr}
Let $(\cC, \alpha, \cA )$  be a description and   ${\mathbf{A}}$  be upper correct w.r.t. $\mathbf{C}$.   Given a  \utcc\ program $\cD.P$, if 
$ s \in \os P \cs $ then  $\alpha(s) \in \os P \cs^{\alpha}$.  
%  $\os P \cs^\tau \propto \os P \cs$.
\end{theorem}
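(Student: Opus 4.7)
The plan is to prove soundness by structural induction on $P$, showing simultaneously that if $I$ and $X$ are concrete/abstract interpretations with $\alpha(I) \sqsubseteq^\alpha X$, then for every process $P$, $s \in \os P \cs_I$ implies $\alpha(s) \in \os P \cs^\alpha_X$. The full statement of the theorem then follows by Kleene's theorem: since both $T_{\cD}$ and $T_{\cD}^\alpha$ are continuous (Proposition \ref{prop:cont-td} and its abstract counterpart), their least fixpoints are obtained as the lubs of the ascending chains starting from $I_\bot = \lambda p.\cC^\omega$ and $X_\bot = \lambda p.\cA^\omega$; a straightforward induction on the iteration index, using the structural induction as the inner step, transports the approximation relation up to the fixpoint.

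For the structural induction, the self-dual cases ($\skipp$, $P \parallel Q$, $\nextp P$, $p(\vt)$) follow immediately from the inductive hypothesis together with the fact that $\alpha$ commutes with concatenation and set intersection is monotone. The $\tellp{c}$ case uses the best approximation property: if $s(1) \entails c$ then, since $\alpha$ is monotonic, $\alpha(s(1)) \absentails \alpha(c)$, i.e.\ $\alpha(s) \in \up(\alpha(c)).\cA^\omega$. The $\unlessp{c}{P}$ case is trivial because the abstract equation $\raUnless$ returns the whole $\cA^\omega$. The $\localp{\vx}{P}$ case follows from upper correctness condition (1) of Definition \ref{dec:corapp}, which gives $\alpha(\exists\vx(\cdot)) \equivC_\alpha \exists^\alpha\vx(\alpha(\cdot))$ and hence $\alpha(\Exists\vx(\os P\cs_I)) \subseteq \Exists\vx(\alpha(\os P\cs_I)) \subseteq \Exists\vx(\os P\cs^\alpha_X)$ by the IH.

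The subtle case is $P = \whenp{c}{Q}$, which is precisely where the relation $\absconcentails$ has been crafted. If $s \in \os\whenp{c}{Q}\cs_I$, there are two subcases. If $\alpha(s(1)) \not\absconcentails c$, then $\alpha(s) \in \overline{\upa{c}}.\cA^\omega$ and we are done. Otherwise, $\alpha(s(1)) \absconcentails c$; since $\alpha(s(1))$ is the best approximation of $s(1)$, we have $\alpha(s(1)) \propto s(1)$, and by Definition \ref{def:absconcentails} this yields $s(1) \entails c$. Thus $s \in \os Q \cs_I$ by the concrete equation, and the IH gives $\alpha(s) \in \os Q \cs^\alpha_X$, so $\alpha(s) \in \upa{c}.\cA^\omega \cap \os Q \cs^\alpha_X$. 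The $\absp{\vx}{c}{Q}$ case lifts this argument through the $\Forall$ operator, using upper correctness condition (2) ($\alpha(\dxt) \equivC_\alpha \dxta$) to ensure that the admissible substitutions witnessing membership in the concrete $\Forall$ also witness membership in the abstract $\Forall$, together with Proposition \ref{prop:forall-subs} to reduce the quantified statement to an IH on $Q\sxt$.

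The main obstacle I anticipate is bookkeeping in the abstraction case: one must check carefully that the condition $(\dxt^\alpha)^\omega \leq s'_\alpha$ used inside the abstract $\Forall$ is implied, via upper correctness condition (2), by the concrete condition $\dxt^\omega \leq s'$, and that the commutation $\alpha(\exists\vx\, s) \equivC_\alpha \exists^\alpha\vx(\alpha(s))$ from condition (1) extends pointwise to sequences. Once these two bridges are in place the quantifier alternation in $\Forall$ matches up, and everything else is routine.
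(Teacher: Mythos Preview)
Your proposal is correct and follows essentially the same route as the paper: a structural induction on $P$ combined with an induction on the Kleene iterates, using upper correctness for the $\mathbf{local}$ and $\mathbf{abs}$ cases and the definition of $\absconcentails$ for the $\mathbf{when}$ case. The paper packages this as a single lexicographic induction on the pair $(n,\text{structure of }P)$ where $n$ is the iteration index, whereas you separate the two layers; the arguments are otherwise identical case by case.

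One small caution on the abstraction case: you write that Proposition~\ref{prop:forall-subs} reduces to ``an IH on $Q\sxt$''. Syntactically $Q\sxt = \localp{\vx}{(\bangp\tellp{\dxt}\parallel Q)}$ is \emph{not} a structural subterm of $\absp{\vx}{c}{Q}$ (it contains a replication, i.e.\ a fresh recursive definition), so you cannot invoke the structural hypothesis on it directly. What the paper does---and what your anticipated ``bookkeeping'' paragraph essentially describes---is unfold the semantics of the substitution as $\os Q\sxt\cs_I=\Exists\vx(\os Q\cs_I\cap\up(\dxt^\omega))$, pick a concrete $\vx$-variant $s'\in\os Q\cs_I$ with $\dxt^\omega\le s'$, apply the structural IH to $Q$ itself, and then use conditions (1) and (2) of Definition~\ref{dec:corapp} to push $\alpha$ through $\exists\vx$ and $\dxt$. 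With that adjustment your argument goes through exactly as in the paper.
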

\begin{proof} 
Let $d_{\alpha}.s_{\alpha} = \alpha(d.s)$ and assume that $d.s \in  \os P \cs$. Then, $d.s \in \os P\cs_I$ where $I$ is the $lfp$ of $T_{\cD}$. By the continuity of $T_{\cD}$,  there exists $n$ s.t.  $I = T_{\cD}^n(I_\bot)$ (the $n$-th application of $T_{\cD}$). 
We proceed by  induction on the lexicographical order on 
the pair $n$ and the structure of $P$, where the predominant component is  $n$. We only present the interesting cases. The others can be found in \ref{app-sec-abs}.

\noindent {\underline{\bf Case $P=\absp{\vx}{c}{Q}$}}. 
Let $\sxt$ be an admissible substitution.
		We shall prove that  $s \in \os (\whenp{c}{Q})\sxt\cs$ implies $s_\alpha \in \os (\whenp{c}{Q})\sxt\cs^\alpha$. 
		The result follows from Proposition 
		 \ref{prop:forall-subs}  and from the fact that 
		 $s_\alpha \in \Forall \vx(\os \whenp{c}{Q}\cs^\alpha)$ iff $s_\alpha \in \os (\whenp{c}{Q})\sxt  \cs^\alpha $ for all $\adm{\vx}{\vt}$. The proof of the previous statement is similar to that of 
		 Proposition 
		 \ref{prop:forall-subs} and it appears in   \ref{app:proofs-aux}. 

Assume that $d\entails c\sxt$. Then,   $d.s\in \os Q\sxt\cs$ and we distinguish two cases:

\noindent(1) 
$d_\alpha \absconcentails c\sxt$.
		Since $d.s\in \os Q\sxt\cs$ then 
		$d.s \in \Exists \vx (\os Q \cs \cap \up(\dxt^\omega))$. Therefore, there exists $d'.s'$, an $\vx$-variant of $d.s$, s.t.
		$d'.s'\in \os Q \cs$ and $d'.s' \in \up(\dxt^\omega)$. 
		By (structural) inductive hypothesis, $\alpha(d'.s') \in \os Q\csa$. Furthermore, by monotonicity of $\alpha$ and Property (2) in Definition \ref{dec:corapp}, we derive
		 $\alpha(d'.s') \in \uparrow (\dxta)^\omega$. Hence
		$\alpha(d'.s') \in  (\os Q \csa \cap \up((\dxta)^\omega)$. Since $\exists \vx(d.s) = \exists \vx(d'.s')$, 
		by Property (1) in Definition \ref{dec:corapp}, we have $\existsa \vx(\alpha(d.s)) = \existsa \vx(\alpha(d'.s'))$ (i.e., $\alpha(d'.s')$ is an $\vx$-variant of $d_\alpha.s_\alpha$).
		Then, $d_\alpha.s_\alpha \in  \Exists \vx (\os Q\csa \cap \up((\dxta)^\omega))$ and we conclude  $d_\alpha.s_\alpha \in \os Q\sxt\csa$. 
		
\noindent(2) $d_\alpha \not\absconcentails c\sxt$. Hence trivially $d_\alpha.s_\alpha \in \os (\whenp{c}{Q})\sxt \csa$. 

We conclude by noticing that if $d\notentails c\sxt$ then 
$d_\alpha \not\absconcentails c\sxt$ and therefore $d_\alpha.s_\alpha \in \os (\whenp{c}{Q})\sxt \csa$.

\noindent \underline{\bf Case 	 $P\defsymbol p(\vt)$}. 
Let  $p(\vx) \defsymbol  Q$ in $\cD$ be a process definition. 
	If $d.s \in \os p(\vt) \cs$ then $d.s \in I(p(\vt))$ (recall that $I=lfp(T_{\cD})$). We know that $d.s \in \os Q\sxt \cs$ and then, 
	 $d.s \in \os Q\sxt \cs_{I'}$ where 
	 $I' = T_{\cD}^{m}(I_\bot)$ with $m < n$. % Hence, 	$I(p(\vt)) = \os Q \sxt\cs_{I'}$. % T_{\cD}(I')(p(\vt))$. 
	%Since $d.s \in \os Q\sxt \cs$, then it must be the case that 
	%$I'$ is a fixed point of $T_{\cD}$ when inter
	%We know that $d.s \in \os Q\sxt \cs_{I'}$ and $I'$ is a 
	By induction, and continuity of $T_{\cD}^\alpha$,  we know that 
	$d_\alpha.s_\alpha \in \os Q\sxt\cs^\alpha$ and then 
	$d_\alpha.s_\alpha \in \os p(\vt)\csa$.
%	where $T_{\cD}^\alpha(X)(p(\vt)=\os Q\sxt\cs^\alpha$. 
%	 and then, 
%	$d_\alpha.s_\alpha \in \os Q\sxt\cs^\alpha_{X}$
   %   where $X =lfp(T_{\cD}^\alpha)$. Therefore, $d_\alpha.s_\alpha \in \os p(\vt)\csa$. 
%	there exists $X'$ s.t. $d_\alpha.s_\alpha \in \os Q\sxt\cs_{X'}^\alpha$ and   $X' = (T_{\cD}^\alpha)^m(I^\alpha_\bot)$ for some $m$. We conclude by noticing that $X = T_{\cD}^\alpha(X')$ is a fixed point of
%$(T_{\cD}^\alpha)^{m+1}(I^\alpha_\bot)(p(\vt))$ and then,
\end{proof}

%having a length bigger than $k$. In our case, sequences are approximated by replacing each
%constraint at length bigger than $k$ with the constraint $\true^\alpha$ which approximates every possible output in the concrete domain. 

% If a sequence $s = c_1, c_2, c_3, \ldots, c_n, \ldots$ is such that
% $\forall  m>n.  c_m = c_n$, then we say that $s$ is finite.
%
%First of all we define the sequence abstraction $s\wr_{k}$(for $k\geq 0$) as the $\mathit{sequence}(k)$ cut of the concrete sequence $s$.  We mean by that  $s\wr_{k} = s'$, where  $s'(i) = s(i)$ for $i \leq k-1$ and $s'(k) = \false$.	 We denote by $S\wr_{k}$ the set of sequences cut at length $k$. The abstract domain $\A$   is thus $\mathit{\powerset}(S\wr_{k})$  ordered by the extension of ordering $\leq$ to sets, i.e.  $X~\leq_S~Y$ iff $\forall x\in{X}\:{\exists y\in{Y}}:(x~\leq~y)$
%\cite{Smyth78}.
%The resulting abstraction $\alpha$ is $\alpha(E)(p) := \{ s\wr_{k}	\mid s \in I(p) \}$. By abuse, we will denote by the same notation $\leq_S$ its standard extension to interpretations. A sequence $s = c_{1}, \ldots, c_{n}$ is complete if $c_{n} \neq false$.

\subsection{Obtaining a finite analysis}
As standard in Abstract Interpretation, it is possible to obtain an 
analysis which terminates, by imposing several alternative conditions 
(see for instance Chapter 9 in \cite{CC92}).
So, one possibility is to impose that the abstract domain is 
noetherian (also called finite ascending chain condition). Another 
possibility is to use widening operators, or to find an abstract 
domain that guarantees termination after a finite number of steps. So, our framework allows to 
use all this classical methodologies.
In the examples that we have developed 
we shall focus our attention on a special class of abstract interpretations 
obtained 
by defining 
what we call a \emph{sequence abstraction} mapping possibly infinite sequences of (abstract) 
constraints into finite ones. Actually we can define these 
abstractions 
as Galois connections.

\begin{definition}[$k$-sequence Abstraction]
%A sequence abstraction  $\tau$ is a reductive ($\tau(s_{\alpha}) \leq ^\alpha s_{\alpha}$) 
%and monotonic operator.  We  lift $\tau$  to sets of sequences in the obvious way: 
%$\tau(S_\alpha) = \{s_\alpha  \ | \ s_\alpha = \tau(s_\alpha') \mbox{ and } s_\alpha' \in S\}$. 
   A $k$-sequence abstraction  is given by the
following pair of functions $(\alpha_{k},\gamma_{k})$, with 
$\alpha_{k}: (\cA^{\omega},\leq^\alpha) \rightarrow (\cA_{k}^{*},\leq^\alpha)$, 
and $\gamma_{k}: (\cA_{k}^{*},\leq^\alpha) \rightarrow (\cA^{\omega},\leq^\alpha)$.
As for  the function $\alpha_{k}$, we set $\alpha_{k}(s)=s'$ where $s'$ has length $k$  and $s'(i)=s(i)$ for $i\leq k$. Similarly,  $\gamma_{k}(s') = s$ where $s'(i)=s(i)$ for $i\leq k$ and $s'(i)=\true$ for $i>k$.
\end{definition}

It is easy to see that, for any $k$, $(\alpha_{k},\gamma_{k})$ 
defines 
a Galois connection between $(\cA^{\omega},\leq^\alpha)$ and $(\cA^{*}_{k},\leq^\alpha)$.
Thus it is possible to use compositions of Galois connections for 
obtaining a new abstraction \cite{CC92}.

If $\cA$ in $(\cC ,\alpha, \cA)$ leads to a  Noetherian abstract domain $\A$, 
then the abstraction obtained from the composition of $\alpha$ and any 
$\alpha_{k}$ above guarantees that the fixpoint of  the abstract semantics 
can be reached in a finite number of iterations. 
% Roughly speaking, it can be shown that in the $j$-th iteration of 
% the abstract denotational semantics each
% computed sequence $s$ has $j$ significative elements (i.e. 
% $s(i) \neq \true$ if $i \leq j$), so for a
% $k$-sequence abstraction after $k$ steps the maximal possible number 
% of significative elements in a sequence is computed and we reach a 
% fixpoint.
Actually the domain that we obtain in this way is given by  
sequences cut at length $k$. The number $k$ determines the length of 
the cut and hence the precision of the approximation. The bigger $k$ 
the better the approximation.
% So our $k$-sequence abstraction corresponds to a methodology that 
% guarantees termination of the semantic abstract operator in a finite number
% of steps.

 %A simple albeit useful  instance of the abstraction $\tau$ is the  
 %$\mathit{sequence}(k)$ cut. This abstraction approximates  a sequence  by projecting 
 %it to its first $k$ elements.

% We consider the extension of the order to 
% equivalences classes, and we thus work with the quotient set,
% i.e. if $X~\lesssim_S~Y$ and 
% $Y~\lesssim_S~X$ we consider $X$ and $Y$ as belonging to the same
% equivalence class.

%%%%%%%%%%%%%%%%%%%%%%%%%%%%%%%%%%%
%%%%% APPLICATIONS
%%%%%%%%%%%%%%%%%%%%%%%%%%%%%%%%%%%
\section{Applications}\label{sec:app}
%%%%%%%%%%%%%%%%%%%
% Aplications
%%%%%%%%%%%%%%%%%%%
This section is devoted to show some applications of the abstract semantics developed here. We shall describe  three specific abstract domains as instances of our framework: (1) we abstract a constraint  system  representing cryptographic primitives.  Then we use the  abstract  semantics to exhibit a secrecy flaw in a security protocol modeled in \utcc.   Next, (2) we tailor two abstract domains from logic programming to perform a  groundness 
and a type analysis of a \tccp\ program. We then apply this analysis in the verification of a reactive system in \tccp. Finally, (3) we propose an abstract constraint system for the suspension analysis of \tccp\ programs.

%%%%%%%%%%%%%%
% SECURITY EXAMPLE
%%%%%%%%%%%%%%
 \subsection{Verification of Security Protocols}\label{sec:appsec}
 The ability of \utcc\ to express mobile behavior, as in Example \ref{ex:mobility},  allows for the  modeling of  security protocols. Here we describe an abstraction of a  cryptographic constraint system  in order to bound the length of the messages to be considered in a secrecy analysis. We start by recalling the constraint system in \cite{Olarte:08:SAC} whose terms represent the  messages  generated by the protocol and cryptographic  primitives are represented as functions over such terms.

\begin{definition}[Cryptographic Constraint System]\label{def:css} Let  $\Sigma$ be a signature with constant symbols in $\cP \cup \cK$, function symbols $\encp$,  $\pairp$,  $\priv$ and $\pub$ and predicates $\outp(\cdot)$ and  $\secretp(\cdot)$. Constraint  in $\cC$ are formulas 
built from predicates in $\Sigma$, conjunction ($\sqcup$) and $\exists$. 
%first-order formulas built over $\Sigma$ via conjunction (\sqcup) and exis. %The entailment relation $\entails$ is the  closure under deduction of $\{ \ F \ |\  \entailsec F\ \}$ with $\entailsec$ as follows:
%\[\scriptsize
%\begin{array}{cc}
%\rightinfer[\rPrj]{F \entailsec \outp(m_{i}) }{F \entailsec \outp(\{m_1,m_2\})\ \ i\in \{1,2\}}
%&
%\rightinfer[\rDec]{F \entailsec \outp(m)}{F \entailsec \outp(\priv(k)) \ \ \ \ F\entailsec \outp(\{m\}_{\pub(k)})}
%\end{array}
%\]

%The constraint system is then $\l \cC, \leq, \wedge, \true,\false, {\it Var}, \exists,d_{xy} \r$ where $\wedge$ ,$\true$ and $\false$ are logical conjunction, $true$ and $false$. The cylindrification operator corresponds to existential quantification in logic and $d_{xy}$ to logical equivalence.  
\end{definition}

Intuitively, $\cP$ and $\cK$ represent respectively the  principal identifiers, e.g.  $A,B,\ldots$  and  keys $k,k'$. We use $\{m\}_k$ and  $(m_1,m_2)$ respectively,  for  $\encp(m,k)$ (encryption) and   $\pairp(m_1,m_2)$ (composition).  For the generation of keys, $\priv(k)$ stands for the private key associated to the value $k$ and $\pub(k)$ for its public key. 
% The rule $\rPrj$ says that if one can infer a compound message, then one can infer the parts of the message. In $\rDec$, the content of a message encrypted with a previously known private key can be  inferred. 

As standardly done in the verification of security protocols, a Dolev-Yao attacker \cite{dolev-yao} is presupposed, able to  eavesdrop, disassemble, compose, encrypt and decrypt messages with available keys. The ability to eavesdrop all the messages in transit in the network is implicit in our model due to the shared store of constraints. The other abilities are  modeled by the following \utcc\ processes:
\[
\begin{array}{lll}
{Disam}() &\defsymbol & \absp{x,y}{\outp(\ (x,y) \ ) }{ \tellp{\outp{(x)} \sqcup \outp{(y)}}} \\
{Comp}()  &\defsymbol & \absp{x,y}{\outp(x) \sqcup \outp(y)}{  \tellp{\outp{( \ (x,y) \  )}}} \\
{Enc}()&\defsymbol &     \absp{x,y}{\outp(x) \sqcup \outp(y)}{  \tellp{\outp{(\{x\}_{\pub(y)})}}}  \\
{Dec}()&\defsymbol &    \absp{x,y}{\outp(\priv(y)) \sqcup \outp(\{ x\}_{\pub(y)})}{  \tellp{\outp{(x)}}}   \\
{Pers}() & \defsymbol  &    \absp{x}{\outp(x)}\nextp{\tellp{\outp(x)}} \\
{Spy}() & \defsymbol  &  {Disam}() \parallel {Comp}() \parallel {Enc}() 
\parallel  {Dec}() \parallel {Pers}() \parallel \nextp{{Spy}()}
\end{array}
\]

 Since the final store is not automatically transferred to the next time-unit, the process $Pers$ above models the ability to remember all messages  posted so far.

It is easy to see that the process ${Spy}()$ in a store  $\outp(m)$ may add messages of unbounded length. Take for example the process  ${Comp}()$ that will add the constraints $\outp(m)$, $\outp( ( m,(m,m)))$, $\outp(((m,m),m))$ and so on. 

To deal with the inherent state explosion problem in the model of the attacker, 
symbolic (compact) representations of the behavior of the attacker 
have been proposed, for instance in 
\cite{boreale01symbolic,compsym-fiore,Olarte:08:SAC,BBD10}. Here we  follow the approach of restricting  the number of states to be considered in the verification of the protocol, as for instance  in \cite{DBLP:journals/corr/abs-1105-5282,SongBP01,DBLP:journals/ijisec/ArmandoC08}. Roughly, we  shall cut the messages generated   of length greater  than a given $\kappa$, thus allowing us to model a bounded version of the attacker. 

Before defining the abstraction, we notice that the constraint system  we are considering includes existentially quantified syntactic  equations. For this kind of equations it is necessary to refer to a solved form of them  in order to have a uniform way to compute an approximation of the constraint system. We then consider constraints of the shape $\exists \vy (x_1 = t_1(\vy) \sqcup ... \sqcup x_n=t_n(\vy))$ where $\vx= x_1,...x_n$ are pairwise distinct   and $\vx \cap \vy = \emptyset$. Here,  
$t(\vec{y})$  refers to a term where $\fv(t(\vy)) \subseteq \vy$. Given a constraint, its normal form  can be obtained by applying the algorithm proposed in \cite{Mah88b} where: quantifiers are moved to the outermost position and equations of the form $f(t_1,...,t_n) =f(t_1',...,t_n')$
are replaced by  $t_1=t_1' \sqcup ... \sqcup t_n = t_n'$; equations such as   $x=x$ are deleted; equation of the form $t=x$ are replaced by $x=t$; and given $x=t$, if $x$ does not occur in $t$, $x$ is replaced by $t$ in $t'$ in all equation of the form $x'=t'$.  For instance, the solved form of $ \exists z, y ( x=f(y) \sqcup y=g(z) )$ is the constraint $\exists z (x=f(g(z) ))$.

\begin{definition}[Abstract secure constraint system]\label{def:abs-sec-cs}
Let $\cM$ be the set of terms (messages) generated from the signature $\Sigma$ in Definition \ref{def:css}. Let   ${\it lg}: \cM \to \mathbb{N}$ be defined as 
${\it lg}(m) = 0$ if $m\in \cP\cup\cK\cup Var$; ${\it lg}(\{m_1\}_{m_2}) = {\it lg} (\  (m_1,m_2)\ ) = 1 + {\it lg}(m_1) + {\it lg}(m_2)$. 
Let $cut_\kappa(m)= m $ if  ${\it lg}(m) \leq \kappa$. Otherwise,  $cut_\kappa(m)=m_{\top}$  
where $m_{\top}\notin \cM$  represents  all the messages  whose length is greater than $\kappa$. 
We define $\alpha(c)$ as $\alpha_{\kappa}(NF(c))$ where
\[
\begin{array}{llll l llll}
\alpha_{\kappa}(c(m)) &=&c(cut_\kappa(m))
& \quad &
\alpha_{\kappa}(d_{xt}) &=& d_{xt'} \mbox{ where } t' = cut_\kappa(t)
\\
\alpha_{\kappa}(c \sqcup c') &=& \alpha_{\kappa}(c) \sqcup \alpha_{\kappa}(c') 
& \quad &
\alpha_{\kappa}(\exists \vx c ) &=& \exists \vx \alpha_{\kappa}(c) \\
\end{array}
\]
and $NF(c)$ is a solved form  of the constraint $c$.
We  omit the superscript $\alpha$ in the abstract operators $\sqcup^{\alpha}$, $\exists^{\alpha}$ and $d_{\vec{x}\vec{t}}^{\alpha}$ to simplify the notation. 
%\[{\scriptsize
%cut_\kappa(m)= \left\{
%\begin{array}{ll}
%m & \mbox{ \ if \ ${\it lg}(m) \leq \kappa$} \\
%m_{\top} & \mbox{ otherwise}
%\end{array}
%\right.
%}
%\]
%with $m_{\top}\notin \cM$ (representing all the messages with length greater than $\kappa$) . %Let $\cC$ be as in Definition $\ref{def:css}$  and $(\cC,\alpha,\cA)$ be a description  where $\alpha(\outp(m)) = \outp(cut_\kappa(m))$ and $\alpha(\secretp(m)) = \secretp(cut_\kappa(m))$. 
\end{definition}

We note that the previous abstraction reminds of 
the $depth\mbox{-}\kappa$ abstractions typically done 
in the analysis of logic programs (see e.g., \cite{ST84}). 

We shall illustrate the use of the abstract constraint system above by performing a secrecy analysis on the
Needham-Schr\"oder  (NS) protocol \cite{lowe95attack}. This protocol aims at distributing two \emph{nonces} in a secure way. 
Figure \ref{fig:ns}(a) shows the steps of NS where $m$ and $n$ represent  the nonces generated, respectively,  by the principals $A$ and $B$.
The protocol initiates when 
$A$ sends to $B$ a new 
nonce $m$ together with her own agent name $A$, both encrypted with $B$'s public key. When $B$ receives the message,
he decrypts it with his secret private key. Once decrypted, 
$B$ prepares an encrypted message for $A$ that contains 
a new nonce $n$ together with the nonce  $m$ and his name $B$.   $A$ then 
recovers the clear text using her private key. $A$ convinces 
herself that this message really comes from B by checking 
whether she got back the same nonce sent out in the first 
message. If that is the case, she acknowledges B by returning his nonce. 
$B$ does a similar test. 
%%%%%%%%%%%%%%%%%%
\begin{figure}
\resizebox{.8\textwidth}{!}{
\subfloat[]{
$
\begin{array}{llll}
{\mathsf{M_1}   \ } & A \to B & :  & \{(m,A)\}_{\pub(B)}\\
{\mathsf{M_2}   \ } & B \to A & :  & \{(m,n,B)\}_{\pub(A)}\\
{\mathsf{M_3}   \ } & A \to B & :  & \{n\}_{\pub(B)}\\
\end{array}
$
}
\quad
\subfloat[]{
$
\begin{array}{llll}
{\mathsf{M_1}   \ }& A\to C & : & \{(m,A)\}_{\pub{(C)}}\\
{\mathsf{M_1'}   \ }& C\to B & : & \{(m,A)\}_{\pub{(B)}}\\
{\mathsf{M_2}   \ }& B\to A & : & \{(m,n,B)\}_{\pub{(A)}}\\ 
{\mathsf{M_3}   \ }& A\to C & : & \{n\}_{\pub{(C)}}\\
\end{array}
$
}
}
\caption {Steps of the Needham-Schroeder Protocol \label{fig:ns}}
\end{figure}
%%%%%%%%%%%%%%%%%%

Assume the execution of the protocol in Figure \ref{fig:ns}(b). Here $C$ is an intruder, i.e. a malicious agent playing the role of a principal in the protocol. As it was shown in \cite{lowe95attack}, this execution leads to a secrecy flaw where the attacker $C$ can reveal  $n$ which is meant to be known only by $A$ and $B$. 
In this execution, the attacker replies to $B$ the message sent by $A$ and $B$ believes that he is establishing a session key with $A$. Since the attacker knows the private key $\priv(C)$, she can decrypt the message $\{n\}_{\pub{(C)}}$ and $n$ is no longer a secret between $B$ and $A$ as intended.

We model the behavior of the principals of the NS protocol with the process definitions in Figure \ref{ns:utcc:procs}. 
\begin{figure}
\resizebox{\textwidth}{!}{
$
\begin{array}{lll}
{Init}(i,r) &\defsymbol &  \localp{m}{}   \tellp{\outp(\{(m,i)\}_{pub(r))}} \parallel \\
&& \qquad\qquad\qquad  \nextp \absp{x}{\outp(\{(m,x,r)\}_{\pub(i)})}{\tellp{\outp(\{x\}_{\pub(r))}}}\\
&&  \parallel \nextp {Init}(i,r) \\
{Resp(r)} &\defsymbol &  \absp{x,u}{\outp(\{(x,u)\}_{\pub(r)})}\nextp\\
&& \tabp\tabp \ \localp{n}{} {(Secrete}(n)  \parallel \tellp{\outp(\{x,n,r\}_{\pub{(u)})}})\\ 
 && \parallel \nextp {Resp(r)}\\
{Secrete}(x) &\defsymbol & \tellp{\secretp(x)} \parallel \nextp{{Secrete}(x)} \\
{SpKn}() &\defsymbol &  \parallel_{A\in \cP} \ \tellp{\outp(A)\sqcup \outp(\pub(A))}\\
&&  \parallel_{A\in Bad} \ \tellp{\outp(\priv(A))}\\
&&  \parallel \nextp{SpKn}()
\end{array}
$
}
\caption{\utcc\ model of the Needham-Schr\"oder Protocol \label{ns:utcc:procs}}
\end{figure}
Nonce generation is modeled by $\mathbf{local}$ constructs and the process $\tellp{\outp(m)}$ models the broadcast of the  message $m$. Inputs (message reception) are modeled by 
 $\mathbf{abs}$ processes as in Example \ref{ex:ex-mob-sos}. % Notice that both, ${Init}$ and ${Resp}$ are recursively called in the next time-unit. This models the fact that principal may initiate different sessions during the execution of the protocol.  
 In ${Resp}$, we use the process ${Secrete}(n)$ to state that the nonce  $n$ cannot be revealed. Finally, the process ${SpKn}$ corresponds to the initial knowledge of the attacker:   the names of the principals, their public keys and the 
 leaked keys in the set $Bad$ (e.g., the private key of $C$ in the configuration of Figure \ref{fig:ns} (b)).

Consider the following process:
 %Now we are reThe following process models the execution of the protocol in Figure \ref{fig:ns} (b).
\begin{equation}\label{eq:ns-proc}
{NS} :-\ \   {Spy} \parallel {SpKn} \parallel {Init}(A,C) \parallel {Resp}(B)
\end{equation}

By using the composition of  $\alpha_3$ (as in Definition \ref{def:abs-sec-cs}) and the sequence abstraction $2$-$sequence$, we  obtain the abstract semantics of ${NS}$ as showed in  Figure \ref{fig:semantics-ns}. This allows us  to exhibit the  secrecy flaw of the NS protocol pointed out in \cite{lowe95attack}: 
Let $s=c_1.c_2$ s.t. $s  \in \os \mathbf{NS}\cs^{\alpha}$. Then, there exist a $m_1$-$n_1$-variant $s'=c_1'.c_2'$ of $s$ s.t.
 \[
\begin{array}{l}
c_1'  \entails   \outp(\{m_1,A\}_{\pub(C)}) \sqcup \outp(\priv(C)) \sqcup \outp(\{m_1,A\}_{\pub(B)})\\
c_2'  \entails  \outp(\{m_1,n_1,A\}_{\pub(A)}) \sqcup \outp(\{n_1\}_{\pub(C)}) \sqcup \outp(\secretp(n_1)) \sqcup \outp(\outp(n_1))
\end{array}
\]
This means that the nonce $n_1$ appears as plain text in the network and it is no longer a secret between $A$ and $B$ as intended.

\begin{figure}
\resizebox{\textwidth}{!}{
$
 \begin{array}{lcl}
	\os {Init}(A,C)\cs^{\alpha} &= & \Exists\  m_1 \ \Exists\  m_2 \ \ (\{c_1.c_2 \ | \ c_1\entailsa \outp(\{m_1,A\}_{\pub{(C)}}) ,\ c_2\entailsa \outp(\{A,m_2\}_{\pub{(C)}}) \} \cap \\
	& & \tabp\tabp\tabp\cA. \Forall  x (	\{ c_2 \mid \mbox{if } c_2 \absconcentails \outp(\{m_1,x,C\}_{\pub(A)}) \mbox{ then } c_2 \entailsa \outp(\{x\}_{\pub(C)})
	\}))\\\\
	\os {resp}(B)\cs^{\alpha} &= &  \Forall   x,u   (
	\ \Exists \ n_1\ \{c_1.c_2 \ |\  \mbox{if } c_1 \absconcentails \outp(\{x,u\}_{\pub(B)}) \mbox{ then }\\
	& &  \tabp\tabp\tabp\tabp\tabp\tabp\ \ \  
	 c_2 \entailsa \secretp(n_1) \sqcup \outp(\{x,n_1,B\}_{\pub(u)}) \}
	) \\\\
\os {Spy}\cs^{\alpha} & = &  \Forall \ x\  ( \{c_1.c_2 \ | \ \mbox{ if } c_1 \absconcentails \outp(x) \mbox{ then } c_2 \entailsa \outp(x)\}) \cap S.S
 \mbox{ where }\\\\

S & = &  \Forall  x,y   (  \{c \ |\  \mbox{ if } c\absconcentails \outp(x) \sqcup \outp(y) \mbox { then } c \entailsa  \outp(\{x,y\}) \sqcup \outp(\{x\}_{\pub(y)}) \} \cap \\
& & \ \ \ \tabp \tabp 
\{c \ |\  \mbox{ if } c\absconcentails \outp(\{x,y\}) \mbox{ then } c \entailsa \outp(x) \sqcup \outp(y)\} \cap \\
& & \ \ \  \tabp \tabp 
 \{c \ | \ \mbox{ if } c \absconcentails \outp(\{x\}_{\pub(y)}) \sqcup \outp(\priv(y)) \mbox{ then } c \entailsa \outp(x)\}) \\\\

\os {SpKn}\cs^{\alpha} & = &  \{
c_1.c_2 \ | \ c_i \entailsa \outp(\pub(A))\sqcup \outp(\pub(B))\sqcup \outp(\pub(C))\} \cap \\
& &  \{ c_1.c_2 \ | \ c_i \entailsa \outp(A)\sqcup \outp(B) \sqcup \outp(C)\sqcup \outp(\priv(C))
\} \\\\
\os {NS}\cs^{\alpha} & = & 
\ \ \os {Spy}\cs^{\alpha}  \cap \os {SpKn}\cs^{\alpha}  \cap \os \mathbf{init}(A,C)\cs^{\alpha}  \cap \os \mathbf{resp}(B)\cs^{\alpha}\\
  \end{array}
$
}
\caption{Abstract semantics of the process ${NS}$ in Equation \ref{eq:ns-proc}\label{fig:semantics-ns}}
\end{figure}

\subsection{Groundness Analysis}\label{sec:ground}
In logic programming one useful  analysis is groundness. It aims at determining if a variable will always be bound to a ground term. This information can be used, e.g., for optimization in the compiler
 or as base for other data flow analyses such as independence analysis, suspension analysis, etc. Here we present a  groundness analysis for a \tccp\ program. To this end, we shall use as concrete domain the Herbrand Constraint System and the following running example. 
   \begin{figure}
\resizebox{\textwidth}{!}{
$
\begin{array}{lll}
{\it gen}_a(x) & :- & \localp{x'}{}( {\it assign}(x,[a|x']) \parallel \\
& & \ \tabp\tabp\tabp \whenp{\it go_a=[]}{\nextp{{\it gen}_a(x')}} \parallel  \whenp{\it stop_a=[]}{{\it assign}(x',[]))} \\\\
% {\it gen}_b(x) & :- & \localp{x'}{} ({\it assign}(x = [b | x']) \parallel 
% \\
% & & \ \tabp\tabp\tabp\whenp{\it go_b=[]}{\nextp{{\it gen}_b(x')}} \parallel \\
% & & \ \tabp\tabp\tabp   \whenp{\it stop_b=[]}{{\it assign}(x', []))}  \\\\
 
 {\it assign}(x,y) & :- & \tellp{x = y} \parallel \nextp{{\it assign}(x,y)}  \\\\

 {\it append}(x,y,z) & :- & \whenp{x=[ ]}{{\it assign}(y,z)} \parallel \\
 & &  \whenp{\exists_{x',x''}(x=[x' \ | x''])} \\
 & & \ \ \ \ \  \localp{x',x'',z'}{({\it assign}(x, [x'  |  x''])} \parallel  {{\it assign}(z, [x'  |  z'])} \parallel   \nextp{{\it append}(x'',y,z')} )
\end{array}
$
}
\caption{Appending streams (Example \ref{ex:append}). The process definition $gen_b$ is similar to  $gen_a$ but replacing the constant $a$ with $b$. \label{fig-ex-append}}
\end{figure}
 \begin{example}[Append]\label{ex:append}
Assume the process definitions  in Figure \ref{fig-ex-append}.  The process ${\it gen}_a(x)$ adds an ``$a$'' to the stream  $x$  when the environment provides $go_a=[]$ as input. Under input $stop_a=[]$, ${\it gen}_a(x)$ terminates the stream binding its tail to the empty list. The process  ${\it gen}_b$ can be explained similarly.  The process ${\it assign}(x,y)$ 
persistently equates $x$ and $y$. Finally,  ${\it append}(x,y,z)$ binds $z$ to the concatenation of $x$ and $y$. 
\end{example}

We shall use $Pos$ \cite{armstrong98two} as abstract domain for the groundness analysis.  In $Pos$, positive propositional formulas 
%(i.e., whenever all variables are instantiated  to $\true$, the formula is equivalent to $\true$)
  represent groundness dependencies among variables. 
  For instance, $\alpha_G(x=[a | b]) = x$ meaning that $x$ is a ground variable and $\alpha_G(x=[y | z]) = x \leftrightarrow (y \wedge z)$   meaning that    $x$ is ground if and only if both $y$ and $z$ are ground. 
Elements in this domain are ordered by logical implication, e.g., 
$x \sqcup (x \leftrightarrow (y \wedge z) )\entails_{\alpha_G} y$.
%Let  $\alpha_G$ be defined over equations in normal form as:
%\begin{equation} \label{eq:ground}
%\alpha_G(x=t) = \iff(x,\fv(t))
%\end{equation}
%For the term-abstraction,  $\alpha_t(x) = x$ for any variable $x$  and $\alpha_t(c)=g$ for any constant symbol $c$. Intuitively, the abstract term $g$ represents all the constant terms in the concrete counterpart. 

\begin{observation}[Precision of Pos with respect to Synchronization]
Notice that $Pos$ does not distinguish between the empty list and a list of ground terms:  $d_\kappa = \alpha_G(x=[]) = \alpha_G(x=[a]) = x$ and then,   $d_\kappa \not\absconcentails x=[]$ (see Definition \ref{def:absconcentails}).  This affects the precision of the analysis. For instance, let $
P=\tellp{x=[]}$ and $Q= \whenp{x=[]}{\tellp{y=[]}}
$. 
One would expect that the groundness analysis of $P \parallel Q$ determines that $x$ and $y$ are ground variables. Nevertheless, it is easy to see that  $x.true^\omega\in \os P\cs^{\alpha_G}$ and then, the information added by $\tellp{y=[]}$ is lost.
\end{observation}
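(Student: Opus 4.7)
The observation bundles three verifiable claims that I would establish in turn: (i) the identities $\alpha_G(x=[])=\alpha_G(x=[a])=x$; (ii) the failure of the abstract-to-concrete entailment $x\not\absconcentails (x=[])$; and (iii) the membership $x.\true^\omega\in\os P\parallel Q\cs^{\alpha_G}$, which witnesses the loss of precision since this sequence carries no groundness information about $y$ despite the author's expectation that $y$ should be detected as ground.

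For (i), I would appeal to the informal reading of $\alpha_G$ given in Example~\ref{ex:hcs}: $\alpha_G$ returns a $Pos$-formula over the free variables of its argument that captures the induced groundness dependencies. Both $[]$ and $[a]$ are ground terms (since $a$ is a constant symbol), so any solution of the corresponding equation forces $x$ to be ground, yielding the same propositional atom $x\in\cA$. For (ii), I would apply Definition~\ref{def:absconcentails} directly: $x\absconcentails (x=[])$ would require that every concrete $c'$ approximated by $x$ entail $x=[]$. To refute this it suffices to exhibit one witness that is approximated by $x$ but does not entail $x=[]$; by (i), $c'=(x=[a])$ is such a witness, since $\alpha_G(c')=x$ (hence $x\propto c'$) while evidently $x=[a]\not\entails x=[]$ in the Herbrand constraint system.

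For (iii), I would simply unfold Figure~\ref{absdensems} compositionally. By $\raTell$, $\os\tellp{x=[]}\cs^{\alpha_G}=\up(x).\cA^\omega$, which contains $x.\true^\omega$ because $x$ trivially entails $x$. By $\raAsk$, the denotation of $Q=\whenp{x=[]}{\tellp{y=[]}}$ is $\overline{\upa{x=[]}}.\cA^\omega\,\cup\,\bigl(\upa{x=[]}.\cA^\omega\cap\os\tellp{y=[]}\cs^{\alpha_G}\bigr)$; by~(ii) we have $x\notin\upa{x=[]}$, so $x.\true^\omega$ is already captured by the first summand. Intersecting the two denotations via $\raPar$ gives $x.\true^\omega\in\os P\parallel Q\cs^{\alpha_G}$, and since this sequence assigns $\true$ to $y$ in every time-unit, the analysis fails to conclude that $y$ is ground.

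The only genuinely content-bearing step is the choice of the witness $x=[a]$ in~(ii), which is exactly where the coarseness of $Pos$ manifests: the abstract value $x$ over-approximates both ``$x$ equals the empty list'' and ``$x$ equals any ground list'', so the synchronization guard $x=[]$ in $Q$ cannot be triggered in a sound way. Everything else is routine bookkeeping on the semantic equations of Figure~\ref{absdensems} and on Herbrand entailment; no induction or fixpoint reasoning is required because $P$ and $Q$ are non-recursive and only use $\tellp{}$, $\whenp{}{}$, and $\parallel$.
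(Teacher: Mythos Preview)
Your proposal is correct and follows exactly the line of reasoning the paper intends; indeed, the paper gives no separate proof for this observation beyond the hints embedded in its statement, and you have simply spelled out those hints (the witness $x=[a]$ for the failure of $\absconcentails$, and the unfolding of $\raTell$, $\raAsk$, $\raPar$). If anything, your version is slightly more complete: the paper only records $x.\true^\omega\in\os P\cs^{\alpha_G}$, whereas you explicitly check membership in $\os Q\cs^{\alpha_G}$ as well (via the $\overline{\upa{x=[]}}$ summand) and hence in the parallel composition, which is what is actually needed to conclude that the groundness of $y$ is lost.
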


We  improve the accuracy of the analysis by using the abstract domain defined in \cite{CodishD94} to derive information about type dependencies on terms. The abstraction is defined as follows:
\[
\alpha_{T}(x=t) = \left\{
\begin{array}{lll}
\listp(x,x_s) & \mbox{if} & t=[y \ |\  x_s] \mbox{ for some $y$}\\
\nilp(x) & \mbox{if} & t=[]
\end{array}
\right.
\]
Informally, $list(x,x_s)$ means $x$ is a list iff $x_s$ is a list and  $nil(x)$  means $x$ is the empty list. If $x$ is a list we write $list(x)$ and $nil(x) \entails^{\alpha_T} list(x)$. Elements in the domain are ordered by logical implication.

The following constraint systems result from the reduced product  \cite{CC92} of the previous abstract domains, thus allowing us to capture groundness and type dependency information. 

\begin{definition}[Groundness-type Constraint System] \label{def:gt-domain}
Let 
$
{\bf A_{GT}}=  \l \cA,\leq^{\alpha_{GT}}\, \sqcup^{\alpha_{GT}},\true^{\alpha_{GT}}, \false^{\alpha_{GT}}, {\it Var}, \exists^{\alpha_{GT}}, d^{\alpha_{GT}} \r
$. 
Given $c\in \cC$, $\alpha_{GT}(c) = \l \alpha_G(c) , \alpha_T(c) \r$. 
The operations $\sqcup^{\alpha_{GT}}$ and  $\exists^{\alpha_{GT}}$ 
correspond to logical conjunction and existential quantification on the components of the tuple and   $\dxt^{\alpha_{GT}}$ is defined as $\l \alpha_G(\vx=\vt), \alpha_T(\vx=\vt)\r$. 
Finally, $\l c_\kappa,d_\kappa\r \leq^{\alpha_{GT}}\l c_\kappa',d_\kappa'\r $ 
iff 
$c_\kappa' \entails_{\alpha_G} c_\kappa$ and $d_\kappa' \entails_{\alpha_T} d_\kappa$.
\end{definition}

Consider the Example \ref{ex:append} and the 
abstraction $\alpha$ resulting from the composition of   $\alpha_{GT}$ above and $sequence_{\kappa}$. Note that the program  makes use of guards of the form $\exists x',x'' (x=[x'|x''])$ and $x=[]$. Note also that    $list(x,x') \absconcentails \exists x',x'' (x=[x'|x'']) $ and  $nil(x) \absconcentails x=[]$. Roughly speaking, this guarantees that the chosen domain is accurate w.r.t. the ask processes in the program.   

The semantics of the process $P=gen_a(x) \parallel  gen_b(y) \parallel append(x,y,z)$ is depicted in Figure \ref{fig:sem:append}. Assume that $s= c_1.c_2...c_\kappa \in \os P\csa$. Let
 $n \leq \kappa$ and assume that for  $i<n$,  $c_i \absconcentails go_a=[]$ and $c_n \absconcentails stop_a=[]$. Since $s\in \os P\csa$, we know that $s\in \os gen_a(x) \csa$ and then, we can verify that $c_n \entailsa \l x,list(x)\r$. Similarly, take $m \leq \kappa$ and assume that for  $j<m$,  $c_j \absconcentails go_b=[]$ and $c_m \absconcentails stop_b=[]$. We can verify that 
 $c_m \entailsa \l y,list(y)\r$. Finally, since $s\in append(x,y,z)$, we can show that $c_{max(n,m)} \entailsa \l z,list(z)\r$.  In words, the process $P$ binds $x$, $y$ and $z$ to  ground lists whenever the environment provides as input a series of constraints $go_a=[]$ (resp. $go_b=[]$) followed by an input $stop_a=[]$ (resp.  $stop_b=[]$). 
 
 \begin{figure}
\resizebox{.8\textwidth}{!}{
$
\begin{array}{lll}
\multicolumn{3}{l}{
 \os gen_a(x) \parallel  gen_b(y) \parallel append(x,y,z) \csa =  \Exists x_1 (GA_1) \cap \Exists y_1 (GB_1)\cap A_1}
 \mbox{ where } \\\\
GA_1 &=& \up{}\l x\leftrightarrow x_1, list(x,x_1)\r.\cA \ \cap \\
  & & \{c.s \ | \mbox{ if } c \absconcentails\ go_a=[]  \mbox{ then } s \in \Exists x_2 (GA_2)\} \cap \\
  & & \{c.s \ | \mbox{ if } c \absconcentails\ stop_a=[]  \mbox{ then } \l x_1,nil(x_1)\r^\omega \leqa c.s \}  \\
%GA_2 &=& \up{}\l x_1\leftrightarrow x_2, list(x_1,x_2)\r.\cA \ \cap \\
%  & & \{c.s \ | \mbox{ if } c \absconcentails\  go_a=[] \mbox{ then } s \in \Exists x_3 (GA_3)\} \cap \\
%  & & \{c.s \ | \mbox{ if } c \absconcentails\ stop_a=[]  \mbox{ then } \l x_2,nil(x_2)\r^\omega \leqa c.s \}  \\
\cdots \\
GA_\kappa &=& \up{}\l x_{\kappa-1}\leftrightarrow x_\kappa, list(x_{\kappa-1},x_{\kappa})\r.\epsilon \ \cap \\
  & & \{c.\epsilon \ | \mbox{ if } c \absconcentails\ stop_a=[]  \mbox{ then } c \entailsa \l x_\kappa,nil(x_\kappa)\r \}  \\\\

A_1& = & \{c.s \ | \ \mbox{if } c \absconcentails x=[]  \mbox{ then }  (d_{yz}^\alpha)^\omega \leqa c.s  \} \ \cap \\
& & \{c.s \ | \ \mbox{if } c \absconcentails \exists x', x_2(x=[x' | x_2])  \mbox{ then }  \\
& & \tabp\tabp\  c.s \in \Exists x' \Exists x_2\Exists z_2   
(\up(\l x \leftrightarrow  x_2, list(x,x_2)\r^\omega)  \ \cap \\
& & \tabp\tabp\tabp\tabp\tabp\tabp\tabp\tabp\ \ \   
\up(\l z \leftrightarrow  z_2, list(z,z_2)\r^\omega) \cap \cA.A_2 )\}\\
%A_2& = & \{c.s \ | \ \mbox{if } c \absconcentails x_2=[]  \mbox{ then }  (d_{y_2z_2}^\alpha)^\omega \leqa  c.s \} \ \cap \\
%& & \{c.s \ | \ \mbox{if } c \absconcentails \exists x', x_3(x=[x' | x_3])  \mbox{ then }  \\
%& & \tabp\tabp\  c.s \in \Exists x' \Exists x_3\Exists z_3   
%(\up(\l x_2 \leftrightarrow  x_3, list(x_2,x_3)\r^\omega) \ \cap \\
%& & \tabp\tabp\tabp\tabp\tabp\tabp\tabp\tabp\ \ \   
%\up(\l z_2 \leftrightarrow  z_3, list(z_2,z_3)\r^\omega)  \cap \cA.A_3)\}\\
\cdots \\
A_\kappa& = & \{c.\epsilon \ | \ \mbox{if } c \absconcentails x_\kappa=[]  \mbox{ then } d_{y_{\kappa}z_{\kappa}}^\alpha  \leqa c  \} \ \cap \\
& & \{c.\epsilon \ | \ \mbox{if } c \absconcentails \exists x', x_{\kappa'}(x=[x' | x_{\kappa'}])  \mbox{ then }  \\
& & \tabp\tabp\  c.\epsilon \in \Exists x' \Exists x_{\kappa'}\Exists z_{\kappa'}
(\up(\l x_\kappa \leftrightarrow  x_{\kappa'}, list(x_\kappa,x_{\kappa'})\r).\epsilon \ \cap \\
& & \tabp\tabp\tabp\tabp\tabp\tabp\tabp\tabp\ \ \   
\ \ \up(\l z_\kappa \leftrightarrow  z_{\kappa'}, list(z_\kappa,z_{\kappa'})\r).\epsilon)\}
\end{array}
$}
\caption{Abstract semantics of the process $P=gen_a(x) \parallel  gen_b(y) \parallel append(x,y,z)$. Definitions of $gen_a(x), gen_b(y)$ and  $append(x,y,z)$ are given in Example \ref{ex:append}. Sets $GB_1,..,GB_{\kappa}$
are similar to $GA_1,..,GA_{\kappa}$ and omitted here. 
\label{fig:sem:append}
}
\end{figure}

\subsubsection{Reactive Systems} 
Synchronous data flow languages   \cite{BeGo92} such as Esterel and  Lustre  can be encoded as \tccp\ processes \cite{tcc-lics94,Tini99}. This  makes \tccp\ an expressive declarative framework for the modeling and verification of reactive systems. 
Take for instance the program in Figure \ref{fig:micro},
taken and slightly modified from \cite{FalaschiV06}, 
 that models a control system for a microwave checking  that the door must
be closed when it is turned on. Otherwise, it must emit an error signal. In this model, $\texttt{on}$, $\texttt{off}$, $\texttt{closed}$ and $\texttt{open}$ represent the constraints $on=[], {\it off}=[],close=[]$ and $open=[]$ and the symbols $yes$, $no$, $stop$ denote constant symbols. 

The analyses developed here  can provide additional reasoning techniques in \tccp\ for the verification of  such systems. For instance, by using the groundness analysis   in the previous section, we can show that if   $c_1.c_2....c_\kappa \in \os micCtrl(Error,Button)\cs^\alpha$ and there exists $1\leq i \leq \kappa$ s.t. $c_i \absconcentails (open=[] \sqcup on=[])$, then, it must be the case that  $c_1 \entails^{\alpha} \l Error , \listp(Error)\r$, i.e., $Error$ is  a ground variable. This means, that the system correctly binds the list $Error$ to a ground term whenever the system reaches an inconsistent state.

\begin{figure}[t]
$
\begin{array}{ll}
 {\it micCtrl}(Error, Signal)  \defsymbol \\
\tabp  \localp{Error', Signal',er,sl}{(}\\
\tabp\tabp  \bangp \tellp{Error = [er \ | \ Error'] \sqcup  Signal = [sl \ |
\ Signal']} \\
\tabp\tabp  \parallel \whenp{\texttt{on} \sqcup \texttt{open}}
{\bangp\tellp{er= yes \sqcup Error' = [] \sqcup sl = {stop } }} \\
\tabp\tabp  \parallel \whenp{\texttt{off} }{(\bangp\tellp{er=no} \parallel
\nextp{\it micCtrl(Error',Signal')}}) \\
\tabp\tabp  \parallel \whenp{\texttt{closed}}{(\bangp\tellp{er=no} \parallel
\nextp{\it micCtrl(Error',Signal')}}))
\end{array}
$
\caption{Model for a microwave controller (see Notation  \ref{rem:bang} for the definition of $\bangp$).  \label{fig:micro} }
\end{figure}

\begin{observation}[Synchronization constraints]
 In several  applications of \tccp\ and \utcc\  the environment interact with the system by adding as input some constraints that only appear in the guard of ask processes as   $\texttt{on}, \texttt{off}, \texttt{open}, \texttt{close} $ in Figure \ref{fig:micro} and 
$go_a$, $stop_a$  in the Figure \ref{fig-ex-append}. These constraints can be thought of as ``synchronization constraints''  \cite{DBLP:journals/iandc/FagesRS01}. Furthermore, since these constraints are inputs from the environment, they are not expected to be produced by the program, i.e., they do not appear in the scope of a tell process. In these situations, in order to improve the accuracy of the analyses, one can orthogonally add  those constraints in the abstract domain. This can be done,  for instance,  with a reduced product as we did in Definition \ref{def:gt-domain} to  give a finer approximation of the inputs $go_a$ and $stop_a$ by adding type dependency information.
\end{observation}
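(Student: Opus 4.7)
The final item is labelled as an \emph{observation} but, unlike Observation \ref{l-vx-eq}, it carries no formal statement: it is a methodological remark suggesting a design pattern (enrich the abstract domain by a reduced product with a tracker for constraints that appear only inside ask-guards). Accordingly, the ``proof'' I would write is really a justification of the three implicit claims embedded in the remark: (a) the class of \emph{synchronization constraints} is nonempty and naturally identified in the running examples; (b) the construction by reduced product is well-defined within our framework; (c) the resulting analysis is at least as precise as the base one and, on programs featuring such constraints, strictly more precise.

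My plan for (a) is to give a syntactic characterization: call a constraint $c$ a synchronization constraint for a program $\cD.P$ if every sub-term of the form $\tellp{d}$ occurring in $P$ or in any $P_i$ with $p_i(\vx)\defsymbol P_i$ satisfies $d\notentails c$ (equivalently, $c$ appears only in ask/abstraction guards, never under a tell). Inspecting Figures \ref{fig-ex-append} and \ref{fig:micro} shows that $go_a=[]$, $stop_a=[]$, $\texttt{on}$, $\texttt{off}$, $\texttt{open}$, $\texttt{closed}$ all meet this condition. For (b), I would appeal to the standard reduced-product construction of \cite{CC92}: given two upper-correct descriptions $(\cC,\alpha_1,\cA_1)$ and $(\cC,\alpha_2,\cA_2)$, the pair $\alpha(c) = \langle \alpha_1(c),\alpha_2(c)\rangle$ with componentwise $\sqcup^\alpha$, $\exists^\alpha$ and $d_{\vx\vt}^\alpha$ is again upper correct in the sense of Definition \ref{dec:corapp}; this is exactly the pattern used in Definition \ref{def:gt-domain}, and Theorem \ref{teo:corr} then applies \emph{verbatim}, so soundness of the enriched analysis is inherited.

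For (c) I would argue in two steps. \emph{At least as precise:} since both projections of the reduced product refine the base abstraction (the identity on $\cA_1$), and $\os \cdot \cs^\alpha$ is monotonic in the domain (Lemma \ref{lem:prop-for-all} together with the continuity proved after Definition \ref{def:tpalpha}), the fixpoint in the enriched domain projects back into a subset of the base fixpoint---no accuracy can be lost. \emph{Strictly more precise on typical programs:} using the synchronization-constraint condition of (a), I would exhibit a witness along the lines of $P = \whenp{c}{\tellp{d}}$ where the base $\alpha$ satisfies $\alpha(c') \absconcentails c$ for some $c' \notentails c$ that $P$ might produce---precisely the $x=[]$ vs.\ $x=[a]$ phenomenon noted just before the observation---while the reduced product with an exact tracker of $c$ blocks the spurious firing of the ask, eliminating the over-approximated tell of $d$.

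The main obstacle will be condition (a): the syntactic criterion must be stable under the substitutions $P\sxt$ introduced by rule $\rCall$, so one has to either quantify over all admissible substitutions arising from procedure calls, or restrict attention to constraints built from predicate symbols that never appear in the head of a tell anywhere in $\cD$. Once this closure condition is isolated, (b) and (c) are mechanical consequences of the framework already developed in Sections \ref{sec:denotsem} and \ref{sec:absframework}; there is no separate theorem to prove beyond appealing to Theorem \ref{teo:corr} and monotonicity of $T_\cD^\alpha$.
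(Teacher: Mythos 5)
You have correctly recognized that this ``observation'' is an unproved methodological remark rather than a formal statement: the paper offers no proof of it, so your decomposition into claims (a)--(c) and the appeal to the reduced product of \cite{CC92}, Definition \ref{def:gt-domain} and Theorem \ref{teo:corr} is a reasonable reconstruction. Parts (a) and (b), the closure-under-substitution caveat you flag for (a), and the ``at least as precise'' half of (c) are all consistent with the framework as developed in Sections \ref{sec:denotsem} and \ref{sec:absframework}.

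The one genuine flaw is your witness for strict precision in (c). You posit a base abstraction with $\alpha(c') \absconcentails c$ for some $c' \notentails c$, so that the enriched domain ``blocks the spurious firing of the ask''. That scenario is impossible: by Definition \ref{def:absconcentails}, $d_\alpha \absconcentails c$ requires that \emph{every} concrete constraint approximated by $d_\alpha$ entail $c$, and $c'$ is itself approximated by $\alpha(c')$, which forces $c' \entails c$. The relation $\absconcentails$ is engineered precisely so that abstract asks never fire spuriously; the imprecision discussed just before the observation (the $Pos$ domain collapsing $x=[\,]$ and $x=[a]$ to $x$) runs in the opposite direction: one gets $x \not\absconcentails x=[\,]$, so the abstract ask \emph{fails} to fire and the information contributed by its body is lost, the corresponding branch of the denotation degenerating to all of $\cA^\omega$. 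The reduced product with the type domain improves precision by making $nil(x) \absconcentails x=[\,]$ hold, i.e.\ by \emph{enabling} the firing rather than blocking one. Your witness should therefore be a program such as $\tellp{x=[\,]} \parallel \whenp{x=[\,]}{\tellp{y=[\,]}}$, where the base analysis loses the groundness of $y$ and the enriched analysis recovers it.
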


%We thus have that 
%$c_1.s \in \os append\cs ^{\alpha}$ and  $c_1 \entails^{\alpha_{GT}}  \l x, nil(x) \r$ then, it must be the case that 
%$c_1 \entails^{\alpha_{GT}}  y \leftrightarrow z, nil(x) \r$.
%there is a guard of the form
%Let $\tau = sequence(\kappa)$ and $g_{1}.g_{2}... .g_{\kappa}  \in \os Gen_a(x)\cs^\tau$. By a derivation similar to that of Example \ref{ex:reduction},  if there exists $i\in\{1,..,\kappa\}$ such that $g_i \absconcentails stop_a$, one can show that  there exists $\vec{x}'=x'_0, x'_1, ... , x'_i$ such that 
%\[{\scriptsize
%g_i \entails^\alpha 
%\exists_{\vec{x'}}\left\l
%\begin{array}{lll}
%\iff(x,x_0') \sqcup
%\bigsqcup\limits_{0\leq j<i}\iff(x_j',\{x_{j+1}'\}) \sqcup \iff(x_i',\{\})
% &
% ,\\
 % \listp(x,x_0') \sqcup 
%% \bigsqcup\limits_{0\leq j<i}\listp(x_j',x_{j+1}') \sqcup \nilp(x_i') 
%\end{array}
%\right\r
%}
%\]
%Thus, if $g_i \absconcentails stop_a$ we can deduce that $x$ is a list and $x$ is a ground variable, i.e., $g_i \absentails \l\iff(x,[]),\listp(x) \r$.

%Let $s_k = \os Gen_a(x) \parallel  Gen_b(y) \parallel append(x,y,z) \cs^\tau$. If there exist $i,j\leq \kappa$ s.t. $s_\kappa(i) \absconcentails stop_a $ and $s_\kappa(j) \absconcentails stop_b$, we can show that for $l\geq max(i,j)$, the variables $x,y$ and $z$ are list of ground elements. More precisely, 
%\[
%s_\kappa(l) \entails^\alpha \l \iff(x,[]) \sqcup \iff(y,[]) \sqcup \iff(z,[]), \listp(x) \sqcup \listp(y) \sqcup \listp(z)\r
%\]

%%%%%%%%%%%%%
% Suspension analysis
%%%%%%%%%%%%%
\subsection{Suspension Analysis}
In a concurrent setting it is important to know whether  a given system 
reaches a state where no further evolution is possible. 
Reaching a deadlocked situation is something to be avoided. There are many studies on this problem and 
several works developing analyses in (logic) concurrent languages
(e.g. \cite{CFM94,CFMW97}). However, we are not aware of studies 
available for \ccp\ and its temporal extensions. A  suspended state in the context of 
\ccp\  may happen when the guard of the ask processes  are not carefully chosen and then, none of them can be entailed. In this section we develop an analysis  that 
aims at determining the  constraints that a program needs as input from 
the environment to proceed. This can be used to derive information about 
the suspension of the system. We start by   extending  the concrete semantics to a 
collecting semantics that keeps information about the suspension of processes. 
For this, we define  the following constraint system.

%Assume a process $Q$ 
%with a subterm of the form the form  $P=\prod\limits_{i\in I} \whenp{c_i}{P_i}$.When the process $P$ reaches a state $c$ and it can proceed, i.e., there exists $i\in I$ s.t. $c\entails c_i$, then we shall assign  We shall assign as a meaning of $Q$ (sequences) of constraints of the form $\l c, \nsusp\r$ when at least one $c_i$ can be entailed from $c$ and the respective $P_i$ is suspension free. %In other case, we shall assign as a meaning $\l c, \susp \r$. 
% will be in the semantics of $P$ if 
%We  consider constraints of the form  $\l c , \iota\r$ where $\iota$ is the     Roughly speaking, a constraint  %In other case, we shall  we shall say that $Q$ is suspension free if  For this, where:   is the meaning of a process $Q$ that reaches a state $c$ where at least one of the ask process (in a parallel composition) proceeded. Similarly, $\l c , \susp\r$  is the meaning of a process where its subterms require a constraint greater than $c$ to proceed. 
%More precisely: 
\begin{definition}[Suspension  Constraint System]\label{def:sync-cs}
Let  $\cS  = \{\susp,\nsusp\}$ s.t. $\susp \leq \nsusp$. Given a constraint system ${\bf C}= \l \cC,\leq,\sqcup,\true,\false,{\it Var}, \exists, d\r$, the suspension-constraint system $S({\bf C})$ is defined as 
\[
{\bf S} = \l \cC \times \cS, \leq^s, \sqcup^s, \l \true, \susp \r,  
\l \false,\nsusp \r, Var, \exists^s, d^s
\r
\]
where $\leq^s, \sqcup^s$ are pointwise defined, $\exists^s_{\vec{x}}(\l c, c' \r) = \l \exists_x c, c' \r$ and $d^s_{\vx\vt} = \l \dxt , \susp \r$.
Given a constraint $c\in \cC$, we shall use 
 $\widehat{c}$ to denote the constraint  $\l c, \susp \r$. 
\end{definition}

Let us illustrate  how   $S({\bf C})$ allows us to derive information about suspension. 

\begin{example}[Collecting Semantics]\label{ex:col-sem}
Let  $\cC= \{\true,a,b,c,d,\false\}$ be a complete lattice where $b\entails a$ and $d\entails c$, 
$P=\whenp{a}{\tellp{b}}$ and 
$Q=\whenp{c}{\tellp{d}}$.  We know that 
$
\os P\cs = \{\true, b, c , d , \false \}.\cC^\omega
$ (note  that $P$  does not suspend on 
$b$ and $\false$). Let $\widehat{P}$ and $\widehat{Q}$ be defined over  $S({\bf C})$ as:
\[
\begin{array}{lll}
\widehat{P} = \whenp{\widehat{a}}{(\tellp{\widehat{b}} \parallel \tellp{\l a, \nsusp \r})} & \!\! &
\widehat{Q} = \whenp{\widehat{c}}{(\tellp{\widehat{d}} \parallel \tellp{\l c, \nsusp \r})} \\
\end{array}
\]
We then have:
\[
\begin{array}{rll}
\os \widehat{P}\cs &=& \{\l \true,\up\susp\r, \l b,\nsusp\r, \l c ,\up\susp\r , \l d ,\up\susp\r, \l \false,\nsusp\r \}.(\cC\times \cS)^\omega\\
\os \widehat{Q}\cs &=& \{\l \true,\up\susp\r, \l a,\up\susp\r, \l b ,\up\susp\r , \l d ,\nsusp\r, \l \false,\nsusp\r \}.(\cC\times \cS)^\omega\\
\os \widehat{P}\parallel \widehat{Q}\cs & = & \{\l \true,\up\susp\r, \l b,\nsusp\r, \l d ,\nsusp\r, \l \false,\nsusp\r \}.(\cC\times \cS)^\omega
\end{array}
\]

%This means that under input $b$, $d$ and $\false$,  $P\parallel Q$ definitely does not suspend. 
%We shall define $\os \multichoicep{\l c_1,\bot \r,...,\l c_n,\bot\r}{\widehat{P}} \cs$ in order to have:
%\[
%\begin{array}{rll}
%\os \multichoicep{\l a,\bot \r }{\widehat{P}}\cs_s &=& \{\l \true,\up\susp\r, \l b,\up\nsusp\r, \l c ,\up\susp\r , \l d ,\up\susp\r, \l \false,\up\nsusp\r \}.(\cC\times \cS)^\omega\\
%\os \multichoicep{\l c,\bot \r}{\widehat{Q}}\cs_s &=& \{\l \true,\up\susp\r, \l a,\up\susp\r, \l b ,\up\susp\r , \l d ,\up\nsusp\r, \l \false,\up\nsusp\r \}.(\cC\times \cS)^\omega\\
%\os \multichoicep{\l a,\bot \r,\l c, \bot \r }{(\widehat{P}\parallel \widehat{Q})}\cs_s & = & \{\l \true,\up\susp\r, \l b,\up\nsusp\r, \l d ,\up\nsusp\r, \l \false,\up\nsusp\r \}.(\cC\times \cS)^\omega\\
%\os \multichoicep{\l a, \bot \r}{(\widehat{P})}\parallel\ \ \  & = & \{\l \true,\up\susp\r, \l b,\top\r, \l d ,\top\r, \l \false,\up\nsusp\r \}.(\cC\times \cS)^\omega\\
%\multichoicep{\l c, \bot \r}{(\widehat{Q})}\cs_s 
%\end{array}
%\]

\end{example}

%\begin{definition}[Collecting Semantics for Suspension Analysis]\label{def:sem-susp}
%Let ${\bf C}$ be a constraint system, ${\bf S}({\bf C})$ be as in Definition \ref{def:sync-cs},  $\cD.P$ be a program and  $\os \cdot \cs$ be as in Definition \ref{def:conc-semantics}. 
%\end{definition}

%\[
%\begin{array}{lll}
%\os {\multichoicep{\l c_1,\bot\r,...,\l c_n,,\bot\r}{\widehat{Q}}} \cs &=& \os \widehat{Q}\cs_s \cap (\{ \l d,\up\nsusp \r \ | \  \exists i \in 1..n \mbox{ s.t. } d\entails c_i \}.(\cC\times \cS)^\omega \cup \\
%& & \tabp \tabp \tabp\ \{ \l d,\up\susp \r \ | \  \forall i \in 1..n,   d\not\entails c_i \}.(\cC\times \cS)^\omega )
%\end{array}
%\]

where $\l c, \up{\susp}\r$ is a shorthand for the couple of tuples  $\l c, \susp\r, \l c, \nsusp\r$. 
The process $P$ suspends on input $c$ (since $c\not\entails a$) while $Q$ under input $c$ outputs $d$ and it does not suspend. Notice that the system  $P\parallel Q$ does not block on input $b,d$ or $\false$ and it does on input $\true$. Notice also  that $\l c,\susp \r.s \not\in \os \widehat{P} \parallel \widehat{Q} \cs$. This means that in a store $c$, 
 at least one the ask processes in  $\widehat{P} \parallel \widehat{Q}$ is able to proceed. 
The key idea is that the process $\tellp{\l c,\nsusp \r}$ in $\widehat{Q}$ ensures that if $\l e, e'\r \in \os \widehat{Q}\cs$ and  $e\entails c$, then it must be the case that  $e'=\nsusp$. This corresponds to the intuition that if an ask process can evolve on a store $c$, it can evolve under any store greater than $c$ (Lemma \ref{lemma:redi-properties}). 

Next we define a program transformation that allows us to scatter suspension information when we want to verify that none of the ask processes suspend. 

\begin{example} Let $P$ and $Q$ be as in Example  \ref{ex:col-sem}. Let also  $\widehat{P}  =  \whenp{\widehat{a}}{(\tellp{\widehat{b}} )}$, $\widehat{Q} = \whenp{\widehat{c}}{(\tellp{\widehat{d}} )} $ and
$
\widehat{R} = \widehat{P} \parallel \widehat{Q} \parallel \whenp{\widehat{a} \sqcup \widehat{c}}{(\tellp{a\sqcup c, \nsusp} )}
$. Therefore, 
\[
\begin{array}{rll}
\os \widehat{P}\cs &=& \{\l \true,\up\susp\r, \l b,\up\susp\r, \l c ,\up\susp\r , \l d ,\up\susp\r, \l \false,\up\susp\r \}.(\cC\times \cS)^\omega\\
\os \widehat{Q}\cs &=& \{\l \true,\up\susp\r, \l a,\up\susp\r, \l b ,\up\susp\r , \l d ,\up\susp\r, \l \false,\up\susp\r \}.(\cC\times \cS)^\omega\\
\os \widehat{R}\cs & = & \{\l \true,\up\susp\r, \l b,\up\susp\r, \l d ,\up\susp\r, \l \false,\nsusp\r \}.(\cC\times \cS)^\omega
\end{array}
\]
Hence we can conclude that only under input $\false$  
 neither $P$ nor $Q$  suspend. 
\end{example}

The previous program transformation can be arbitrarily applied to subterms of the form $P=\prod\limits_{i\in I} \whenp{c_i}{P_i}$. Similarly, for verification purposes, a subterm of the form 
$P = \absp{\vec{x_1}}{c_1}{P_1} \parallel ... \parallel \absp{\vec{x_n}}{c_n}{P_n} 
$ can be replaced by 
\[
P' = \widehat{P} \parallel \whenp{(\exists{\vec{x_1}} \widehat{c_1} \sqcup...\sqcup \exists {\vec{x_n}} \widehat{c_n})}\tellp{\l c_1 \sqcup ... \sqcup c_n, \nsusp\r }
\]

We conclude with an example showing  how an abstraction of the previous collecting semantics allows us to analyze a protocol programmed in \utcc.  For this we shall use the abstraction in Definition \ref{def:abs-sec-cs} to cut the terms up to a given length. 
\begin{example}\label{ex:sus-pro}
Assume a protocol where agent $A$ 
has to send a message to $B$ through a proxy server $S$. This situation can be modeled as follows:
\[
\begin{array}{lll}
	A(x,y) &\!\!\defsymbol\!\!& \localp{m}(\tellp{\outp(\{x,y,m\}_{pub(srv)})}) \\
	S &\!\!\defsymbol\!\!& \absp{x,y,m}{\outp(\{x,y,m\}_{pub(srv)})}{}{\tellp{\outp(\{x,m\}_{pub(y)})}} \!\parallel\! \nextp{S()}\\
	B(y) &\!\!\defsymbol\!\!& \absp{x,m}{\outp(\{x,m\}_{pub(y)})}{B_c} \\
	Protocol   &\!\!\defsymbol\!\!& A(x,y) \parallel S() \parallel B(y)
\end{array}
\]

where  $B_c=\skipp$ is the continuation of the protocol that we left unspecified.

This code is correct if  the message can flow  from $A$ to $B$ without any input from the environment.   
This holds if the ask process in $B(y)$ does not block. We shall then analyze the program above by replacing all   $c$ with  $\widehat{c}$ and   $B(y)$ with 
\[
B'(y) \defsymbol \absp{x,m}{\outpw(\{x,m\}_{pub(y)})}{(\tellp{\l \outp(\{x,m\}_{pub(y)}), \nsusp \r})} 
\]

Let $\alpha_\kappa$ be as in  Definition  \ref{def:abs-sec-cs}. We choose as abstract domain $\cA=S(\alpha_\kappa(\cC))$ and  we  consider sequences of length one. 
In  Figure \ref{fig:sus-sems} we show the abstract semantics.  We notice that $\langle c, \nsusp \rangle$ where $c=
\exists m (\outp(\{x,y,m\}_{pub(srv)}) \sqcup \outp(\{x,m\}_{pub(y)}) )
$ is in the semantics $\os Protocol\csa$ and $\l c,\susp\r \notin \os Protocol \csa$.  We then conclude that the protocol is able to correctly deliver the message to $B$. 

Assume now that the code for the server is (wrongly) written as 
\[
S' \defsymbol \absp{x,y,m}{\outp(\{x,y,m\}_{pub(srv)})}{}{\tellp{\outp(\{x,m\}_{pub(x)})}} \parallel \nextp{S'()}
\]
where we changed $\tellp{\outp(\{x,m\}_{pub(y)})}$ to $\tellp{\outp(\{x,m\}_{pub(x)})}$. 
We can verify that  $\l c,\susp \r \in \os Protocol'\csa$  where 
$c= \exists m (\outp(\{x,y,m\}_{pub(srv)}) \sqcup \outp(\{x,m\}_{pub(x)}) )$. This can warn the programmer that there is a mistake in the code. 
\end{example}
\begin{figure}
$
\begin{array}{rll}
\os Protocol \csa & = & A.\epsilon  \cap S.\epsilon \cap B.\epsilon \mbox { where } \\
A &= & \Exists m(\up(\outpw(\{x,y,m\}_{pub(srv)}))) \\
S &= & \Forall x,y, m (  \{ \l d,c\r \ | \ \mbox{ if } \l d,c\r \absconcentails \outpw(\{x,y,m\}_{pub(srv)}) \\
 & &\tabp\tabp\tabp\tabp\tabp\tabp \tabp
\mbox{then } \l d,c\r \leq^\alpha \outpw(\{x,m\}_{pub(y)}) \} )\\
B &= & 
\Forall x,m (\{
\l d,c \r \  | \ \mbox{ if } \l d,c\r \absconcentails \outpw(\{x,m\}_{pub(y)}) \\
 & &\tabp\tabp\tabp\tabp\tabp\tabp \ 
\mbox{then } \l d,c\r \leq^\alpha \l \outp(\{x,m\}_{pub(y)}),\nsusp \r\} )\
\} 
\end{array}
$
\caption{Semantics of the protocol in Example \ref{ex:sus-pro}. \label{fig:sus-sems}}
\end{figure}

%%%%%%%%%%%%%%%%%%%%%%%%%%%%%%%%%%%
%%%%% CONCLUDING REMARKS
%%%%%%%%%%%%%%%%%%%%%%%%%%%%%%%%%%%

\section{Concluding Remarks}\label{sec:concluding}
%%%%%%%%%%%%%%%%%%%
% Concluding
%%%%%%%%%%%%%%%%%%%

Several frameworks and abstract domains for the analysis of logic programs 
have been defined 
(see e.g. \cite{CC92,Codish99,armstrong98two}). Those works differ from ours since  they do not 
deal with the temporal behavior and synchronization mechanisms present in \tccp-based languages.  
On the contrary,  since our framework is  parametric w.r.t. the abstract domain, it  can benefit from 
those works.

We defined in \cite{FalaschiOPV07} a framework for the declarative debugging of \ntcc\ 
\cite{NPV02} programs (a non-deterministic extension of \tccp). 
The framework presented here is 
more general since it was designed for the static analysis of  
\tccp\ and \utcc\ programs and not 
only for debugging. Furthermore, as mentioned above, it is parametric w.r.t 
an abstract domain. 
%The language  \utcc\ is also more involved: processes may exhibit infinite  internal behavior and, 
%unlike \ntcc,  \utcc\ can encode Turing powerful formalisms \cite{Olarte:08:PPDP}.  
In \cite{FalaschiOPV07} we also dealt with infinite sequences of constraints and a similar finite cut over sequences  was proposed there.

In \cite{Olarte:08:SAC} a symbolic semantics for \utcc\ was proposed to deal with the infinite internal reductions of non well-terminated processes. This semantics, by means of temporal formulas, represents finitely  the infinitely many constraints (and substitutions) the  SOS may produce. The work in  \cite{Olarte:08:PPDP} introduces a denotational semantics for \utcc\ based on (partial) closure operators over sequences of \emph{temporal logic formulas}. This semantics captures compositionally the \emph{symbolic strongest postcondition} and it  was shown to be fully abstract w.r.t. the symbolic semantics for the fragment of locally-independent (see Definition \ref{def:LI}) and abstracted-unless free processes (i.e., processes not containing occurrences of {\bf unless} processes in the scope of abstractions). 
The semantics here presented turns out to be more appropriate to develop the abstract  interpretation framework in Section \ref{sec:absframework}. Firstly, the inclusion relation between the strongest postcondition  and the semantics  is verified for the whole language (Theorem \ref{theo:sound}) -- in \cite{Olarte:08:PPDP} this inclusion is verified 
only for the abstracted-unless free fragment--. Secondly, this semantics 
makes use of the  entailment relation over constraints rather than 
the more involved entailment  over  first-order linear-time temporal 
formulas as in \cite{Olarte:08:PPDP}. %This shall ease the implementation 
%of tools based on the framework. 
Finally, our semantics allows us to capture 
the behavior of \tccp\ programs with recursion. This is not possible with the 
semantics in \cite{Olarte:08:PPDP} which was thought only for \utcc\ programs 
where recursion can be encoded. 
This work then provides the theoretical basis for building  
tools for the data-flow analyses 
of \utcc\ and \tccp\ programs. %whose verification and debugging are not  trivial due to their concurrent nature  and synchronization mechanisms.  

For the kind of 
applications that stimulated the development of \utcc, it was defined entirely 
deterministic. The semantics 
here presented could smoothly be  extended to deal with some forms of 
non-determinism like those  in  \cite{Falaschi:97:TCS}, thus widening 
the spectrum of applications of  our framework. %It would be also interesting 
%to study how our framework could adapt to stochastic and probabilistic extensions of \ccp-based languages which have found application  e.g., in the modeling of biological systems \cite{ccp-alp,DBLP:conf/bcb/ChiarugiFOP10}. 

%\textcolor{red}{The work in \cite{FalaschiGMP93} 
%and \cite{ZaffanellaGL97} defined abstract interpretation 
%frameworks for CCP languages. In particular, the abstracion for 
%the the $ask$ operator  has been adapted  from those works to deal 
%with the domain of sequences of constraints. }

%In \cite{Olarte:08:SAC} the symbolic semantics and the underlying temporal logic associated to \utcc\ are  used to verify a security protocol. The flaw in the protocol was exhibited  by hand computing the symbolic  outputs of the process. Here we go further by exhibiting the flaw automatically with the help of a prototype.  Since our approach is based on approximations of the concrete semantics, not detecting a flaw  does not imply the correctness of it. 

%As we showed in Section \ref{sec:ground}, given that  \tccp\ is a sub-calculus 
%of \utcc, our results {apply
%straightforwardly to \tccp\ 
%programs.  
%We have shown for example,  how to analyze groundness and how to detect 
%mistakes in safety critical applications, such as control systems and 
%embedded systems. Finally, we have developed a suspension analyses of
%timed concurrent constraint programs which can be useful for improving the
%implementation of these languages and for debugging purposes.

A framework for the abstract diagnosis of timed-concurrent constraint  programs
has been defined in \cite{CTV11} where the authors consider a denotational 
semantics similar to ours, although with several technical
differences. The language studied in \cite{CTV11} corresponds to \texttt{tccp} \cite{bgm99},  a  temporal \ccp\ language where the stores are monotonically accumulated along the time-units and whose operational semantics 
relies on the notion of true parallelism. We note that the framework developed in \cite{CTV11}  is used for abstract diagnosis rather than for  general analyses.

Our results should foster the development of analyzers 
for different  systems modeled in \utcc\ and its sub-calculi such as security protocols, reactive and timed systems, biological systems, etc (see \cite{DBLP:journals/constraints/OlarteRV13} for a survey of applications of \ccp-based languages).  
We plan also to perform  freeness, suspension, 
type  and  independence analyses among others. It is well known that this 
kind of analyses  have many applications, e.g. for code optimization in compilers, for improving run-time 
execution, and for approximated verification. We also plan to use abstract model checking techniques based on the proposed semantics to automatically analyze \utcc\ and \tccp\ code. 
%We believe that the framework proposed here can also help to develop new analyses for 
%other  languages for reactive systems such as  Esterel 
%\cite{BeGo92} that  can be translated into \tccp\ \cite{Tini99,tcc-lics94}. %Furthermore, it would be interesting to explore if 
%analyses such as those defined in \cite{Fe05,GPT07} 
%can be carried out in our framework  when considering suitable   fragments of the $\pi$-calculus \cite{MilPi} that can be 
%encoded into \utcc\ (see e.g., \cite{Lopez-Places09} that encodes  a  $\pi$-based language for structured communication into \utcc). 
\\

\noindent{\bf Acknowledgments.}
We thank Frank D. Valencia, Fran\c cois Fages and R\'emy Haemmerl\'e for insightful discussions on different   subjects related to this work. We also thank   the anonymous reviewers for their detailed comments.
Special thanks to Emanuele D'Osualdo for his careful remarks and suggestions for improving the paper.  This work has been partially supported by grant 1251-521-28471 from Colciencias, and by Digiteo and  DGAR funds for visitors.

%%%%%%%%%%%%%
% BIBLIOGRAPHY
%%%%%%%%%%%%%

%%%%%%%%%%%%%%%

%%%%%%%%%%
% Appendix with proofs
%%%%%%%%%%
\newpage
\appendix
%!TEX root = utcc-abs-tplp.tex
%%%%%%%%%%
% Proofs not in the main text
%%%%%%%%%%

\section{Detailed proofs Section \ref{sec:observables}} \label{app:sos}
Before presenting the proof that $\utcc$ is deterministic, we shall prove the following auxiliary result. 
\begin{lemma}[Confluence]\label{lemma:confluence}
Suppose that $\gamma_0 \redi \gamma_1$,
$\gamma_0 \redi \gamma_2$ and $\gamma_1 \not\equiv \gamma_2$.  Then, there exists 
$\gamma_3$ such that $\gamma_1 \redi \gamma_3$ and $\gamma_2 \redi \gamma_3$. 
\end{lemma}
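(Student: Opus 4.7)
The plan is to proceed by induction on the structure of the two derivations $\gamma_0 \redi \gamma_1$ and $\gamma_0 \redi \gamma_2$, with case analysis on the last rule applied in each. The overall intuition is that \utcc\ has no choice operator: every internal step either adds constraints to the store or spawns new sub-processes, so two steps fired from $\gamma_0$ should always commute. The properties collected in Lemma \ref{lemma:redi-properties} (in particular internal extensiveness and restartability) say exactly that firing one step preserves the enabling conditions of another, which is what is needed to re-fire it from the common successor.

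First I would use $\rStruct$ and $\rStructVar$ to normalize both sides. Since these rules only produce $\equiv$-equivalent configurations, I can assume, without loss of generality, that the last rule applied on each side is one of the ``principal'' ones: $\rTell$, $\rLocal$, $\rAbs$, $\rCall$, or $\rUnless$ (possibly under some $\rPar$ contexts). The easy cases are when the two reductions act on syntactically disjoint positions in the process tree. A typical instance is two $\rPar$ steps firing in different branches of $P\parallel Q$: after alpha-renaming the fresh local variables introduced by each branch so that they remain disjoint from $\fv$ of the other, the two reductions obviously commute and $\gamma_3$ is obtained by firing the opposite branch in each of $\gamma_1,\gamma_2$. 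Independent $\rTell$ steps are handled by commutativity and associativity of $\sqcup$, and the case of a $\rCall$ together with any non-overlapping reduction is equally routine.

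The substantive case is when both reductions hit the same sub-term. For $P=\absp{\vy}{c;E}{Q}$, the two $\rAbs$ firings pick admissible substitutions $\syt_1$ and $\syt_2$; since $\gamma_1\not\equiv\gamma_2$, the two witnesses must be genuinely distinct modulo $E$, so the side-condition $E\DNEQ d_{\vy\vt_i}$ remains valid for the second firing after the first has added $d_{\vy\vt_j}$ to $E$. The common $\gamma_3$ is the configuration with both $Q\syt_1$ and $Q\syt_2$ spawned and $E$ augmented by $\{d_{\vy\vt_1},d_{\vy\vt_2}\}$. The $\rLocal$ case is handled similarly, noting that the side-conditions on $\rLocal$ can always be re-established by alpha-conversion via $\rStruct$.

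The main technical obstacle, I expect, will be combining the variable-management rules ($\rLocal$, $\rStructVar$, the $\vy$-freshness in $\rPar$) with the principal rules: applying two steps in opposite orders may force different alpha-renamings of the newly introduced local variables, so the two candidates for $\gamma_3$ agree only up to $\equiv$. This is exactly what rule $\rStruct$ is there for, and the plan is to invoke it uniformly at the end of each case to close the diamond modulo $\equiv$, relying on the fact that $\equiv$ on configurations collapses $\localp{\vx}{P}\equiv\localp{\vy}{Q}$ together with $\exists\vx(c)\equivC\exists\vy(d)$, which is preserved by all reductions by Lemma \ref{lemma:redi-properties}.
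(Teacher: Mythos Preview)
Your approach is correct and reaches the same conclusion as the paper's, but the paper organizes the argument more economically. Rather than inducting on the two derivations with a rule-by-rule case analysis, the paper defines a size measure $M(P)$ on processes and inducts on $M(\gamma_0)$. The key simplifying observation, made at the outset, is that the hypothesis $\gamma_1\not\equiv\gamma_2$ already rules out $\rStructVar$ and forces $P$ to be either a parallel composition or an abstraction: $\tellp{c}$, $\localp{\vx}{P'}$, $p(\vt)$ and $\unlessp{c}{P'}$ each have a \emph{unique} internal transition up to $\equiv$, while $\skipp$ and $\nextp{P'}$ have none. So only two cases remain. Your treatment of the abstraction case (two admissible $\vt_1,\vt_2$, close the diamond by spawning both and enlarging $E$) is exactly the paper's. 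For $Q\parallel R$ the paper simply distinguishes whether both redexes lie in the same component (invoke the inductive hypothesis on that smaller component) or in different components (commute directly, after alpha-converting the fresh variables apart). Consequently there is no separate ``independent $\rTell$'' or ``$\rLocal$'' case to handle, and Lemma~\ref{lemma:redi-properties} is never invoked; the commutation is purely structural. Both routes are sound; the paper's is just shorter because it prunes the case tree upfront.
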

\begin{proof} 
Let $\gamma_0 = \mconf{\vx; P;c}$. The proof proceed by structural induction on $P$.  In each case where $\gamma_0$ has two different transitions  (up to $\equiv$)  $\gamma_0 \redi \gamma_1$ and $\gamma_0 \redi \gamma_2$, one shows the existence of $\gamma_3$ s.t. $\gamma_1 \redi \gamma_3$ and $\gamma_2 \redi \gamma_3$. 

 Given a configuration $\gamma=\mconf{\vx; P;c}$
 let us define the size of $\gamma$ as the size of  $P$  as follows:
$M(\skipp)=0$, $M(\tellp{c})=M(p(\vt))=1$, $M(\absp{\vx}{c;D}{P'})=M(\localp{\vx}{P'}) = M(\nextp{P'})= M(\unlessp{c}{P'}) = %M(\bangp{P'}) = 
1 + M(P')$ and $M(Q\parallel R) =M(Q) + M(R)$.
 Suppose  that $\gamma_0\equiv \mconf{\vx; P; c_0 }$, $\gamma_0  \redi \gamma_1$, $\gamma_0 \redi \gamma_2$ and $\gamma_1 \not\equiv \gamma_2$. 
 The proof proceeds by induction on the size of $\gamma_0$. 
 From the assumption $\gamma_1 \not\equiv \gamma_2$, it must be the case that the transition $\redi$ is not an instance of the rule $\rStructVar$; moreover, 
   $P$ is neither  a process of the form $\tellp{c}$, $\localp{\vx}{P}$, %  $\bangp{P'}$, 
 $p(\vt)$
 or  $\unlessp{c}{P'}$ (since those processes have a unique possible transition modulo structural congruence) nor $\nextp{P}$ or $\skipp$ (since they do not exhibit any internal derivation). 

 For the case $P= Q \parallel R$, we have to consider three cases.
 Assume that  $\gamma_1 \equiv \langle\vx_1; Q_1 \parallel R, c_1\rangle$
 and $\gamma_2 \equiv \langle \vx_2; Q_2 \parallel R, c_2\rangle$.
 Let $\gamma'_0 \equiv \langle \vx; Q;c_0\rangle$, 
$\gamma'_1 \equiv \langle\vx_1; Q_1;c_1\rangle$ and
$\gamma'_2 \equiv \langle\vx_2; Q_2;c_2\rangle$. 
We know by induction that if $\gamma'_0 \redi 
\gamma'_1$ and 
 $\gamma_0' \redi \gamma'_2$ then there exists
 $\gamma_3'  \equiv \langle \vx_3;Q_3; c_3\rangle$
 such that $\gamma_1' \redi \gamma_3'$ and $\gamma_2' \redi \gamma_3'$.
We conclude by noticing that  $\gamma_1 \redi \gamma_3 $ and $\gamma_2 \redi \gamma_3$ where $\gamma_3 \equiv \langle \vx_3;Q_3 \parallel R; c_3\rangle$. The  remaining cases when  (1) $R$ has two possible transitions and (2) when $Q$ moves to $Q'$ and then $R$ moves to $R'$ are similar. 
%is similar. Now assume that 
% $\gamma_1 \equiv \langle \vx';Q' \parallel R; c_0 \sqcup c_1\rangle$
% and $\gamma_2 \equiv \langle \vx'';Q \parallel R'; c_0 \sqcup c_2\rangle$.
%Then, we have $\gamma_3 \equiv \langle \vx'\cup\vx''; Q' \parallel R', c_0\sqcup  c_1 \sqcup  c_2\rangle$ (by applying alpha renaming if needed). 

Let  $\gamma_0\equiv \langle \vx;P; c_0 \rangle$ with $P=\absp{\vy}{c;D}{Q}$. One can verify that  $\gamma_1\equiv \langle \vx \cup \vx_1; P_1; c_0 \rangle$  where $P_1$ takes the form $\absp{\vy}{c;D\cup\{d_{\vy\vec{t_1}} \} }{Q} \parallel Q[\vec{t_1}/\vy]$ and  $\gamma_2\equiv \langle \vx \cup  \vx_2; P_2; c_0\rangle$  where $P_2$ takes the form $\absp{\vz}{c; D \cup \{d_{\vy\vec{t_2}}\}}{Q} \parallel Q[\vec{t_2}/\vy]$.  From the assumption $\gamma_1 \not\equiv \gamma_2$, it must be the case that  $d_{\vy\vec{t_1}} \not\equivC d_{\vy\vec{t_2}}$. By alpha conversion we assume that $\vx_1 \cap \vx_2 = \emptyset$. 
Let $\gamma_3 \equiv \langle \vx \cup \vx_1 \cup \vx_2; P_3; c_0\rangle$  where $P_3=\absp{\vy}{c; D \cup \{ d_{\vy\vec{t_1}},d_{\vy\vec{t_2}}\}}{Q} \parallel Q[\vec{t_1}/\vy] \parallel Q[\vec{t_2}/\vy]$.  Clearly  $\gamma_1 \redi \gamma_3$  and $\gamma_2 \redi \gamma_3$ as wanted. 
%Let  $\gamma_0 \equiv \langle \localp{\vx;d}{P},c\rangle$. We notice that $\gamma_0$ has two possible internal derivation whenever $\langle P, \exists \vx(c) \sqcup d \rangle$ does. The case is easily discharged by induction. 
\end{proof}

\begin{observation}[Finite Traces]\label{obs:finite-traces}
Let $
\gamma_1  \redi \cdots \redi \gamma_n\not\redi
$  by a finite internal derivation. The 
number of possible internal transitions (up to $\equiv$) in any $\gamma_i = \mconf{\vx_i;P_i;c_i}$ in the above derivation is finite.
\end{observation}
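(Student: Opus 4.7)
The plan is to prove the observation by structural induction on the process $P_i$ in $\gamma_i$, using the termination of the given derivation only in the one non-trivial case. Write $T(\gamma_i)$ for the set of one-step successors of $\gamma_i$ modulo $\equiv$; I will show $|T(\gamma_i)|$ is finite by case analysis on $P_i$, following the operational rules of Figure~\ref{opersem}.

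Most syntactic cases are immediate from inspection of the rules. If $P_i$ is $\skipp$ or $\nextp Q$, no rule applies and $T(\gamma_i)=\emptyset$. If $P_i$ is $\tellp{c}$, $\localp{\vy}{Q}$, $p(\vt)$ or $\unlessp{c}{Q}$, the applicable rule yields a unique successor modulo $\equiv$. Rule $\rStructVar$ produces configurations structurally congruent to $\gamma_i$ and so contributes no new element to $T(\gamma_i)$, and $\rStruct$ is closed under $\equiv$. For $P_i = Q_1 \parallel Q_2$, every successor comes via $\rPar$ from a successor of one of the components, so $T(\gamma_i)$ is a union of the successor sets for the two components, both finite by the inductive hypothesis.

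The delicate case is $P_i = \absp{\vy}{c;E}{Q}$, where $\rAbs$ admits in principle one transition per admissible $\vt$ with $c_i \entails c\syt$ and $E \DNEQ \dyt$. Here termination enters to bound this set. Along the derivation $\gamma_i \redi^* \gamma_n$, this abstraction evolves into $\absp{\vy}{c;E_n}{Q}$ with $E \subseteq E_n$ and $|E_n \setminus E| \leq n-i$, because each firing of $\rAbs$ on the sub-process adds exactly one element to its guard set. Since only $\rTell$ changes the store and monotonically so, $c_n \entails c_i$; hence for any $\vt$ triggering a transition from $\gamma_i$ we have $c_n \entails c\syt$. Because $\gamma_n \not\redi$, saturation forces $E_n \DEQ \dyt$ for every such $\vt$, and the fact that $E \DNEQ \dyt$ then gives some $d \in E_n \setminus E$ with $d \entails \dyt$. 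Each such $d$ has the form $d_{\vy\vec{u}}$ arising from an earlier $\rAbs$ application; by the equational nature of diagonal elements (Definition~\ref{def:cs}), $d_{\vy\vec{u}} \entails \dyt$ in isolation forces $\dyt \equivC d_{\vy\vec{u}}$, so at most one $\equiv$-class of $\vt$ is contributed per element of $E_n \setminus E$. Therefore $|T(\gamma_i)|$ is bounded by $|E_n \setminus E|$, which is finite.

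The main obstacle is precisely the last implication, that $d_{\vy\vec{u}} \entails \dyt$ in isolation forces $\dyt \equivC d_{\vy\vec{u}}$. This holds for concrete constraint systems with a standard equality theory (Herbrand, FD, and the others used in the paper), but it is not directly axiomatised in Definition~\ref{def:cs}; a rigorous write-up either needs an explicit assumption on the underlying equational theory or an alternative bookkeeping that works entirely at the level of $\equivC$-classes of diagonal elements. An obvious backup route, using only Lemma~\ref{lemma:confluence}, would attempt a reverse induction on $n-i$ and map each extra successor of $\gamma_i$ to a successor of $\gamma_{i+1}$ via a joining step, but this map is not obviously finite-to-one, so the structural argument above seems more tractable.
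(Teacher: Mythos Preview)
Your approach coincides with the paper's: structural induction on $P_i$, with the ${\bf abs}$ case the only non-trivial one and the termination hypothesis supplying the bound. The paper's proof is much terser --- it argues by contradiction that if $c_i \entails c\syt$ held for infinitely many admissible $\vt$ then an infinite internal derivation would exist, contradicting $\gamma_n \not\redi$ --- whereas you make the count explicit via the growth of the guard set $E$ along the tail $\gamma_i \redi^* \gamma_n$. The underlying mechanism is identical: monotonicity of the store pushes all the enabling entailments forward to $\gamma_n$, where the finiteness of $E_n$ must absorb them.

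The gap you isolate is genuine at the level of Definition~\ref{def:cs}: the cylindric-algebra axioms for diagonal elements alone do not give that $d_{\vy\vec{u}} \entails d_{\vy\vec{t}}$ forces $d_{\vy\vec{u}} \equivC d_{\vy\vec{t}}$. The paper's own argument needs exactly the same fact (a finite $E_n$ can block only finitely many $\equivC$-classes of diagonal elements) and simply does not mention it. The resolution lies in Convention~\ref{conv:diag}, which assumes the constraint system ``contains an equality theory''. Read in the standard first-order sense, $d_{\vy\vec{u}} \entails d_{\vy\vec{t}}$ with $\adm{\vy}{\vec{u}}$ and $\adm{\vy}{\vec{t}}$ forces $\bigsqcup_i u_i = t_i$ to hold unconditionally (instantiate each $y_i$ by $u_i$), and then transitivity of equality gives $d_{\vy\vec{t}} \entails d_{\vy\vec{u}}$. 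So your proof is complete under the paper's standing assumptions; your flagging of the point is more careful than the original, not a defect.
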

\begin{proof}
We proceed  by structural induction on $P_i$. The interesting case is the {\bf abs} process. Let $Q= \absp{\vx}{c}{P}$. 
Suppose, to obtain a contradiction, that   $c_i \entails c\sxt$ for infinitely many $\vt$ (to have infinitely many possible internal transitions). In that case, it is easy to see that  we must have infinitely many internal derivation, thus contradicting the assumption that   $\gamma_n \not\redi$. 
\end{proof}

\begin{lemma}[Finite Traces]
If there is a finite internal derivation of the form $
\gamma_1 \redi \gamma_2 \redi \cdots \redi \gamma_n\not\redi
$ 
then, any derivation starting from $\gamma_1$ is finite. 
\end{lemma}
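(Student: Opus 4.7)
My plan is to strengthen the claim to a quantitative one and prove it by induction on $k$: if a configuration $\gamma$ admits a terminating internal derivation of length $k$, then every internal derivation from $\gamma$ is finite and has length at most $k$. The original lemma follows by forgetting the bound (taking $k = n-1$).

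For the base case $k = 0$ we have $\gamma \not\redi$, so no derivation step is possible and the claim is immediate. For the inductive step I would assume the strengthened claim for every configuration whose terminating derivation has length strictly less than $k$, fix a derivation $\gamma = \gamma_0 \redi \gamma_1 \redi \cdots \redi \gamma_k \not\redi$ of length $k > 0$, and let $\gamma = \beta_0 \redi \beta_1 \redi \beta_2 \redi \cdots$ be an arbitrary alternative derivation. The argument splits on whether $\beta_1 \equiv \gamma_1$.

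In the easy case $\beta_1 \equiv \gamma_1$, the tail $\beta_1 \redi \beta_2 \redi \cdots$ is, via the extension of $\equiv$ to configurations, also a derivation from $\gamma_1$; since $\gamma_1$ has a terminating derivation of length $k-1$, the induction hypothesis bounds the tail by $k-1$ and we are done. In the remaining case $\beta_1 \not\equiv \gamma_1$, I would invoke Lemma~\ref{lemma:confluence} to obtain $\delta$ with $\gamma_1 \redi \delta$ and $\beta_1 \redi \delta$. The induction hypothesis applied to $\gamma_1$ bounds every derivation from $\gamma_1$ by $k-1$, hence every derivation from $\delta$ by $k-2$, so $\delta$ admits a terminating derivation of length at most $k-2$; applying the induction hypothesis to $\delta$ gives the same bound for arbitrary derivations from $\delta$; then the step $\beta_1 \redi \delta$ yields a terminating derivation of length at most $k-1$ from $\beta_1$; finally a third use of the induction hypothesis, on $\beta_1$, bounds the tail $\beta_1 \redi \beta_2 \redi \cdots$ by $k-1$, so the whole alternative derivation has length at most $k$.

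The main obstacle I expect is the bookkeeping in the $\beta_1 \not\equiv \gamma_1$ case: one must verify that each nested appeal to the induction hypothesis is applied to a configuration whose terminating derivation is strictly shorter (namely of lengths $k-1$, ${\leq}\,k-2$, and ${\leq}\,k-1$ in the three successive invocations), and that the ``tile-filling'' step of Lemma~\ref{lemma:confluence} composes cleanly with $\equiv$ so that the case $\beta_1 \equiv \gamma_1$ correctly absorbs into the easy case. Note that Observation~\ref{obs:finite-traces} is not actually needed for this lemma as stated (it would be required only if one went via K\"onig's Lemma to bound the full tree of derivations, whereas here we need only to bound individual branches).
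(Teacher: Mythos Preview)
Your proof is correct and takes a genuinely different route from the paper's. The paper's argument invokes both Observation~\ref{obs:finite-traces} (finite branching along the given terminating path) and Lemma~\ref{lemma:confluence}, apparently intending a K\"onig-style argument: finitely branching plus a bound on path length implies the whole derivation tree is finite. Your argument instead runs the classical Newman-style induction on the length $k$ of the given terminating derivation, proving the sharper quantitative statement that every derivation from $\gamma$ has length at most $k$; this uses only local confluence (Lemma~\ref{lemma:confluence}) and, as you correctly observe, makes Observation~\ref{obs:finite-traces} unnecessary. In fact your route is a bit more robust: the paper's Observation~\ref{obs:finite-traces}, as stated, guarantees finite branching only at the configurations $\gamma_i$ lying on the specific terminating path, not at arbitrary reachable configurations, so a direct K\"onig argument would still require extending that observation; your induction sidesteps the issue entirely. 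The bookkeeping you flag (the three nested invocations of the induction hypothesis at lengths $k-1$, ${\leq}\,k-2$, ${\leq}\,k-1$, and the degenerate case $k=1$ where $\beta_1 \not\equiv \gamma_1$ is ruled out because $\gamma_1 \not\redi$ contradicts the existence of the confluence tile) is routine and goes through as you describe.
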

\begin{proof}
We observe that
recursive calls must be guarded by a {\bf next} processes. Then, any infinite behavior inside a time-unit is due to an {\bf abs} process. 
%Let 
%$\gamma_1 \redi   \cdots \redi \gamma_n\not\redi$
%be a finite internal derivation. 
%We proceed by induction on the length of the derivation. 
%Let $\gamma_1 = \mconf{\vx_1;P_1;c_1}$. If $n=1$ then $\gamma_1 \not\redi$ and   for any $\gamma_1' \equiv \gamma_1$, it must be the case that $\gamma_1' \not\redi$.  Now assume a derivation of length $n$.  
From  Observation \ref{obs:finite-traces} and Lemma \ref{lemma:confluence}, it follows that any derivation starting from $\gamma_1$ is finite. 
\end{proof}

\appResult{Theorem \ref{theo:SOS-determinism}}{Determinism}{
Let $s,w$ and $w'$ be (possibly infinite) sequences of constraints. If both
$(s,w)$, $(s,w') \in \iobehav{P}$ then $w\equivC w'$. 
}
\begin{proof}
%Let $F$ be the future function in Definition \ref{future}.
Assume that $P\rede{(c,\exists \vx(d))}\localp{\vx}{F(Q)}$, $P\rede{(c,\exists \vx'(d'))}\localp{\vx'}F(Q')$
and let $\gamma_1\equiv\langle \emptyset;P;c \rangle$, $\gamma_2\equiv\langle \emptyset;P;c \rangle$. If $\gamma_1 \not\redi$ then trivially $\gamma_2 \notredi$, $d\equivC d'$ and $Q \equiv Q'$. 
Now assume that $\gamma_1 \redi^{*} \gamma_1' \not\redi$ and $\gamma_2 \redi^{*} 
\gamma_2'\not\redi $ where $\gamma_1' \equiv \langle \vx; Q;d\rangle$ and $\gamma_2' \equiv \langle \vx'; Q';d'\rangle$.  By   repeated applications of Lemma \ref{lemma:confluence}  we conclude $\gamma_1' \equiv \gamma_2'$ and then, $d\equivC d'$ and $Q\equiv Q'$. %We therefore have $Q \equiv Q'$. 
\end{proof}

\appResult{Lemma \ref{lem:COP}}{Closure Properties}{
Let $P$ be a process. Then,
\begin{enumerate}
\item[(1)] $\iobehav{P}$ is a function.
\item[(2)] $\iobehav{P}$  is a  partial closure operator,  namely it satisfies:
\\ \noindent {\bf Extensiveness}: If $(s,s') \in  \iobehav{P}$  then $s \leq s'$.\\
\noindent {\bf Idempotence}: If $(s,s') \in  \iobehav{P}$  then $(s',s') \in  \iobehav{P}$.\\
\noindent {\bf Monotonicity}: Let $P$ be a  monotonic process such that  $(s_1,s_1') \in  \iobehav{P}$. If $(s_2 ,s_2' ) \in  \iobehav{P}$ and  $s_1 \leq s_2$, then  $s_1' \leq s_2'$.
\end{enumerate}
}
\begin{proof}
We shall assume here that the input and output sequences are infinite. The proof for the case when the sequences are finite is analogous. 
The proof of (1) is immediate from Theorem \ref{theo:SOS-determinism}.  For (2), assume that $s=c_1.c_2...$, $s'=c_1'.c_2'...$ and that $(s,s') \in  \iobehav{P}$. We then have a derivation of the form:
\[
P \equiv P_1 \rede{(c_1,c_1')} P_2 \rede{(c_2,c_2')} ... P_i \rede{(c_i,c_i')} P_{i+1} ... 
\]
For $ i\geq 1$, we also know that there is an internal derivation of the form
$
\langle \emptyset;  P_i ; c_i\rangle \redi^*  \langle \vx; P_i' ; c_i'\rangle \not\redi
$
 where  $P_{i+1} = \localp{\vx}{F(P_i')} $.  

 \noindent{\bf Extensiveness} follows from  (1) in 
Lemma \ref{lemma:redi-properties}. 

 \noindent{\bf Idempotence} is proved by repeated applications of (3) in Lemma \ref{lemma:redi-properties}.
 
As for  {\bf Monotonicity}, we proceed as in  \cite{NPV02}. Let $\preceq$ be the minimal ordering relation on  processes satisfying: (1)  $\skipp \preceq P$. (2) If $P \preceq Q$ and $P\equiv P'$ and $Q\equiv Q'$ then 
	$P' \preceq Q'$. (3) If $P \preceq Q$, for every context $C[\cdot]$, $C[P] \preceq C[Q]$.
	Intuitively, $P \preceq Q$  represents the fact that $Q$ contains ``at least as much code'' as $P$. We have to show that for every $P$, $P'$, $c$, $c'$  and $\vx,\vx'$ if 
$\langle \vx; P;c \rangle \redi^*\langle \vx';P';c' \rangle\not\redi$  then for every $d\entails c$ and $Q$ s.t. $P\preceq Q$ there 
$\langle \vx; Q;d \rangle \redi^*\langle \vy;Q';d' \rangle\not\redi$
for some  $\vy$ and $Q'$ with $\localp{\vx'}{F(P')} \preceq \localp{\vy}{F(Q')}$ and  $\exists \vy(d') \entails \exists \vx'(c')$.
 This can be proved by induction on the  length of the derivation using the following two properties:\\
\noindent ({\bf a}) 
$\redi$ is monotonic w.r.t. the store, in the sense that,  if $\langle \vx; P;c \rangle \redi \langle \vx';P';c' \rangle$ then for every $d \entails c$ and $Q$ s.t. 
$P \preceq Q$, 
$\langle\vx; Q;d \rangle \redi \langle \vy; Q';d' \rangle$ where $\exists \vy(d') \entails \exists \vx'(c')$
and $\localp {\vx'}{P'} \preceq \localp{\vy}{Q'}$.

\noindent ({\bf b}) For every monotonic process $P$, if $\langle  \vx;
	P;c\rangle \not\redi$ then for every $d\entails c$ and  $Q$ 
	such that $P \preceq Q$ we have either 
 $\langle\vx; Q;d\rangle \not\redi$ or 
 $\langle \vx; Q;d\rangle \redi^* \langle \vx';Q';d' \rangle \not\redi$ where 
 $\exists \vx'(d') \entails \exists \vx(d)$ and  $\localp{\vx}{F(P)} \preceq \localp{\vx'}{F(Q')}$.
The restriction to  programs which do not contain {\bf unless}  constructs is essential  here. %%\qed
%\end{itemize} 
%\end{itemize}
%Assume that  $\langle P,c\rangle \redi \langle P',c'\rangle $ and let $d\entails c$. As stated previously, $\langle P,c\sqcup d \rangle \redi \langle P',c'\rangle $for all $c''$, it holds that  $\langle P,c \sqcup c''\rangle \redi \langle P',c' \sqcup c'' \rangle $. 
%\end{itemize}
\end{proof}

\appResultNN{Theorem \ref{the:col:CO}}{
Let $min$ be the minimum function w.r.t. the order induced by $\leq$ and  $P$ be a monotonic process. Then, 
$(s, s') \in \iobehav{P} \mbox{\ \ iff\ \  } 
s' = min(\spbehav{P} \cap \{w \ | \ s \leq w \})
$
}
\begin{proof}
Let $P$ be a monotonic process and 
$(s,s') \in \iobehav{P}$.
	By extensiveness $s\leq s'$ and by 
	idempotence,   $(s',s')\in \iobehav{P}$. 
	Let $s'' = min(\spbehav{P} \cap \{w \ | \ s \leq w \})$. Since $s' \in \spbehav{P}$ and $s\leq s'$, it must be the case that 
	$s\leq s'' \leq s'$. If $(s'',s''') \in \iobehav{P}$, by monotonicity $s'\leq s'''$. Since $s''\in \spbehav{P}$, $s''\equivC s'''$ and then, $s' \leq s''$. We conclude $s' \equivC s''$. 
\end{proof}

\section{Detailed Proofs Section \ref{sec:denotsem}} \label{app:proofs-den}
\appResult{Observation \ref{l-vx-eq}}{Equality and  $\vx$-variants}{
Let $S \subseteq \cC^\omega$, $\vz \subseteq {\it Var}$ and   $s,w$ be $\vx$-variants such that $\dxt^\omega \leq s$,  $\dxt^\omega \leq w$  and $\adm{\vx}{\vt}$. (1)  $s\equivC w$. (2)   $\exists \vz(s) \in \Forall \vx (S)$ iff $s \in \Forall \vx (S)$. }
\begin{proof}
(1) Let $i\geq1$,  $c=s(i)$ and $d=w(i)$. We  prove that $c\entails d$ and $d\entails c$.  We know that  
$c\sqcup \dxt \equivC c$, $d\sqcup \dxt \equivC d$ and $\exists \vx (c\sqcup \dxt) \equivC \exists \vx (d\sqcup \dxt)$. Hence, $c\sxt \equivC d\sxt$. Since $c \entails \exists \vx( c)$, we can  show that 
$c \entails \exists \vx(d \sqcup \dxt)$ and then, $c\entails  d\sxt $.
%Similarly, $d \entails c\sxt$. 
Since $d\sxt \sqcup \dxt \entails d$ (Notation \ref{not:terms})  we conclude $c \entails d$. The  ``$d\entails c$'' side is analogous and we conclude $c \equivC d$. \\
Property (2) follows directly from the definition of $\Forall (\cdot)$. 
\end{proof}

\appResultNN{Lemma \ref{lem:soundness}}{
Let $\os \cdot \cs$ be as in Definition \ref{def:conc-semantics}.
If $P\rede{(d, d')}{R}$ and $d \equivC d'$,  then $d.\os R \cs \subseteq  \os P\cs$.
}
\begin{proof}
Assume that $\langle\vx; P;d\rangle \redi^* \langle \vx'; P';d'\rangle \not\redi$,  $\exists \vx(d) \equivC \exists \vx'(d')$.
 We shall prove that $\exists \vx(d). \Exists \vx' (\os F(P')) \cs \subseteq \Exists \vx(\os P\cs)$. 
	We proceed by induction on the lexicographical order on the length of the internal derivation   and the structure of  $P$, where the predominant component is the length of the derivation. Here we present the missing cases in the body of the paper. 

%%%%%%%%% SKIP
\noindent \underline{{\bf Case} $P=\skipp$}. This case is trivial. 

%%%%%%%%% TELL
\noindent \underline{{\bf Case}  $P=\tellp{c}$}. If $\langle \vx;\tellp{c};d \rangle \redi \langle \vx; \skipp,d \rangle $ then it must be the case that $d\equivC d\sqcup  c$ and  $d\entails c$. We conclude  $\exists \vx(d).\os \skipp \cs \subseteq \Exists \vx(\os \tellp{c}\cs)$.

  %%% LOCAL
\noindent \underline{{\bf Case}  $P = \localp{\vx;c}{Q}$}.  Consider the following derivation 
 \[
 \mconf{\vy;\localp{\vx}{Q};d} \redi  \mconf{\vy\cup\vx;Q;d} \redi^* \mconf{\vy\cup\vx';Q';d'}\notredi
 \]
 where, by alpha-conversion,  $\vx \cap \vy = \emptyset$ and $\vx \cap \fv(d) = \emptyset$. Assume that 
   $\exists \vy(d) \equivC \exists \vy \exists\vx'  (d')$. 
 Since the derivation starting from $Q$ is shorter than that starting from $P$, we conclude
$
  \exists \vy(d).\Exists \vy,\vx' \os F(Q')\cs \subseteq \Exists \vx,\vy \os Q\cs$.
 
% By alpha conversion assume that $\vx\notin \fv(d)$. We can verify that there is a derivation of the form %of length $m$, 
% \[
% \langle  \localp{\vx;c}{Q} ,d\rangle
% \redi^*
% \langle  \localp{\vx;c_1}{Q_1} ,d\rangle
% \redi^*
% \langle  \localp{\vx;c_n}{Q_n} ,d\rangle \not\redi
% \]
% 
% where $d\entails \exists \vx(c_n)$.  Since $\vx\notin \fv(d)$, by using the rule $\rLocal$ we can verify that there is also a  derivation of the form:
% \[
% \langle Q , d\sqcup c \rangle \redi^*  
%  \langle Q _1, d\sqcup c_1 \rangle \redi^*  
%  \langle Q_n , d\sqcup c_n \rangle \not\redi
%  \]
%  where $d= \exists \vx(d \sqcup c_n)$ and $c_n \entails c$. By idempotence (Lemma \ref{lem:COP}), it must be the case that $
% \langle Q , d\sqcup c_n \rangle \redi^*  
% \langle Q_n , d\sqcup c_n \rangle \not\redi
% $. 
% By (structural) inductive hypothesis we know that $(d\sqcup c_n).\os F(Q_n)\cs \subseteq \os {Q} \cs$. 
% Since $c \leq c_n$ we derive that $(d\sqcup c_n).\os F(Q_n)\cs \subseteq \os Q \cs \cap \up{c}.\cC^\omega$. Furthermore, since $\exists \vx ( d\sqcup c_n) = d$ we have $d.\Exists \vx (\os F(Q_n)\cs) \subseteq \os \localp{\vx;c}{Q}\cs$. By noticing that 
% $F(\localp{\vx;c_n}{Q_n}) = \localp{\vx}{F(Q_n)}$ we conclude  
% \[d.\os F(\localp{\vx;c_n}{Q_n})\cs \subseteq   \os \localp{\vx;c}{Q} \cs\]
  
%%%%%%%%% NEXT	
\noindent \underline{{\bf Case}  $P=\nextp{Q}$}. This case is trivial since
	$d.\os Q\cs \subseteq \os P\cs$ for any $d$. 
	
%	 $\langle \nextp{Q},d\rangle \not\redi$  for any $d$. Then .
%%%%%%%%% UNLESS	
\noindent \underline{{\bf Case }$P=\unlessp{c}{Q}$}. We distinguish two cases:
(1) If $d \entails c$, then we have $\langle \vx; \unlessp{c}{Q};d\rangle \redi \langle \vx; \skipp ;d \rangle \not\redi$ and  we conclude $\Exists \vx(d).\os \skipp \cs \subseteq \Exists \vx \os \unlessp{c}{P}\cs$. (2), the case when     $d\not\entails c$  is similar to the case of $P=\nextp{Q}$.

\end{proof}

%%%%%%%%%%%%%%%%%%%
\appResult{Lemma \ref{theo:comp}}{Completeness}{
Let  $\cD.P$ be a locally independent program s.t. $d.s \in \os P \cs$. If $P\rede{(d,d')}R$ then $d'\equivC d$ and 
$s \in \os R \cs$. 
}
\begin{proof} 
Assume that   $P$ is locally independent, $d.s \in \os P \cs$ and there is a derivation of the form  $\langle \vx;P;d\rangle \redi^*\langle \vx';P';d'\rangle \not\redi$. We shall prove that $\exists x(d) \equivC \exists \vx'(d')$ and
$s\in \Exists \vx' \os  F(P')\cs$. 
	We proceed by induction on the lexicographical order on the length of the internal derivation  ($\redi^*$)  and the structure of  $P$, where the predominant component is the length of the derivation. 
 The locally independent condition is used for the case $P=\localp{\vx;c}{Q}$.
 We present here the missing cases in the body of the paper. 

%%%%%%% SKIP
\noindent \underline{{\bf Case}  $\skipp$}. This case is trivial

%%%%%%% TELL
\noindent \underline{{\bf Case}  $P = \tellp{c}$}. This case is trivial since it must be the case that $d\entails c$ and hence $d\sqcup c\equivC d$. 

%%%%%%%  NEXT	
\noindent \underline{{\bf Case}  $P=\nextp{Q}$}. %Let $s = c.s'$. 
	This case is trivial since $\langle \vx;P;d \rangle \not\redi$ for any $d$ and $\vx$ and $F(P)=Q$. 
%	It must be the case that $s' \in \os Q\cs$ and by inductive hypothesis, $s' \in \spbehav{Q}$. We conclude $s\ \in \spbehav{P}$. 
	
%%%%%%% UNLESS	
\noindent \underline{{\bf Case}  $P=\unlessp{c}{Q}$}.If  $d \entails c$ the case is trivial. If  $d\not\entails c$  the case is similar to that of $P=\nextp{Q}$. 
%%%%%%% REPLICATION
%\item $P=\bangp Q$.  By  equation $\rdBang$ we know that  $s\in \os \bangp{Q}\cs$ and  $d.s \in \os Q\cs$. By inductive hypothesis, $\langle Q,d \rangle \redi^*\langle Q,d \rangle \not\redi$ and $s \in \os F(Q')\cs$. By using the rule $\rBang$, there is a derivation of the form $\langle P,d \rangle \redi \langle \nextp{\bangp{Q}} \parallel Q ,d \rangle  \redi^*\langle \nextp{\bangp{Q}} \parallel Q',d \rangle \not\redi$. We then conclude $s\in \os F(Q') \parallel \bangp{Q}\cs$. 

\noindent \underline{{\bf Case}  $P=p(\vt)$}.  Assume that 
$p(\vx) :- Q \in  \cD$.   If $d.s \in \os p(\vt)\cs$ then   $d.s\in \os  Q\sxt \cs $. By using the rule $\rCall$ we can show that there is a derivation
\[
\langle \vy; p(\vx);d\rangle \redi \langle \vy; Q\sxt ;d\rangle \redi^* \langle \vy'; Q';d'\rangle \not\redi
\]

By inductive hypothesis we know that $\exists y'(d')\equivC \exists \vy(d)$ and 
$s\in \Exists \vy' \os F( Q') \cs$. 

\end{proof}

\section{Detailed Proofs Section \ref{sec:absframework}} \label{app-sec-abs}
\appResult{Theorem \ref{teo:corr}}{
Soundness of the approximation}
{
Let $(\cC, \alpha, \cA )$  be a description and   ${\mathbf{A}}$  be upper correct w.r.t. $\mathbf{C}$.   Given a  \utcc\ program $\cD.P$, if 
$ s \in \os P \cs $ then  $\alpha(s) \in \os P \cs^{\alpha}$.  
%  $\os P \cs^\tau \propto \os P \cs$.
}
\begin{proof} 
Let $d_{\alpha}.s_{\alpha} = \alpha(d.s)$ and assume that $d.s \in  \os P \cs$. Then, $d.s \in \os P\cs_I$ where $I$ is the lfp of $T_{\cD}$. By the continuity of $T_{\cD}$,  there exists $n$ s.t.  $I = T_{\cD}^n(I_\bot)$ (the $n$-th application of $T_{\cD}$). 
We proceed by  induction on the lexicographical order on 
the pair $n$ and the structure of $P$, where the predominant component is the length $n$. We present here the missing cases in the body of the paper. 

\noindent \underline{{\bf Case}  $P=\skipp$}. This case is trivial.
	%%%%%%%%%

\noindent \underline{{\bf Case}   $P=\tellp{c}$}. We must have $d\entails c$ and by monotonicity of $\alpha$,  $d_{\alpha} \absentails \alpha(c)$. We conclude  $d_{\alpha}.s_{\alpha} \in \os \tellp{c}  \cs^\alpha$.
	%%%%%%%%%%

\noindent \underline{{\bf Case}   $P = Q \parallel R$}. We must have that $s\in \os Q\cs$
	and $s\in \os R \cs$.  By inductive hypothesis we know that
	$s_{\alpha} \in \os Q\cs^\alpha$ and
	$s_{\alpha} \in \os R\cs^\alpha$ and then,
	$s_{\alpha} \in \os Q \parallel R \cs^\alpha$.

%%%%%%%%%%
% LOCAL
%%%%%%%%%%

\noindent \underline{{\bf Case}   $P=\localp{\vx}{Q}$}. It must be the case that there exists $d'.s'$ $\vx$-variant of $d.s$ s.t.
		$d'.s' \in \os Q \cs$. Then,  by (structural) inductive hypothesis $\alpha(d'.s') \in \os Q\csa$. We conclude by using the properties of $\alpha$ in Definition \ref{dec:corapp} to show that $\existsa \vx(\alpha(d.s)) = \existsa \vx(\alpha(d'.s'))$, i.e., $\alpha(d.s)$ and $\alpha(d'.s')$ are $\vx$-variants, and then, $d_\alpha.s_\alpha \in \os \localp{\vx}{Q} \csa$. 
%%%%%%%%%%%%%%%

\noindent \underline{{\bf Case}   $P=\nextp{Q}$}. 
	We know that $s \in \os Q\cs$ and by inductive hypothesis
	$\alpha(s) \in \os Q\cs^\alpha$. We then conclude $d_{\alpha}.s_{\alpha} \in \os P \cs^\alpha$. 
	
\noindent \underline{{\bf Case}   $P=\unlessp{c}{Q}$}. This case is trivial since $\cA$ approximates every possible concrete computation.
\end{proof}

\section{Auxiliary results} \label{app:proofs-aux}
\begin{proposition}\label{prop:x-variants-free-vars}
Let $P$ be a process such that   $\vx \cap \fv(P) = \emptyset$ and let $d.s \in \os P\cs$. If   $d'.s'$ is an $\vx$-variant of $d.s$ then $d'.s' \in \os P \cs$. 
\end{proposition}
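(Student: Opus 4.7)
The plan is to prove the claim by structural induction on $P$, augmented with an inner induction on the fixpoint iteration of $T_{\cD}$ to handle procedure calls. By continuity of $T_{\cD}$ (Proposition \ref{prop:cont-td}), $\os P \cs$ equals the intersection over $n$ of the approximants $\os P \cs_{T_{\cD}^{n}(I_\bot)}$, and closure under $\vx$-variants is preserved by arbitrary intersection, so it suffices to verify the claim at each fixpoint level. At level $n{+}1$, the procedure-call case $p(\vt)$ unfolds to $Q\sxt$ at level $n$; since $\fv(p(\vt)) = vars(\vt)$, the hypothesis gives $\vx \cap vars(\vt) = \emptyset$, hence $\vx \cap \fv(Q\sxt) = \emptyset$, and the inner IH applies.

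The cases $\skipp$, $\nextp{P}$, and $P \parallel Q$ are immediate from the semantic equations and the IH. For the guarded constructs $\tellp{c}$, $\whenp{c}{P}$, and $\unlessp{c}{P}$, the hypothesis $\vx \cap \fv(c) = \emptyset$ gives $\exists \vx(c) \equivC c$; combining this with monotonicity of $\exists$ (Property (2) of Definition \ref{def:cs}) and the fact that $d \entails \exists \vx(d)$, one sees that $d \entails c$ iff $d' \entails c$ for every $\vx$-variant $d'$ of $d$, and the rest follows from the IH on $P$. For the Local case $P = \localp{\vy}{Q}$, alpha-rename so that $\vy \cap \vx = \emptyset$, which gives $\vx \cap \fv(Q) = \emptyset$. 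Given $d.s \in \Exists \vy(\os Q \cs)$ witnessed by a $\vy$-variant $e.t \in \os Q \cs$ of $d.s$, and given an $\vx$-variant $d'.s'$ of $d.s$, define $e'.t'$ pointwise by $e'(i) = \exists \vx(e(i)) \sqcup \exists \vy(d'(i))$. Using commutativity of cylindrifications (Property (4) of Definition \ref{def:cs}) together with $\exists \vy(e) \equivC \exists \vy(d)$ and $\exists \vx(d) \equivC \exists \vx(d')$, a routine computation yields $\exists \vy(e') \equivC \exists \vy(d')$ and $\exists \vx(e') \equivC \exists \vx(e)$; thus $e'.t'$ is simultaneously a $\vy$-variant of $d'.s'$ and an $\vx$-variant of $e.t$, and the structural IH on $Q$ gives $e'.t' \in \os Q \cs$, hence $d'.s' \in \Exists \vy(\os Q \cs)$.

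The Abs case $P = \absp{\vy}{c}{Q}$ is the expected main obstacle. After the same alpha-renaming, the structural IH on $Q$ combined with the guarded-construct argument above shows that $\os \whenp{c}{Q} \cs$ is itself closed under $\vx$-variants, and it remains to transfer this closure through $\Forall \vy$. The plan is to use Proposition \ref{prop:forall-subs} to rephrase $d.s \in \Forall \vy(\os \whenp{c}{Q}\cs)$ as the family of conditions $d.s \in \Exists \vy(\os \whenp{c}{Q} \cs \cap \up \dyt^{\omega})$ ranging over all admissible $\syt$, and similarly for $d'.s'$; for each fixed $\syt$, a construction analogous to the Local case transfers a $\vy$-witness $e.t$ for $d.s$ (satisfying $e.t \entails \dyt^{\omega}$ and $e.t \in \os \whenp{c}{Q}\cs$) to a $\vy$-witness $e'.t'$ for $d'.s'$, then uses the established closure of $\os \whenp{c}{Q}\cs$ under $\vx$-variants to conclude $e'.t' \in \os \whenp{c}{Q}\cs$. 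The delicate point, which is the main technical hurdle, is ensuring that the constructed $\vy$-witness for $d'.s'$ still entails $\dyt^{\omega}$; this requires a careful use of Properties (3) and (4) of the cylindric axioms together with the freshness hypothesis $\vy \cap \vx = \emptyset$ (potentially combined with Observation \ref{l-vx-eq} to identify canonically the $\vy$-variants satisfying the instantiation condition up to $\equivC$).
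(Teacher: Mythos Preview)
Your overall structure is sound and matches the paper's structural induction; the guarded cases, the Local case, and your more careful treatment of procedure calls via the fixpoint approximants are all fine (the paper simply declares those ``easy'' without spelling out the recursion argument).

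The substantive divergence is in the $\mathbf{abs}$ case, and here the paper's argument is both simpler and cleaner than yours. You propose to unfold $\Forall \vy$ via Proposition~\ref{prop:forall-subs} into the family of conditions indexed by admissible $\syt$, then transfer witnesses as in the Local case. You correctly flag the delicate point: your constructed witness $e' = \exists \vx(e)\sqcup\exists\vy(d')$ need not entail $\dyt$ when the term vector $\vt$ contains variables from $\vx$, and none of the cylindric axioms you cite fixes this directly. The paper sidesteps the whole issue. By the very shape of the definition of $\Forall \vy(S)$ (and Observation~\ref{l-vx-eq}(2)), membership in $\Forall \vy(S)$ is invariant under applying $\exists \vz$ for \emph{any} $\vz$, independently of $S$: if $d.s \in \Forall \vy(S)$ then $\exists \vx(d.s)\in\Forall\vy(S)$, hence $\exists\vx(d'.s')\in\Forall\vy(S)$ by $\equivC$, hence $d'.s'\in\Forall\vy(S)$ by the converse direction. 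No witness construction, no case split on $\vt$, and in fact no use of the structural IH on $Q$ at all.

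So your plan is not wrong in spirit, but the route you chose for $\mathbf{abs}$ carries a real technical obstacle that the paper avoids entirely by exploiting a closure property already baked into the definition of $\Forall$. I would replace your Abs paragraph with the paper's two-line argument and keep the rest of your plan.
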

\begin{proof}
	The proof proceeds  by induction on the structure of $P$. We shall use the notation $c(\vy)$ and $P(\vy)$ to denote constraints and processes where the free variables are exactly $\vy$ and we shall assume that $\vy \cap \vx = \emptyset$. We assume that $d.s \in \os P(\vy) \cs$ and $d'.s'$ is an $\vx$-variant of $d.s$. 
 We consider the following cases.  The others are easy. 

\noindent \underline{{\bf Case}  $P = \whenp{c(\vy)}{Q(\vy)}$}. If $d' \entails c(\vy)$ then, by monotonicity, $\exists \vx (d') \entails \exists \vx (c(\vy))$ and then $\exists \vx(d) \entails c(\vy)$. Hence, it must be the case that $d \entails c(\vy)$ and  $d.s \in \os Q(\vy)\cs$.  By induction we conclude $d'.s' \in \os Q(\vy)\cs$. If $d' \notentails c(\vy)$, then $\exists \vx(d') \not\entails c(\vy)$ (since $\exists \vx (d') \leq d'$). Hence, $d \not\entails c(\vy)$ and trivially, $d.s \in \os P\cs$ and so $d'.s' \in \os P\cs$. 

\noindent \underline{{\bf Case} $P = \absp{\vz}{c(\vz,\vy)}{Q(\vz,\vy)}$}. 
			We know that 
		$d.s \in \Forall \vz \os \whenp{c(\vz,\vy)}{Q(\vz,\vy)}\cs$. 
		By definition of the operator $\Forall(\cdot)$, $\exists \vx (d.s) \in \os P\cs $. Since   $\exists \vx (d'.s') \equivC \exists \vx (d.s)$ we conclude $d'.s' \in \os P\cs$. 
\end{proof}

\begin{proposition}\label{prop:exists-free-vars}
If $\vx \cap \fv(P) = \emptyset$ then $\os P\cs = \Exists \vx \os P\cs$.
\end{proposition}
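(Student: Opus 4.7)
The plan is to prove the two inclusions separately, using the preceding Proposition~\ref{prop:x-variants-free-vars} for the nontrivial direction.

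For the inclusion $\os P\cs \subseteq \Exists \vx \os P\cs$, I would simply appeal to extensiveness of the $\Exists$ operator: given $s \in \os P\cs$, taking $s' = s$ in the definition of $\Exists \vx$ yields $\exists\vx(s) \equivC \exists\vx(s)$, hence $s \in \Exists \vx \os P\cs$. This direction does not require the hypothesis $\vx \cap \fv(P) = \emptyset$.

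For the converse inclusion $\Exists \vx \os P\cs \subseteq \os P\cs$, I would unfold the definition of $\Exists$: a sequence $s \in \Exists \vx \os P\cs$ means that there is some $s' \in \os P\cs$ with $\exists\vx(s) \equivC \exists\vx(s')$, i.e., $s$ is an $\vx$-variant of $s'$. Since by hypothesis $\vx \cap \fv(P) = \emptyset$, I can then directly invoke Proposition~\ref{prop:x-variants-free-vars} on $s'$ to conclude $s \in \os P\cs$.

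There is no real obstacle here: the main work has been done in Proposition~\ref{prop:x-variants-free-vars}, whose structural induction handles the interesting cases (ask and abstraction), and the present statement is essentially a restatement of that fact in terms of the $\Exists$ operator. The only thing to be careful about is matching the definition of $\vx$-variant used in $\Exists$ (namely $\exists\vx(s) \equivC \exists\vx(s')$) with the notion employed in the previous proposition, but these are literally the same condition, so the argument is immediate.
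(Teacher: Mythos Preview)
Your proposal is correct and matches the paper's own proof essentially verbatim: the paper also calls the inclusion $\os P\cs \subseteq \Exists \vx \os P\cs$ trivial from the definition of $\Exists(\cdot)$, and derives the reverse inclusion directly from Proposition~\ref{prop:x-variants-free-vars}. You merely spell out the witness $s'=s$ and the unfolding of the definition a bit more explicitly than the paper does.
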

\begin{proof}
		The case $\os P\cs \subseteq \Exists \vx \os P \cs$ is trivial by the definition of $\Exists(\cdot)$. 
		The case $ \Exists \vx \os P \cs\subseteq \os P\cs $, follows directly from Proposition \ref{prop:x-variants-free-vars}.
\end{proof}

\begin{proposition} \label{prop:den-se-ext}
If $\vx \not\in \fv(Q)$ then $\Exists \vx (\os P\cs \cap \os Q\cs) = \Exists \vx(\os P\cs) \cap \os Q\cs$.
\end{proposition}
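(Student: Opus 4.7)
The plan is to prove the equality by double inclusion, using Proposition \ref{prop:x-variants-free-vars} as the main tool. Recall that by definition $s \in \Exists \vx(S)$ iff there exists $s' \in S$ with $\exists \vx(s) \equivC \exists \vx(s')$, i.e., iff $s$ has some $\vx$-variant lying in $S$.

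For the inclusion $\Exists \vx(\os P\cs \cap \os Q\cs) \subseteq \Exists \vx(\os P\cs) \cap \os Q\cs$, I would take $s$ in the left-hand side and obtain a witness $s' \in \os P\cs \cap \os Q\cs$ that is an $\vx$-variant of $s$. Membership in $\Exists \vx(\os P\cs)$ follows immediately from $s' \in \os P\cs$. For $s \in \os Q\cs$, I invoke Proposition \ref{prop:x-variants-free-vars}: since $\vx \cap \fv(Q) = \emptyset$ and $s$ is an $\vx$-variant of $s' \in \os Q\cs$, we conclude $s \in \os Q\cs$.

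For the reverse inclusion $\Exists \vx(\os P\cs) \cap \os Q\cs \subseteq \Exists \vx(\os P\cs \cap \os Q\cs)$, I would take $s$ in the left-hand side, so $s \in \os Q\cs$ and there exists an $\vx$-variant $s'$ of $s$ with $s' \in \os P\cs$. The idea is to use $s'$ itself as the witness for membership in $\Exists \vx(\os P\cs \cap \os Q\cs)$. For this I need $s' \in \os Q\cs$; again applying Proposition \ref{prop:x-variants-free-vars} (in the symmetric direction: $s'$ is an $\vx$-variant of $s \in \os Q\cs$, and $\vx \cap \fv(Q) = \emptyset$) yields $s' \in \os Q\cs$. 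Hence $s' \in \os P\cs \cap \os Q\cs$ and $\exists \vx(s) \equivC \exists \vx(s')$, giving the required membership.

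There is no real obstacle here: the statement is essentially an algebraic fact that ``hiding variables $\vx$ not free in $Q$ commutes with intersecting with $\os Q\cs$,'' and both directions reduce to a single application of the closure of $\os Q\cs$ under $\vx$-variants (Proposition \ref{prop:x-variants-free-vars}). The only subtle point to check is that the ``$\vx$-variant'' relation, defined by $\exists \vx(s) \equivC \exists \vx(s')$, is symmetric, which is immediate from $\equivC$ being an equivalence relation, so Proposition \ref{prop:x-variants-free-vars} can be applied in either direction.
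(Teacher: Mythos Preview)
Your proof is correct and follows essentially the same approach as the paper's own proof: a double inclusion argument where each direction invokes Proposition~\ref{prop:x-variants-free-vars} once to transfer membership in $\os Q\cs$ across an $\vx$-variant. The paper's proof is identical up to notation (it writes sequences as $d.s$ rather than $s$), and your explicit remark that the $\vx$-variant relation is symmetric makes precise what the paper leaves implicit.
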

\begin{proof}
\noindent ($\subseteq$): Let $d.s \in \Exists \vx (\os P\cs \cap \os Q\cs)$. Then, there exists an $\vx$-variant $d'.s'$ s.t. $d'.s' \in \os P \cs \cap \os Q\cs$. Then, $d.s \in \Exists \vx (\os P\cs)$ (by definition) and $d.s \in \os Q \cs$ by Proposition \ref{prop:x-variants-free-vars}.

\noindent ($\supseteq$): Let $d.s \in \Exists \vx(\os P\cs) \cap \os Q\cs$.  Then, there exists $d'.s'$ $\vx$-variant of $d.s$ s.t. $d'.s' \in \os P \cs$. By Proposition \ref{prop:x-variants-free-vars}, 
$d'.s' \in \os Q \cs$ and therefore, $d.s \in \Exists \vx (\os P \cs \cap \os Q \cs)$. 
\end{proof}

In Theorem \ref{teo:corr}, the proof of the ${\bf abs }$   case  requires the following auxiliary results (similar to those in the concrete semantics). 

   \begin{observation}[Equality and $\vx$-variants]\label{l-vx-eq-abs}
Let $s_\alpha$ and $w_\alpha$ be $\vx$-variants such that $({\dxta})^{\omega} \leq^\alpha s_\alpha$,  $({\dxta})^{\omega} \leq^\alpha w_\alpha$  and $\adm{\vx}{\vt}$. Then $s_\alpha\equivC^\alpha w_\alpha$.
\end{observation}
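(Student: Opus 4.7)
The statement is the abstract analogue of Observation~\ref{l-vx-eq}(1), and my plan is to mirror that proof verbatim, only replacing each concrete operator by its abstract counterpart. The key point is that $\mathbf{A}$ is itself a cylindric constraint system (Definition~\ref{def:description}), so all the algebraic properties of cylindrification and diagonal elements in Definition~\ref{def:cs} are available with $\sqcup^\alpha$, $\exists^\alpha$ and $\dxta$ in place of $\sqcup$, $\exists$ and $\dxt$. The argument is purely pointwise, so I will work at a single index and drop the index in the notation.

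First I would fix $i \geq 1$ and set $c_\alpha = s_\alpha(i)$, $d_\alpha = w_\alpha(i)$. The hypothesis that $s_\alpha$ and $w_\alpha$ are $\vx$-variants gives $\exists^\alpha\vx(c_\alpha) \equivC^\alpha \exists^\alpha\vx(d_\alpha)$, and the hypotheses $(\dxta)^\omega \leq^\alpha s_\alpha$ and $(\dxta)^\omega \leq^\alpha w_\alpha$ give $c_\alpha \sqcup^\alpha \dxta \equivC^\alpha c_\alpha$ and $d_\alpha \sqcup^\alpha \dxta \equivC^\alpha d_\alpha$. My goal is to deduce $c_\alpha \equivC^\alpha d_\alpha$, and it suffices to prove $c_\alpha \leq^\alpha d_\alpha$; the reverse inequality is fully symmetric.

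The core step chains three facts. Applying Property~(3) of diagonal elements in $\mathbf{A}$ to $c_\alpha$ yields
\[
c_\alpha \;\leq^\alpha\; \dxta \sqcup^\alpha \exists^\alpha\vx(c_\alpha \sqcup^\alpha \dxta).
\]
Using $c_\alpha \sqcup^\alpha \dxta \equivC^\alpha c_\alpha$ and then the $\vx$-variance equality, the right-hand side is $\equivC^\alpha$ to $\dxta \sqcup^\alpha \exists^\alpha\vx(d_\alpha)$. Finally, Property~(1) of cylindrification gives $\exists^\alpha\vx(d_\alpha) \leq^\alpha d_\alpha$, hence
\[
\dxta \sqcup^\alpha \exists^\alpha\vx(d_\alpha) \;\leq^\alpha\; \dxta \sqcup^\alpha d_\alpha \;\equivC^\alpha\; d_\alpha,
\]
using $\dxta \leq^\alpha d_\alpha$. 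Combining the three steps, $c_\alpha \leq^\alpha d_\alpha$. The admissibility hypothesis $\adm{\vx}{\vt}$ is what lets us interpret $\dxta$ as a genuine equality (so that Properties~(1)--(3) of diagonal elements apply), exactly as in the concrete proof.

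I do not expect a real obstacle here: the proof is essentially a transcription, and the only thing to be careful about is to quote axioms from Definition~\ref{def:cs} applied to $\mathbf{A}$ rather than to $\mathbf{C}$. The admissibility side-condition and the availability of diagonal elements in $\mathbf{A}$ (which follows from $\mathbf{A}$ being a cylindric constraint system and from Property~(2) of Definition~\ref{dec:corapp}, i.e.\ $\alpha(\dxt) \equivC^\alpha \dxta$) are the only ingredients one has to check are in place.
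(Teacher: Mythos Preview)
Your proposal is correct and is essentially the same argument as the paper's: both work pointwise, use that $c_\alpha \sqcup^\alpha \dxta \equivC^\alpha c_\alpha$ and $d_\alpha \sqcup^\alpha \dxta \equivC^\alpha d_\alpha$ together with the $\vx$-variance identity on the existentials, and then combine Property~(1) of cylindrification with Property~(3) of diagonal elements to conclude. The only cosmetic difference is the order of application---you invoke Property~(3) on $c_\alpha$ and Property~(1) on $d_\alpha$, while the paper does the reverse (citing Notation~\ref{not:terms} for the diagonal step)---but the content is identical.
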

\begin{proof}
Let $c_\alpha=s_\alpha(i)$ and $d_\alpha=w_\alpha(i)$ with $i\geq1$. We shall prove that $c_\alpha\absentails d_\alpha$ and $d_\alpha \absentails c_\alpha$.  We know that  
$c_\alpha \sqcup^\alpha \dxta \equivC^\alpha c_a$ and $d_\alpha \sqcup^\alpha \dxta \equivC^\alpha d_\alpha$. We also know that 
$\exists^\alpha \vx (c_\alpha\sqcup^\alpha \dxta) \equivC^\alpha \exists^\alpha \vx (d_\alpha\sqcup^\alpha \dxta)$. Since $c_\alpha \absentails \exists^\alpha \vx( c_\alpha)$, 
we can  show that 
$c_\alpha \absentails \exists^\alpha \vx(d_\alpha \sqcup^\alpha \dxta)$.
Furthermore, 
$\exists^\alpha \vx(d_\alpha \sqcup^\alpha \dxta)\sqcup^\alpha \dxta \absentails d_\alpha$ (see Notation \ref{not:terms}). 
Hence, we conclude 
$c_\alpha \absentails d_\alpha$. The proof of $d_\alpha\absentails c_\alpha$ is analogous. 
\end{proof}

\begin{proposition}\label{prop:forall-subs-abs}
$s_\alpha \in  \Forall \vx ( \os P \cs^\alpha_X)$ if and only if  $s \in \os P\sxt \cs^\alpha_X$ for all admissible substitution $\sxt$. 
\end{proposition}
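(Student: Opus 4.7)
The plan is to mirror the proof of Proposition~\ref{prop:forall-subs} in the abstract setting, replacing $\exists$, $\sqcup$, $\dxt$, and $\equivC$ by their abstract counterparts $\exists^\alpha$, $\sqcup^\alpha$, $\dxta$, and $\equivC^\alpha$, and using Observation~\ref{l-vx-eq-abs} as the abstract analogue of Observation~\ref{l-vx-eq}. As in the concrete case, the key identity I shall rely on is that applying a substitution commutes, at the semantic level, with existentially hiding $\vx$ after conjoining the equality $(\dxta)^\omega$; that is,
\[
\os P\sxt \cs^\alpha_X \;=\; \Exists \vx\bigl(\os P \cs^\alpha_X \cap \up((\dxta)^\omega)\bigr).
\]
This is the abstract counterpart of the identity implicitly used in Proposition~\ref{prop:forall-subs} and follows by structural induction on $P$ from Definition~\ref{dec:corapp} (upper correctness ensures that abstracting a substitution agrees with taking the abstract substitution).

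For the ($\Rightarrow$) direction, I would fix an admissible $\sxt$ and pick an arbitrary $\vx$-variant $s'_\alpha$ of $s_\alpha$ with $(\dxta)^\omega \leq^\alpha s'_\alpha$. By the definition of $\Forall \vx$, $s'_\alpha \in \os P\cs^\alpha_X$, hence $s'_\alpha \in \os P\cs^\alpha_X \cap \up((\dxta)^\omega)$, which gives $s_\alpha \in \Exists \vx(\os P\cs^\alpha_X \cap \up((\dxta)^\omega))$ and therefore $s_\alpha \in \os P\sxt \cs^\alpha_X$ by the identity above.

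For the ($\Leftarrow$) direction, I would argue by contradiction. Suppose $s_\alpha \notin \Forall \vx(\os P\cs^\alpha_X)$; then there is some admissible $\sxt$ and an $\vx$-variant $s'_\alpha$ of $s_\alpha$ with $(\dxta)^\omega \leq^\alpha s'_\alpha$ but $s'_\alpha \notin \os P\cs^\alpha_X$. On the other hand, from the hypothesis $s_\alpha \in \os P\sxt\cs^\alpha_X = \Exists \vx(\os P\cs^\alpha_X \cap \up((\dxta)^\omega))$, we obtain another $\vx$-variant $s''_\alpha$ of $s_\alpha$ satisfying $s''_\alpha \in \os P\cs^\alpha_X$ and $(\dxta)^\omega \leq^\alpha s''_\alpha$. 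Since $s'_\alpha$ and $s''_\alpha$ are both $\vx$-variants of $s_\alpha$ pointwise above $\dxta$, Observation~\ref{l-vx-eq-abs} yields $s'_\alpha \equivC^\alpha s''_\alpha$, whence $s'_\alpha \in \os P\cs^\alpha_X$, a contradiction.

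The main (and only nontrivial) obstacle is justifying the identity $\os P\sxt \cs^\alpha_X = \Exists \vx(\os P\cs^\alpha_X \cap \up((\dxta)^\omega))$ in the abstract semantics; everything else is a direct symbolic transcription of the concrete proof. I expect this to go through by a straightforward structural induction on $P$, exploiting the fact that the abstract cylindrification, lub, and diagonal elements inherit the algebraic properties of Definition~\ref{def:cs} (via upper correctness) and that the semantic equations in Figure~\ref{absdensems} have exactly the same shape as those of Figure~\ref{tab:densem}.
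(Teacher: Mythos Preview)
Your proposal is correct and follows essentially the same route as the paper's own proof: both directions are argued exactly as you describe, invoking Observation~\ref{l-vx-eq-abs} for the contradiction in the ($\Leftarrow$) direction. The only minor remark is that the identity $\os P\sxt \cs^\alpha_X = \Exists \vx\bigl(\os P \cs^\alpha_X \cap \up((\dxta)^\omega)\bigr)$ does not require a structural induction on $P$: since $P\sxt$ is \emph{defined} as the process $\localp{\vx}{(\bangp\tellp{\dxt} \parallel P)}$, the identity follows immediately by unfolding the semantic equations $\raLocal$, $\raPar$, $\raTell$ (together with property~(2) of Definition~\ref{dec:corapp}, which gives $\alpha(\dxt)\equivC^\alpha\dxta$) and the definition of replication; the paper accordingly uses this step without further comment.
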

\begin{proof}
($\Rightarrow$)Let $s_\alpha\in\Forall \vx ( \os P \cs^\alpha_X)$ and  $s_\alpha'$ be an $\vx$-variant of $s_\alpha$ s.t. $({\dxta})^{\omega} \leq^\alpha s_\alpha'$ where $\adm{\vx}{\vt}$. By definition of~ $\Forall$, we know that $s'_\alpha\in \os P \cs^\alpha_X$. Since $ ({\dxta})^\omega \leq^\alpha s_\alpha'$ then $s_\alpha' \in \os P \cs^\alpha_X \cap \up(({\dxta})^{\omega})$. Hence, $s_\alpha\in \Exists^\alpha \vx (\os P \cs^\alpha_X \cap \up(({\dxta})^{\omega}) )$ and we conclude $s_\alpha \in \os P\sxt\cs^\alpha_X$. 

\noindent($\Leftarrow$) Let $\sxt$ be an admissible substitution. Suppose, to obtain a contradiction, that $s_\alpha \in \os P\sxt\cs^\alpha_X$, there exists $s'_\alpha$ $\vec{x}$-variant of $s_\alpha$ s.t. $({\dxta})^{\omega} \leq^\alpha s'_\alpha$ and $s'_\alpha \notin \os P\cs^\alpha_X$ (i.e., $s_\alpha\notin \Forall\vx (\os P\cs^\alpha_X)$). Since $s_\alpha\in \os P\sxt\cs^\alpha_X$ then $s_\alpha \in \Exists^\alpha \vx (\os P \cs^\alpha_X \cap \up{ ({\dxta})^{\omega}})$.  Therefore, there exists $s''_\alpha$ $\vx$-variant of $s_\alpha$ s.t. $s''_\alpha \in \os P\cs^\alpha_X$ and ${\dxta}^{\omega} \leq^\alpha s''_\alpha$. By Observation \ref{l-vx-eq-abs},  $s'_\alpha\equivC^\alpha s''_\alpha$ and thus,  $s'_\alpha \in \os P \cs^\alpha_X$, a contradiction. 
\end{proof}

\end{document}